\begin{document}
\input amssym.def
\setcounter{equation}{0}
\newcommand{\wt}{{\rm wt}}
\newcommand{\spa}{\mbox{span}}
\newcommand{\Res}{\mbox{Res}}
\newcommand{\End}{\mbox{End}}
\newcommand{\Ind}{\mbox{Ind}}
\newcommand{\Hom}{\mbox{Hom}}
\newcommand{\Mod}{\mbox{Mod}}
\newcommand{\m}{\mbox{mod}\ }
\renewcommand{\theequation}{\thesection.\arabic{equation}}
\numberwithin{equation}{section}

\def \End{{\rm End}}
\def \Aut{{\rm Aut}}
\def \Z{\mathbb Z}
\def \H{\mathbb H}
\def \M{\Bbb M}
\def \C{\mathbb C}
\def \R{\mathbb R}
\def \Q{\mathbb Q}
\def \N{\mathbb N}
\def \ann{{\rm Ann}}
\def \<{\langle}
\def \o{\omega}
\def \O{\Omega}
\def \Or{\cal O}
\def \M{{\cal M}}
\def \1t{\frac{1}{T}}
\def \>{\rangle}
\def \t{\tau }
\def \a{\alpha }
\def \e{\epsilon }
\def \l{\lambda }
\def \L{\Lambda }
\def \g{\gamma}
\def \b{\beta }
\def \om{\omega }
\def \o{\omega }
\def \ot{\otimes}
\def \cg{\chi_g}
\def \ag{\alpha_g}
\def \ah{\alpha_h}
\def \ph{\psi_h}
\def \S{\cal S}
\def \nor{\vartriangleleft}
\def \V{V^{\natural}}
\def \voa{vertex operator algebra\ }
\def \voas{vertex operator algebras}
\def \v{vertex operator algebra\ }
\def \1{{\bf 1}}
\def \be{\begin{equation}\label}
\def \ee{\end{equation}}
\def \qed{\mbox{ $\square$}}
\def \pf {\noindent {\bf Proof:} \,}
\def \bl{\begin{lem}\label}
\def \el{\end{lem}}
\def \ba{\begin{array}}
\def \ea{\end{array}}
\def \bt{\begin{thm}\label}
\def \et{\end{thm}}
\def \br{\begin{rem}\label}
\def \er{\end{rem}}
\def \ed{\end{de}}
\def \bp{\begin{prop}\label}
\def \ep{\end{prop}}
\def \p{\phi}
\def \d{\delta}
\def \irr{\rm irr}

\newtheorem{th1}{Theorem}
\newtheorem{ree}[th1]{Remark}
\newtheorem{thm}{Theorem}[section]
\newtheorem{prop}[thm]{Proposition}
\newtheorem{coro}[thm]{Corollary}
\newtheorem{lem}[thm]{Lemma}
\newtheorem{rem}[thm]{Remark}
\newtheorem{de}[thm]{Definition}
\newtheorem{hy}[thm]{Hypothesis}
\newtheorem{conj}[thm]{Conjecture}
\newtheorem{ex}[thm]{Example}
\newtheorem{theorem}{Theorem}[section]
\newtheorem{corollary}[theorem]{Corollary}
\newtheorem{lemma}[theorem]{Lemma}
\newtheorem{proposition}[theorem]{Proposition}
\theoremstyle{definition}
\newtheorem{definition}[theorem]{Definition}
\theoremstyle{remark}
\newtheorem{remark}[theorem]{Remark}
\newtheorem{example}[theorem]{Example}
\numberwithin{equation}{section}

\begin{center}
{\Large {\bf Representations of Vertex Operator Algebras \\
\(V_{L_2}^{S_4}\), \(V_{L_2}^{A_5}\), and Their Quantum Dimensions}}\\
\vspace{0.5cm}

Li Wu\\
Chern Institute of Mathematics, \\
Key Lab of Pure Mathematics and Combinatorics
of Ministry of Education, \\
Nankai University, Tianjin 300071, People's Republic of China\\
nankai.wuli@gmail.com\\
Liuyi Zhang
\\
Department of Mathematics, \\
University of
California, Santa Cruz, \\
CA 95064 USA \\
lzhang19@ucsc.edu

\end{center}

\begin{abstract}
    \noindent \(C_2\) cofiniteness and rationality of \(V_{L_2}^{S_4}\) are obtained, and irreducible \(V_{L_2}^{S_4}\)-modules are classified. With the assumption of rationality and \(C_2\) cofiniteness, irreducible \(V_{L_2}^{A_5}\)-modules are determined.
    Also, quantum dimensions of these irreducible modules are calculated.
\end{abstract}

\section{Introduction}
\noindent Investigation of the vertex operator algebras \(V_{L_2}^{S_4}\) and \(V_{L_2}^{A_5}\) plays an important part in the classification of rational vertex operator algebras with \(c=1\). The vertex operator algebra \(V_{L_2}^{G}\) consists of \(G\) invariants of the even lattice vertex operator algebra \(V_{L_2}\). \\
\\
\noindent In the literature of physics at character level, references \cite{ginsparg1988curiosities} and \cite{kiritsis1989proof} studied classification of rational vertex operator algebras with \(c=1\).
These two references are based on two assumptions. one assumption is that the sum of the square norm of irreducible characters is invariant under the modular group. The second assumption is that each irreducible character is a modular function over a congruence subgroup.
Reference \cite{kiritsis1989proof} states that the character of a rational vertex operator algebra with \(c=1\) is the character of one of the vertex operators,
\begin{itemize}
    \item (a) lattice vertex operator algebras \(V_{L}\) associated with positive definite even lattices \(L\) of rank one,
    \item (b) orbifold vertex operator algebras \(V_{L}^{+}\) under the automorphism of \(V_L\) induced from the \(-1\) isometry of \(L\),
    \item (c)\(V_{L_2}^{G}\), where \(G\) is a finite subgroup of \(SO(3)\) isomorphic to one of \(\{A_4,S_4,A_5\}\).
\end{itemize}
\noindent Reference \cite{dong2010characterization} indicates that the list above would not be correct if the effective central charge \(\tilde{c}\), by Reference \cite{dong2004rational}, and the central charge \(c\) were different. Reference \cite{dong2004rational} characterizes the vertex operator algebra \(V_L\) for any positive definite even lattice \(L\) by \(c\), \(\tilde{c}\), and the rank of the weigh one subspace as a Lie algebra.
References \cite{dong2010characterization} \cite{dong2011characterization2011arXiv1110.1882D} \cite{dong2013characterization} \cite{zhang2009w} characterize the orbifold vertex operator algebra \(V_L^{+}\).
References \cite{Dong201376} \cite{dong2014characterization} \cite{Dong2015476} characterize the vertex operator algebra \(V_{L_2}^{A_4}\). Reference \cite{Dong201376} states the rationality and the \(C_2\)-cofiniteness of \(V_{L_2}^{A_4}\).
Since \(A_4\) is of order two in \(S_4\), the \(C_2\)cofiniteness of \(V_{L_2}^{A_4}\) implies the \(C_2\) cofiniteness of \(V_{L_2}^{S_4}\), by Reference \cite{miyamoto2009flatness}.\\
\\
Let \(V\) be a rational vertex operator algebra and \(G\) be a finite automorphism group of \(V\).
Orbifold theory conjecture indicates that \(V^G\) is rational and each irreducible \(V^G\)-module occurs in an irreducible \(g\)-twisted \(V\)-module for some \(g\in G\).
The second half of the conjecture is closed in Reference \cite{2015arXiv150703306DOrbifold}.
So, irreducible \(V_{L_2}^{S_4}\)-modules are classified. Also, the vertex operator algebra \(V_{L_2}^G\), where \(G\in \{A_4,S_4,A_5\}\),  is in the above list of rational vertex operator algebras. With the assumption of rationality and \(C_2\)-cofiniteness, irreducible modules of \(V_{L_2}^{A_5}\) can be determined. \\
\\
Reference \cite{dong2013quantum} defines and studies the quantum dimension and the globe dimension for a \(V\)-module. Also, Reference \cite{dong2013quantum} provides a quantum Galois theory \(V^G\subseteq V\), by References \cite{dong1997} \cite{hanaki1999quantum}. The results in Reference \cite{dong2013quantum} are strengthened in Reference \cite{2015arXiv150703306DOrbifold}, and are important in determining irreducible modules of \(V_{L_2}^{S_4}\) and \(V_{L_2}^{A_5}\).

\section{Preliminaries}

\noindent Let \((V,Y,\mathbf{1},\omega)\) be a vertex operator algebra and \(g\) an automorphism of \(V\) of fintie order \(T\).
Denote the decomposition of \(V\) into eigenspaces with respect to the action of \(g\) as
\begin{equation*}\label{g2.1}
V=\bigoplus_{r\in \Z/T\Z}V^r,
\end{equation*}
where $V^r=\{v\in V|gv=e^{-2\pi ir/T}v\}$.
Use $r$ to denote both
an integer between $0$ and $T-1$ and its residue class \m $T$ in this
situation.

\begin{de} \label{weak}
A {\em weak $g$-twisted $V$-module} $M$ is a vector space equipped
with a linear map
\begin{equation*}
\begin{split}
Y_M: V&\to (\End\,M)[[z^{1/T},z^{-1/T}]]\\
v&\mapsto\displaystyle{ Y_M(v,z)=\sum_{n\in\frac{1}{T}\Z}v_nz^{-n-1}\ \ \ (v_n\in
\End\,M)},
\end{split}
\end{equation*}
which satisfies the following:  for all $0\leq r\leq T-1,$ $u\in V^r$, $v\in V,$
$w\in M$,
\begin{eqnarray*}
& &Y_M(u,z)=\sum_{n\in \frac{r}{T}+\Z}u_nz^{-n-1} \label{1/2},\\
& &u_lw=0~~~
\mbox{for}~~~ l\gg 0,\label{vlw0}\\
& &Y_M({\mathbf 1},z)=Id_M,\label{vacuum}
\end{eqnarray*}
 \begin{equation*}\label{jacobi}
\begin{array}{c}
\displaystyle{z^{-1}_0\delta\left(\frac{z_1-z_2}{z_0}\right)
Y_M(u,z_1)Y_M(v,z_2)-z^{-1}_0\delta\left(\frac{z_2-z_1}{-z_0}\right)
Y_M(v,z_2)Y_M(u,z_1)}\\
\displaystyle{=z_2^{-1}\left(\frac{z_1-z_0}{z_2}\right)^{-r/T}
\delta\left(\frac{z_1-z_0}{z_2}\right)
Y_M(Y(u,z_0)v,z_2)},
\end{array}
\end{equation*}
where $\delta(z)=\sum_{n\in\Z}z^n$ and
all binomial expressions (here and below) are to be expanded in nonnegative
integral powers of the second variable.
\end{de}

\begin{de}\label{ordinary}
A $g$-{\em twisted $V$-module} is
a $\C$-graded weak $g$-twisted $V$-module $M:$
\begin{equation*}
M=\bigoplus_{\lambda \in{\C}}M_{\lambda}
\end{equation*}
where $M_{\l}=\{w\in M|L(0)w=\l w\}$ and $L(0)$ is the component operator of $Y(\omega,z)=\sum_{n\in \Z}L(n)z^{-n-2}.$ We also require that
$\dim M_{\l}$ is finite and for fixed $\l,$ $M_{\frac{n}{T}+\l}=0$
for all small enough integers $n.$\\
\\
\noindent If $w\in M_{\l}$, call $\l$ as the {\em weight} of
$w$ and write $\l=\wt w.$
\end{de}

\noindent Use $\Z_+$ to denote the set of nonnegative integers.
\begin{de}\label{admissible}
 An {\em admissible} $g$-twisted $V$-module
is a  $\frac1T{\Z}_{+}$-graded weak $g$-twisted $V$-module $M:$
\begin{equation*}
M=\bigoplus_{n\in\frac{1}{T}\Z_+}M(n)
\end{equation*}
satisfying
\begin{equation*}
v_mM(n)\subseteq M(n+\wt v-m-1)
\end{equation*}
for homogeneous $v\in V,$ $m,n\in \frac{1}{T}{\Z}.$
\ed
\noindent By Reference \cite{dong1997regularity}, if $g=Id_V$  these \(g\)-twisted notations become the notions of  weak, ordinary and admissible $V$-modules.\\
\\
\noindent If $M=\bigoplus_{n\in \frac{1}{T}\Z_+}M(n)$
is an admissible $g$-twisted $V$-module, the contragredient module $M'$
is defined as follows:
\begin{equation*}
M'=\bigoplus_{n\in \frac{1}{T}\Z_+}M(n)^{*},
\end{equation*}
where $M(n)^*=\Hom_{\C}(M(n),\C).$ The vertex operator
$Y_{M'}(a,z)$ is defined for $a\in V$ via
\begin{eqnarray*}
\langle Y_{M'}(a,z)f,u\rangle= \langle f,Y_M(e^{zL(1)}(-z^{-2})^{L(0)}a,z^{-1})u\rangle,
\end{eqnarray*}
where $\langle f,w\rangle=f(w)$ is the natural paring $M'\times M\to \C.$
It follows from References \cite{frenkel1993axiomatic}  and \cite{xu2000algebraic} that $(M',Y_{M'})$ is an admissible $g^{-1}$-twisted $V$-module.
Define the contragredient module $M'$ for a $g$-twisted $V$-module $M.$ In this case,
$M'$ is a $g^{-1}$-twisted $V$-module. Moreover, $M$ is irreducible if and only if $M'$ is irreducible.

\begin{de}
A \voa $V$ is called $g$-rational, if the  admissible $g$-twisted module category is semisimple. $V$ is called rational if $V$ is $1$-rational.
\end{de}

\noindent There is another important concept called $C_2$-cofiniteness, by Reference \cite{zhu1996modular}.
\begin{de}
A \voa $V$ is $C_2$-cofinite if $V/C_2(V)$ is finite dimensional, where $C_2(V)=\langle v_{-2}u|v,u\in V\rangle.$
\end{de}

\noindent The following results about $g$-rational \voas \  are well-known, by References \cite{dong1998twisted}, \cite{dong2000modular}.
\begin{thm}\label{grational}
If $V$ is $g$-rational, the following statements hold.
\begin{itemize}
    \item (1) Any irreducible admissible $g$-twisted $V$-module $M$ is a $g$-twisted $V$-module. Moreover, there exists a number $\l \in \mathbb{C}$ such that  $M=\oplus_{n\in \frac{1}{T}\mathbb{Z_+}}M_{\l +n}$ where $M_{\lambda}\neq 0.$ The $\l$ is called the conformal weight of $M;$
    \item (2) There are only finitely many irreducible admissible  $g$-twisted $V$-modules up to isomorphism.
    \item (3) If $V$ is also $C_2$-cofinite and $g^i$-rational for all $i\geq 0$ then the central charge $c$ and the conformal weight $\l$ of any irreducible $g$-twisted $V$-module $M$ are rational numbers.
\end{itemize}
\end{thm}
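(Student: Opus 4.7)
My plan is to treat (1) and (2) together via the twisted Zhu algebra machinery, and then handle (3) by modular invariance. For (1) and (2), I would invoke the twisted Zhu algebra $A_g(V)$ of Dong--Li--Mason. The top-level functor $M\mapsto M(0)$ gives a bijection between isomorphism classes of irreducible admissible $g$-twisted $V$-modules and isomorphism classes of simple $A_g(V)$-modules. Under $g$-rationality, $A_g(V)$ is a finite-dimensional semisimple associative algebra, so it has only finitely many simple modules up to isomorphism, which proves (2). For (1), the image $[\omega]$ of the Virasoro element in $A_g(V)$ represents $L(0)$, so on the finite-dimensional simple $A_g(V)$-module $M(0)$ Schur's lemma forces $L(0)$ to act as a scalar $\l$. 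Combined with the grading axiom $v_mM(n)\subseteq M(n+\wt v-m-1)$ specialized to $\o_1=L(0)$, this propagates to $M(n)\subseteq M_{\l+n}$ for every $n\in\frac{1}{T}\Z_+$, so the admissible grading coincides with the $L(0)$-eigenspace decomposition and $M$ is an ordinary $g$-twisted $V$-module in the sense of Definition \ref{ordinary}.

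For (3), the plan is to apply Zhu's modular invariance theorem in the $g$-twisted form of Dong--Li--Mason, which requires both $C_2$-cofiniteness and $g^i$-rationality for all $i\geq 0$. Under these hypotheses the trace functions $\tr_M q^{L(0)-c/24}$, as $M$ ranges over irreducible $g^i$-twisted modules for all $i$, span a finite-dimensional $SL(2,\Z)$-module on which the $T$-generator acts by $q^{\l-c/24}$ on the $M$-component (together with a permutation of sectors indexed by powers of $g$). The congruence subgroup property for this representation forces it to factor through $SL(2,\Z/N\Z)$ for some positive integer $N$, so every eigenvalue of $T$ is a root of unity and consequently $\l-c/24\in\Q$ for every irreducible $g$-twisted module $M$. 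Specializing to $M=V$ with $\l=0$ yields $c\in\Q$, and hence $\l\in\Q$ for every irreducible $M$.

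The main obstacle is step (3): parts (1) and (2) are essentially formal consequences of the existence and finite-dimensional semisimplicity of $A_g(V)$, whereas (3) genuinely uses modular invariance together with the congruence subgroup property. These rely on the full strength of $C_2$-cofiniteness plus simultaneous $g^i$-rationality for every $i\geq 0$; without the congruence property, modular invariance alone would pin down $\l-c/24$ only as some algebraic number of finite order, not as a rational number. I would therefore cite Zhu, Dong--Li--Mason, and the congruence subgroup results rather than reprove them.
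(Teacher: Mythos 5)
The paper does not actually prove Theorem \ref{grational}; it records it as a known result and cites \cite{dong1998twisted} and \cite{dong2000modular}, so the real question is whether your sketch faithfully reflects those sources. For (1) and (2) it essentially does: the twisted Zhu algebra $A_g(V)$, the bijection between simple $A_g(V)$-modules and irreducible admissible $g$-twisted modules, and finite-dimensional semisimplicity of $A_g(V)$ under $g$-rationality is exactly the Dong--Li--Mason route. One point in your self-contained portion of (1) is incomplete, though: Definition \ref{ordinary} requires each $L(0)$-eigenspace to be finite dimensional, and Schur's lemma plus the grading axiom only identify the admissible grading with the $L(0)$-eigenspace grading; they do not bound $\dim M(n)$, since $M(n)$ is spanned by $v_{\wt v-1-n}M(0)$ with $v$ running over infinitely many weights. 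The finiteness of the homogeneous pieces is part of what $g$-rationality buys in \cite{dong1998twisted} and must be cited (or argued) separately rather than obtained from the scalar action of $L(0)$.

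For (3) your mechanism is not the one in the cited literature, and as stated it overshoots the hypotheses. The congruence subgroup property is a much later theorem (Dong--Lin--Ng), proved for strongly rational VOAs (CFT type, self-contragredient) via Huang's modular tensor category theorem and Ng--Schauenburg, and its extension to twisted sectors needs additional care; none of these extra hypotheses appear in Theorem \ref{grational}(3), so invoking it produces both hypothesis creep and a risk of citing results logically downstream of the statement being proved. The reference \cite{dong2000modular} proves rationality of $c$ and of the conformal weights directly from the modular invariance of the (twisted) trace functions together with the integrality of their $q$-expansion coefficients, in the spirit of Anderson--Moore; modular invariance plus that integrality already forces $\l-c/24\in\Q$ without knowing the kernel is a congruence subgroup. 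So the fix is simply to route (3) through \cite{dong2000modular} itself (as the paper does) rather than through the congruence property; with that substitution, and with the finite-dimensionality point in (1) attributed to \cite{dong1998twisted}, your proposal matches the paper's (citation-level) treatment.
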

\noindent Let $V$ be a simple vertex operator algebra and $G$ a finite and
faithful group of automorphisms of $V$, and let $\mbox{Irr}\left(G\right)$
denote the set of simple characters $\chi$ of $G$. Now as $\mathbb{C}G$-module,
each homogeneous space $V_{n}$ of $V$ is of finite dimensional, and
so there is a direct sum decomposition of $V$ into graded subspaces
\[
V=\oplus_{\chi\in\mbox{Irr}G}V^{\chi},
\]
where $V^{\chi}$ is the subspace of $V$ on which $G$ acts according
to the character $\chi$. In other words, if $M_{\chi}$ is the simple
$\mathbb{C}G$-module affording $\chi$, then $V^{\chi}$ is the $M_{\chi}$-homogeneous
subspace of $V$ in the sense of group representation theory.

\begin{theorem} By \label{on quantum Galois theory} Reference \cite{dong2004rational}
 \inputencoding{latin1}, {suppose
that $V$ is a simple vertex operator algebra and that $G$ is a finite
and faithful solvable group of automorphisms of $V$. Then for $\chi\in\mbox{Irr}\left(G\right)$,
each $V^{\chi}$ is a simple module for the $G$-graded vertex operator
algebra $\mathbb{C}G\otimes V^{G}$ of the form
\[
V^{\chi}=M_{\chi}\otimes V_{\chi},
\]
where $M_{\chi}$ is the simple $\mathbb{C}G$-module affording $\chi$
and where $V_{\chi}$ is a simple $V^{G}$-module. }\inputencoding{latin9}\end{theorem}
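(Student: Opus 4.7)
The plan is to first establish the fundamental fact that $V^G$ is itself a simple vertex operator algebra when $V$ is simple and $G$ is a finite faithful automorphism group; this is a non-trivial result which I would invoke from References \cite{dong1997} and \cite{hanaki1999quantum} (already cited in the excerpt). Granted this, $\mathbb{C}G\otimes V^G$ is a sensible $G$-graded vertex operator algebra and $V$ becomes a module for it via the two commuting actions.

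Next I would carry out the $\mathbb{C}G$-isotypic decomposition of $V$. Because the actions of $\mathbb{C}G$ and of $V^G$ on $V$ commute, each isotypic component must take the form $V^\chi\cong M_\chi\otimes V_\chi$, where $V_\chi:=\Hom_{\mathbb{C}G}(M_\chi,V)$ naturally inherits a $V^G$-module structure from the $V^G$-action on $V$. Non-triviality of $V_\chi$ for every $\chi\in\Irr(G)$ would then follow from a Jacobson density / double-commutant argument: simplicity of $V$ forces the images of $\mathbb{C}G$ and of $V^G$ inside $\End V$ to be mutual commutants in the appropriate sense, so every irreducible $\mathbb{C}G$-representation must appear.

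The main obstacle is proving that each $V_\chi$ is simple as a $V^G$-module, and this is the step where the solvability hypothesis enters essentially. I would induct on the derived length of $G$. The base case is $G$ abelian: here $\Irr(G)$ consists of linear characters, $V=\bigoplus_\chi V^\chi$ becomes a grading, and simplicity of each $V_\chi=V^\chi$ over $V^G$ follows from the standard fixed-point / twisted module analysis for cyclic (hence abelian) orbifolds, again leveraging simplicity of $V$. For the inductive step I would take $N=[G,G]\triangleleft G$ with $G/N$ abelian, apply the induction hypothesis to the pair $(V,N)$ to decompose $V$ over $\mathbb{C}N\otimes V^N$, verify that the abelian quotient $G/N$ acts faithfully on the simple VOA $V^N$, apply the abelian base case to $(V^N,G/N)$, and finally compose the two resulting Galois decompositions to obtain the claimed simplicity of $V_\chi$ over $V^G$. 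The verification that the simple quotient $V^N$ genuinely receives a faithful $G/N$-action, needed to start the induction cleanly, is the most delicate auxiliary point.
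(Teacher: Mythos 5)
The paper gives no proof of this statement at all: it is quoted from the literature (the Dong--Mason quantum Galois decomposition, which the introduction attributes to \cite{dong1997} and \cite{hanaki1999quantum}), so there is no in-paper argument to measure yours against; your outline must stand on its own, and as written it has genuine gaps at exactly the points where the real work lies.

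First, the claim that simplicity of $V$ makes the images of $\mathbb{C}G$ and of $V^G$ ``mutual commutants'' in $\End V$ by a Jacobson density argument is not available off the shelf: $V^G$ does not act through a single associative algebra to which a double-commutant theorem applies, and the required density statement for the span of the modes $u_n$, $u\in V^G$, acting on the simple module $V$ is precisely the hard vertex-algebra input of \cite{dong1997}; once you grant it, you have essentially granted the theorem (including that every $\chi\in\Irr(G)$ occurs and that the $V_\chi$ are simple and inequivalent), so this step cannot be waved through. Second, the inductive step does not compose as claimed: applying the abelian case to the pair $(V^N,G/N)$ only decomposes the algebra $V^N$ itself, whereas to prove simplicity of $V_\chi$ over $V^G=(V^N)^{G/N}$ you must decompose the simple $V^N$-modules $V_\lambda$, $\lambda\in\Irr(N)$, under the fixed subalgebra, and these $V_\lambda$ are in general not $G/N$-stable; handling them requires Clifford-theoretic analysis with inertia subgroups and projective representations (the mechanism behind Theorem \ref{MT1}), not a formal composition of two Galois decompositions. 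Third, the faithfulness of $G/N$ on $V^N$, which you flag but do not settle, is genuinely needed and is not automatic before the Galois correspondence is in hand; the published arguments arrange the induction through normal subgroups of prime index and use the nonvanishing of every $V^\chi$ to control it. Finally, your base case conflates abelian with cyclic: simplicity of the eigenspaces $V^\chi$ over $V^G$ for a general abelian group is not the cyclic orbifold statement, and even in the cyclic case it is exactly the theorem of \cite{dong1997} rather than a ``standard'' fact, so as phrased the base case is close to circular unless you cite that result explicitly.
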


\begin{theorem} By reference \cite{dong2014rationality}),
\inputencoding{latin1}{\label{DH}let $V$ be a simple
vertex operator superalgebra and $G$ a finite solvable subgroup of
$Aut\left(V\right)$. Suppose that $V^{G}$ is rational. Then $V$
is $g$-rational for any $g\in G$. }\inputencoding{latin9}\end{theorem}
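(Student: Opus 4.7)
The plan is to strengthen the claim and prove by induction on $|G|$ the following: $V^H$ is rational for every subgroup $H\leq G$, and $V$ is $g$-rational for every $g\in G$. Solvability of $G$ lets me reduce each step to the cyclic prime-order case.

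For the cyclic base case $G=\langle g\rangle$ of prime order $p$: by Theorem \ref{on quantum Galois theory} and the simplicity of $V$, one has a decomposition $V=\bigoplus_{\chi\in\Irr(G)} M_\chi\otimes V_\chi$ with the $V_\chi$ pairwise non-isomorphic simple $V^G$-modules. Thus $V$ carries the structure of a $\widehat G$-graded simple-current extension of the rational (super)vertex operator algebra $V^G$. Invoking simple-current-extension theory together with the orbifold-conjecture results of \cite{2015arXiv150703306DOrbifold}, I would produce an irreducibility- and semisimplicity-preserving bijection between admissible $g^k$-twisted $V$-modules and certain admissible $V^G$-modules. Semisimplicity of the $V^G$-module category then gives $g^k$-rationality of $V$ for every $k$.

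For the inductive step with $|G|>p$: pick a normal subgroup $N\triangleleft G$ of prime index, which exists by solvability. The cyclic base case applied to the $G/N$-action on $V^N$ (whose fixed subalgebra $(V^N)^{G/N}=V^G$ is rational by hypothesis) gives rationality of $V^N$. Since $|N|<|G|$ and $N$ is solvable, the induction hypothesis applied to $(V,N)$ yields $V^K$ rational for every $K\leq N$ and $V$ is $n$-rational for every $n\in N$. To extend to an arbitrary subgroup $H\leq G$: if $H\leq N$, done; otherwise $HN=G$, so $H/(H\cap N)\cong G/N$ is cyclic of prime order, and setting $K=H\cap N$ I get $V^K$ rational by the previous item, whence the cyclic case applied to $(V^K,H/K)$ yields $V^H=(V^K)^{H/K}$ rational. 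Finally, for any $g\in G$, the cyclic subgroup $\langle g\rangle$ satisfies $V^{\langle g\rangle}$ rational by the argument just given, so applying the cyclic case to $(V,\langle g\rangle)$ gives $g$-rationality.

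The main obstacle will be the cyclic base case, in particular verifying that the simple-current-extension machinery applies cleanly to simple vertex operator \emph{super}algebras and really produces a categorical equivalence between the $g$-twisted $V$-module category and a semisimple subcategory of $V^G$-modules. This relies on the orbifold-conjecture results of \cite{2015arXiv150703306DOrbifold} to guarantee that every irreducible $g$-twisted $V$-module arises by induction from an irreducible $V^G$-module, and on extending the bosonic simple-current arguments to the super setting.
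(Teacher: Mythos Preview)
The paper does not prove this theorem; it is quoted as a black box from \cite{dong2014rationality}. So there is no in-paper argument to compare against, and I comment only on your sketch.

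Your inductive scaffold is reasonable, but there is a genuine gap in the step that handles subgroups $H\leq G$ with $H\not\leq N$. You write that ``the cyclic case applied to $(V^K,H/K)$ yields $V^H=(V^K)^{H/K}$ rational.'' But your cyclic base case proves only the implication \emph{from} rationality of the fixed-point subalgebra \emph{to} $g$-rationality of the ambient algebra. Applied to the pair $(V^K,H/K)$ it would take as hypothesis that $(V^K)^{H/K}=V^H$ is rational and conclude things about $V^K$; it cannot produce rationality of $V^H$ from rationality of $V^K$. What you actually need at this point is the opposite (orbifold) implication, i.e.\ $V^K$ rational $\Rightarrow (V^K)^{H/K}$ rational for a cyclic prime-order action, and that is precisely the hard direction that your base case does not supply. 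The same circularity infects the final sentence: to get $g$-rationality of $V$ you appeal to rationality of $V^{\langle g\rangle}$, which in turn rests on the step just criticized; moreover your base case is stated only for prime order, whereas $\langle g\rangle$ need not have prime order.

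To repair the argument you would need an independent input of the form ``rational (and $C_2$-cofinite) implies cyclic orbifold rational,'' e.g.\ Miyamoto's or Carnahan--Miyamoto's results. If you are willing to import that, then the strengthened statement ``$V^H$ is rational for every $H\leq G$'' does go through by your induction. But as written, the proposal uses the cyclic base case in the wrong direction and therefore does not close.
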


\noindent For a $V$-module $M$ with grading $M=\oplus M_{n}$,
 define the formal character as
\[
\mbox{ch}_{q}M=q^{-\frac{c}{24}}\sum\dim M_{n}q^{n}=\mbox{tr}q^{-\frac{c}{24}+L\left(0\right)}.
\]
Denote the holomorphic function $\mbox{ch}_{q}M$ by $Z_{M}\left(\tau\right)$. Here and below, $\tau$ is in the upper half plane $\mathbb H$ and $q=e^{2\pi i\tau}$. \\
\\
\noindent Let $V$ be a rational, $C_{2}$-cofinite vertex operator algebra
with central charge $c$. Zhu, in Reference \cite{zhu1996modular}, proved that the space
\[
\left\langle q_{1}^{\left|a_{1}\right|}\cdots q_{n}^{\left|a_{n}\right|}\mbox{tr}_{W}Y\left(a_{1},q_{1}\right)\cdots Y\left(a_{n},q_{n}\right)q^{L\left(0\right)-c/24}:\ W\ \mbox{irreducible }V\mbox{-modules }\right\rangle
\]
is $SL_{2}\left(\text{\ensuremath{\mathbb{Z}}}\right)$-invariant
with $a_{i}\in V_{\left|a_{i}\right|}$, where $q_{j}=q_{z_{j}}=e^{2\pi iz_{j}}$
and $\left|a_{i}\right|$ denotes the weight of $a_{j}$. The concept
of $g$-twisted modules for a finite automorphism $g$ was introduced
in Reference \cite{dong2000modular} and the modular invariance of the space
\[
\left\langle \mbox{tr}_{M}g^{n}q^{L\left(0\right)-c/24}:n\in\mathbb{Z},\ M\ g\mbox{-twisted\ modules}\right\rangle
\]
was proved there. \\
\\
\noindent Let $M^{0},\cdots,M^{d}$ be the inequivalent irreducible $V$-modules
where $M^{0}\cong V$. Define
\[
Z_{i}\left(u,v,\tau\right)=\mbox{tr}_{M^{i}}e^{2\pi i\left(v\left(0\right)\right)+\left(v,u\right)\text{/2}}q^{L\left(0\right)+u\left(0\right)+\left(u,u\right)/2-c/24}
\]
for $u,v\in V_{1}$.

\begin{theorem} By Reference \cite{miyamoto1998representation}, let $V$ be a rational, $C_{2}$-cofinite vertex
operator algebra of CFT type. Assume $u,v\in V_{1}$ such that $u,v$
span an abelian Lie subalgebra of  $V_{1}$. Let $\gamma=\left(\begin{array}{cc}
a & b\\
c & d
\end{array}\right)\in SL\left(2,\mathbb{Z}\right).$ Then $Z_{i}\left(u,v,q\right)$ converges to a holomorphic function
in the upper half plane and
\[
Z_{i}\left(u,v,\gamma\tau\right)=\sum_{j=0}^{d}\gamma_{i,j}Z_{j}\left(au+bv,cu+dv,\tau\right),
\]
where $\gamma_{i,j}\in\mathbb{C}$ are independent of the choice
of $u,v,$ $\gamma\tau=\frac{a\tau+b}{c\tau+c}$.
\end{theorem}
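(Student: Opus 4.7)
The plan is to reduce the statement to Zhu's modular invariance theorem for torus trace functions. First I would establish convergence. Since $u, v \in V_1$, the zero modes $u(0)$ and $v(0)$ are derivations commuting with $L(0)$, so they preserve each finite-dimensional weight space of $M^i$. The hypothesis that $u, v$ span an abelian Lie subalgebra gives $u(0)v = 0$, hence $[u(0), v(0)] = 0$; combined with the vanishing of $[L(0), u(0)]$ and $[L(0), v(0)]$, this makes the trace defining $Z_i(u,v,\tau)$ well-defined on each graded piece. Convergence on the upper half plane then follows from $C_2$-cofiniteness and the polynomial growth of $\dim M^i_n$ exactly as in Zhu's argument for $Z_i(0,0,\tau)$, since the extra factor $e^{2\pi i v(0)}\, q^{u(0) + (u,u)/2}$ only modifies each $L(0)$-eigenvalue by a bounded amount.

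For the transformation law I would reinterpret the $u,v$-insertions so that Zhu's machinery applies directly. The Heisenberg commutator $[u(m),v(n)] = m\delta_{m+n,0}(u,v)$, available in an abelian subalgebra of $V_1$, lets one encode the factor $q^{u(0) + (u,u)/2}$ as a spectral flow shift and $e^{2\pi i v(0)}$ as a Li-style $\Delta$-operator insertion of a weight-one vertex operator. Together they realize $Z_i(u,v,\tau)$ as a specialization of a standard torus $n$-point function in Zhu's framework, which transforms under $\mathrm{SL}(2,\mathbb{Z})$ via a matrix $\gamma_{i,j}$ that depends only on the modules $M^i$ and not on the insertions. It then remains to identify the transformation rule for the parameters: geometrically, $v$ is holonomy data along the $1$-cycle of the torus $\mathbb{C}/(\mathbb{Z} + \tau\mathbb{Z})$ and $u$ along the $\tau$-cycle, and under $\gamma = \bigl(\begin{smallmatrix} a & b \\ c & d \end{smallmatrix}\bigr)$ these cycles transform in the standard way, so the holonomy data recombines as $(u,v) \mapsto (au+bv,\, cu+dv)$.

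The main obstacle, in my view, is the algebraic bookkeeping in the identification of $Z_i(u,v,\tau)$ as a Zhu-type trace function. The quadratic term $(u,u)/2$ in the exponent arises from the central term of the Heisenberg commutation relations and must be inserted consistently; keeping track of this term and of the resulting shift of the Virasoro zero-mode requires care to ensure that the resulting object is still a trace on a module for the original $V$ rather than for a deformed vertex operator algebra. Once this accounting is done, the remainder of the proof is Zhu's recursion for trace functions combined with the modular properties of Eisenstein series, essentially verbatim from Reference \cite{zhu1996modular}.
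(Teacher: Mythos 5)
Note first that the paper does not prove this statement at all: it is quoted verbatim as a known theorem of Miyamoto \cite{miyamoto1998representation}, with the convergence assertion attributed separately to \cite{dong2005elliptic}. So your proposal has to be measured against those references rather than against any argument in the text, and against that standard it is a plausible outline of the known strategy but has two concrete gaps.

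The first gap is your convergence step. You assert that the factor $e^{2\pi i v(0)}q^{u(0)+(u,u)/2}$ ``only modifies each $L(0)$-eigenvalue by a bounded amount.'' This is false precisely in the examples this paper cares about: for a lattice vertex operator algebra with $u$ a multiple of $\alpha(-1)\mathbf{1}$, the eigenvalues of $u(0)$ on $V_{L+\lambda}$ grow like the square root of the weight, so the shift of the exponent is unbounded over the grading. Convergence does still hold, but one must either pass to the shifted conformal vector $\omega+u(-2)\mathbf{1}$, so that $L(0)+u(0)+(u,u)/2$ becomes an honest grading operator for a vertex operator algebra with the same module category, or invoke the elliptic-function estimates of \cite{dong2005elliptic}; your ``exactly as in Zhu'' reduction does not go through as written. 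The second gap is the step where you ``realize $Z_i(u,v,\tau)$ as a specialization of a standard torus $n$-point function.'' Since $e^{2\pi i v(0)}$ is not the zero mode $o(w)$ of any single state $w\in V$, the honest version of this step expands the exponential into the generating series of $n$-point functions with $n$ insertions of $v$ (or works in the shifted algebra), and then one must prove that this series converges and transforms termwise under $SL(2,\mathbb{Z})$ with coefficients $\gamma_{i,j}$ depending only on the modules and recombining the parameters as $(u,v)\mapsto(au+bv,cu+dv)$. That is the actual content of Miyamoto's theorem and of \cite{dong2005elliptic}; your sketch assumes it rather than derives it, and the holonomy picture you give for the parameter transformation, while consistent with the statement, is a heuristic rather than a proof. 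The remaining algebraic observations ($u(0)v=0$ implies $[u(0),v(0)]=0$, the Heisenberg commutator for $u,v\in V_1$ in a CFT-type algebra) are correct.
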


\noindent The convergence of these functions has been established in Reference \cite{dong2005elliptic}.


\noindent The quantum dimensions of modules for vertex operator algebras are defined and their properties are discussed in Reference \cite{dong2013quantum}.
It is well known that $\mbox{qdim}_V M$ can be defined as the limit of $\frac{Ch_q M}{Ch_q V}$ as $q$ goes to 1 from the left.
The advantage of this definition is that one can use the modular transformation property of the $q$-characters, by Reference \cite{zhu1996modular}, and Verlinde formula, by References \cite{huang2008vertex} \cite{verlinde1988fusion}, to compute quantum dimensions and investigate their properties. The quantum dimensions for  rational and $C_2$-cofinite vertex operator algebras have some nice properties which enable us to determine fusion rules when the quantum dimensions can be calculated. The following are some properties of quantum dimensions, by Reference \cite{dong2013quantum}.

\begin{definition} \label{quantum dimension}Let $V$ be a vertex operator algebra and $M$
a $g$-twisted $V$-module such that $Z_{V}\left(\tau\right)$ and $Z_{M}\left(\tau\right)$
exists. The quantum dimension of $M$ over $V$ is defined as
\[
\mbox{qdim}_{V}M=\lim_{y\to0}\frac{Z_{M}\left(iy\right)}{Z_{V}\left(iy\right)},
\]
where $y$ is real and positive. \end{definition}

\noindent By Reference \cite{dong2000modular}, there is a natural action of Aut(\(V\)) on twisted modules. Let \(g,h\) be two automorphisms of \(V\) with \(g\) of finite order. If \(M, Y_g\) is a weak \(g\)-twisted \(V\)-module, there is a weak \(hgh^{-1}\) twisted \(V\) modules \((M\circ h), Y_{hgh^{-1}}\) where \(M\circ h \cong M\) as vector spaces and
\[
    Y_{hgh^{-1}}(v,z)=Y_g(h^{-1}v,z)
\]
for \(v\in V\). this defines a left action of Aut(\(V\)) on weak twisted \(V\)-modules and on isomorphism classes of weak twisted \(V\)-modules and on isomorphism classes of weak twisted \(V\)-modules. Symbolically, write
\[
    h\circ (M,Y_g)=(M\circ h, Y_{hgh^{-1}})=h\circ M.
\]
Sometimes abuse notation slightly by identifying \(M, Y_g\) with the isomorphism class that it defines. \\
\\
\noindent If \(g,h\) commute, obviously \(h\) acts on the \(g\)-twisted modules as above. Set \(\mathscr{M}(g)\) to be the equivalence classes of irreducible \(g\)-twisted \(V\)-modules and \(\mathscr{M}(g,h)=\{M\in \mathscr{M}(g)|h\circ M\cong M\}\). Then, for any \(M\in \mathscr{M}(g,h)\), there is a \(g\)-twisted \(V\)-module isomorphism
\[
    \varphi(h): h\circ M \rightarrow M.
\]
The linear map \(\varphi (h)\) is unique up to a nonzero scalar.

\begin{definition}
    By Reference \cite{dong2013quantum}, define the \emph{global dimension} of \(V\) as
    \[
        \mathrm{glob}(V)=\sum_{i=0}^{d}(\mathrm{qdim}_VM^{i})^2.
    \]
\end{definition}

\section{Basic results of irreducible modules}
In the rest of the paper, assume the following if not specified.
\begin{itemize}
    \item (V1) $V=\oplus_{n\geq 0}V_n$ is a simple vertex operator algebra of CFT type,
    \item (V2) $G$ is a finite automorphism group of $V$ and $V^G$ is a a vertex operator algebra of CFT type,
    \item (V3) $V^G$ is $C_{2}$-cofinite and  rational,
    \item (V4) The conformal weight of any irreducible $V^G$-module $N$  is nonnegative and is zero if and only if $N=V^G.$
\end{itemize}
\noindent Let \(V\) be a vertex operator algebra, \((W,Y)\)an irreducible \(V\)-module, and \(g\) an automorphism of \(V\).
    Definite a linear map
    \[
        Y^\sigma:V\rightarrow (\mathrm{End}\ W)[[z, z^{-1}]]
    \]
    by
    \[
        Y^\sigma(u,z)w=Y(\sigma^{-1}(u),z)w,
    \]
    where \(u \in V\), and \(w\in W\).
Reference \cite{dong1998twisted} shows that \((W,Y^\sigma)\) is an irreducible \(V\)-module. Denote \((W,Y^\sigma)\) by \(W^\sigma \).
\begin{definition}
    \label{defStableModule}
    By Reference \cite{dong1998twisted}, a \(V\)-module \(W\) is \(g\) \emph{stable} if \(W\cong W^\sigma\).
\end{definition}

\begin{theorem}\label{minvariance}
By Reference \cite{dong1998twisted}, the cardinalities $|\mathscr{M}(g,h)|$ and $|\mathscr{M}(g^ah^c,g^bh^d)|$ are equal for any $(g,h)\in P(V)$ and
$\gamma\in \Gamma.$ In particular, the number of irreducible $g$-twisted $V$-module is exactly the number of irreducible $V$-modules which are  $g$-stable.
\end{theorem}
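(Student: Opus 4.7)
The plan is to invoke the modular invariance of trace functions for twisted modules, as developed in \cite{dong2000modular}. For each commuting pair $(g,h)\in P(V)$ of finite order automorphisms and each $M\in \mathscr{M}(g,h)$, fix once and for all a $g$-twisted $V$-module isomorphism $\varphi(h):h\circ M\to M$ (unique up to scalar), and form the trace function
\[
Z_M(g,h,\tau)=\mathrm{tr}_M\,\varphi(h)\,q^{L(0)-c/24}.
\]
Since $V^G$ is rational and $C_2$-cofinite (V3), Theorem \ref{DH} gives that $V$ is $g$-rational for every $g\in G$, and $C_2$-cofiniteness of $V$ follows from that of $V^G$. Thus the hypotheses of the Dong--Li--Mason modular invariance theorem hold.

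Next I would establish that the functions $\{Z_M(g,h,\tau):M\in\mathscr{M}(g,h)\}$ are linearly independent holomorphic functions on $\mathbb{H}$. This follows because distinct irreducible $g$-twisted modules have distinct graded characters, together with the fact that the $\varphi(h)$-twisted traces can be distinguished via the Jordan decomposition of $\varphi(h)$ acting on the lowest weight space (or, more robustly, by the one-point functions with a suitable vertex operator insertion, whose $q$-expansions encode enough data to separate non-isomorphic modules).

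The core step is then the modular transformation law: for $\gamma=\left(\begin{smallmatrix}a&b\\c&d\end{smallmatrix}\right)\in\Gamma$,
\[
Z_M(g,h,\gamma\tau)=\sum_{N\in\mathscr{M}(g^ah^c,g^bh^d)}\sigma_\gamma(M,N)\,Z_N(g^ah^c,g^bh^d,\tau),
\]
which is exactly the content of \cite{dong2000modular}. The left-hand side ranges over $|\mathscr{M}(g,h)|$ linearly independent functions, while the right-hand side lives in the span of $|\mathscr{M}(g^ah^c,g^bh^d)|$ functions. Hence $|\mathscr{M}(g,h)|\leq|\mathscr{M}(g^ah^c,g^bh^d)|$, and applying the same argument to $\gamma^{-1}$ gives the reverse inequality, so equality. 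For the ``in particular'' clause I would specialize to $h=1$ (so $\mathscr{M}(g,1)=\mathscr{M}(g)$, since $1\circ M\cong M$ automatically) and take $\gamma=S=\left(\begin{smallmatrix}0&-1\\1&0\end{smallmatrix}\right)$, yielding $|\mathscr{M}(g)|=|\mathscr{M}(1,g)|$, which is the cardinality of the set of $g$-stable irreducible untwisted $V$-modules by definition of $\mathscr{M}(1,g)$.

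The main obstacle is the normalization of the intertwiners $\varphi(h)$ and the resulting well-definedness of both the trace functions and the transformation matrix $\sigma_\gamma$; this subtlety (a choice of projective versus genuine action of the centralizer) is precisely what is treated carefully in \cite{dong2000modular}, and for the counting statement it is harmless since only the cardinalities of index sets matter, not the scalars.
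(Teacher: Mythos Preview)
The paper does not actually supply a proof of this theorem: it is stated as a quotation of a result from the literature (attributed to \cite{dong1998twisted}, though the modular-invariance content is really that of \cite{dong2000modular}), and is then used as a black box throughout. So there is no ``paper's own proof'' to compare against.

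That said, your sketch is essentially the Dong--Li--Mason argument itself, and it is correct in outline. Two small comments. First, the assertion that ``distinct irreducible $g$-twisted modules have distinct graded characters'' is not true in general, and you rightly retreat to the one-point functions $T_M(v,g,h,\tau)$; in the original proof it is precisely the linear independence of these vector-valued functions (as $v$ varies over $V$) that is established and used, so you should phrase the argument that way from the start rather than as a fallback. Second, your invocation of Theorem~\ref{DH} to deduce $g$-rationality of $V$ from rationality of $V^G$ is a convenience afforded by the standing hypotheses (V1)--(V4) of the present paper, but the theorem as stated in \cite{dong2000modular} assumes $g$-rationality and $C_2$-cofiniteness of $V$ directly; either route is fine here.
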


\noindent The next two lemmas are from Reference \cite {dong1997}, and provide a practical way to construct irreducible \(V^G\) modules.

\begin{lemma}
    \label{constructionLemma1}
    Let \(V\) be a vertex operator algebra with an automorphism \(g\) of order \(T\).
    Let \(M=\sum_{n \in \frac{1}{T}\mathbb{Z}_+}M(n)\) be an irreducible \(g\)-twisted admissible \(V\)-Module.
    Then \(M^i=\bigoplus_{n\in \frac{i}{T}+\mathbb{Z}}M(n)\) is an irreducible \(V^{\langle g \rangle}\)-module for \(i=0,\ldots,T-1\).
\end{lemma}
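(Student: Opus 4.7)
The plan is to verify first that each $M^{i}$ is closed under the action of $V^{\langle g\rangle}=V^{0}$, and then to deduce irreducibility by a generation-and-intersection argument inside $M$. For the closure step, the twisted Jacobi identity in Definition~\ref{weak} implies that if $v\in V^{r}$ is homogeneous of (integer) weight $k$, then its modes $v_{n}$ with $n\in r/T+\mathbb{Z}$ shift the $\frac{1}{T}\mathbb{Z}_{\geq 0}$-grading of $M$ by $k-n-1\in -r/T+\mathbb{Z}$. Hence $v_{n}$ sends $M^{j}$ into $M^{j-r\bmod T}$, and taking $r=0$ shows that $V^{0}$ preserves each $M^{i}$, making it an admissible $V^{\langle g\rangle}$-module under the induced grading $M^{i}(k)=M(k+i/T)$.

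For irreducibility, let $0\neq N\subseteq M^{i}$ be any nonzero graded $V^{\langle g\rangle}$-submodule; the goal is to conclude $N=M^{i}$. Let $\widetilde{N}$ denote the $V$-submodule of $M$ generated by $N$. The grading inherited from $M$ makes $\widetilde{N}$ a nonzero admissible $g$-twisted $V$-submodule, so by the irreducibility of $M$ as such, $\widetilde{N}=M$, and therefore $M^{i}=\widetilde{N}\cap M^{i}$. The whole argument now reduces to establishing the reverse inclusion $\widetilde{N}\cap M^{i}\subseteq N$.

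To do this, I will write each element of $\widetilde{N}$ as a finite sum of iterated mode actions $v^{(1)}_{p_{1}}\cdots v^{(\ell)}_{p_{\ell}}x$ with $v^{(j)}\in V^{r_{j}}$ and $x\in N$, and then apply twisted associativity (a consequence of the Jacobi identity) to rewrite each such iterated action as a sum of single mode actions $u_{m}x$ with $u\in V^{r_{1}+\cdots+r_{\ell}}$. The grading constraint of landing in $M^{i}$ then forces $r_{1}+\cdots+r_{\ell}\equiv 0\pmod{T}$ in every surviving term, so $u\in V^{0}$; since $N$ is already closed under $V^{0}$, the resulting vector lies in $N$. This yields $\widetilde{N}\cap M^{i}\subseteq N$ and hence $N=M^{i}$.

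The hard part will be carrying out the collapsing step via twisted associativity in a clean, bookkeeping-free way, since it involves formal series with fractional powers of the formal variables. A secondary issue, not explicitly addressed by the statement, is the nonvanishing of $M^{i}$: one needs every residue class $i/T+\mathbb{Z}$ to contribute nontrivially to the grading of $M$, and in general establishing this requires an auxiliary argument using the assumption that $g$ has order exactly $T$ on $V$ together with the $L(0)$-spectrum of the irreducible module $M$.
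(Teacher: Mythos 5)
The paper offers no proof of this lemma: it is quoted verbatim from the cited reference \cite{dong1997}, so there is nothing internal to compare against. Your outline is essentially the standard argument from that literature and is sound: the closure computation is correct, and the collapsing step is exactly the well-known lemma (valid for twisted modules) that $\operatorname{span}\{u_m x : u\in V,\ m\in\frac1T\mathbb{Z}\}$ is already a weak twisted submodule, which moreover respects the eigenspace grading because $g(u_iv)=(gu)_i(gv)$; with that in hand, taking a single nonzero $x\in N$ gives $M=\operatorname{span}\{u_mx\}$ and hence $M^i=\operatorname{span}\{u_mx: u\in V^0\}\subseteq N$, just as you argue. Two caveats you correctly sense but should make explicit: the nonvanishing of every $M^i$ is not a formality and in the sources it rests on $V$ being \emph{simple} (so that each eigenspace $V^r\neq 0$ and $Y_M(v,z)\neq 0$ for $v\neq 0$), an assumption absent from the statement as transcribed here but satisfied in all of the paper's applications; and your restriction to graded submodules $N$ is harmless for ordinary modules (where submodules are automatically $L(0)$-graded) but should be stated as the convention if one works with admissible modules.
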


\begin{lemma}
    \label{constructionLemma2}
    Let \(V\) be a simple vertex operator algebra, \(g\) an automorphism of \(V\) of prime order \(p\), and \(M\) an irreducible \(V\)-module such that \(g\circ M\) is not isomorphic to \(M\) as \(V\)-modules.
    Then, \(M\) is an irreducible \(V^{\langle g \rangle}\)-module.
\end{lemma}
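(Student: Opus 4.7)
The plan is to prove the contrapositive: assuming $M$ has a proper nonzero $V^{\langle g\rangle}$-submodule $U$, construct a $V$-module isomorphism $\phi:M\to g\circ M$, contradicting the hypothesis. The main tool is Theorem \ref{on quantum Galois theory} applied to the cyclic (hence solvable) group $\langle g\rangle$. Writing $\zeta=e^{2\pi i/p}$ and $\chi_i(g)=\zeta^i$ for the irreducible characters of $\langle g\rangle$, this yields the $\mathbb{Z}/p\mathbb{Z}$-graded decomposition
\[
V=\bigoplus_{i=0}^{p-1}V^{\chi_i},\qquad V^{\chi_i}=\{v\in V:gv=\zeta^iv\},
\]
in which each $V^{\chi_i}$ is a simple $V^{\langle g\rangle}$-module, and the vertex operator product respects the grading: $u_nv\in V^{\chi_{i+j}}$ for $u\in V^{\chi_i}$, $v\in V^{\chi_j}$, $n\in\mathbb{Z}$.

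Given $U\subsetneq M$, set $M_i:=V^{\chi_i}\cdot U=\mathrm{span}\{v_nw:v\in V^{\chi_i},n\in\mathbb{Z},w\in U\}$. Each $M_i$ is a $V^{\langle g\rangle}$-submodule of $M$ (because $V^{\langle g\rangle}\cdot V^{\chi_i}\subseteq V^{\chi_i}$), with $M_0=U$. Irreducibility of $M$ as a $V$-module forces $M=V\cdot U=\sum_{i=0}^{p-1}M_i$, and invoking the Jacobi identity to move vertex operators past one another shows $V^{\chi_j}\cdot M_i\subseteq M_{i+j}$.

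The main obstacle is to promote this into a genuine $\mathbb{Z}/p\mathbb{Z}$-graded decomposition $M=\bigoplus_{i=0}^{p-1}M_i$ with every summand nonzero. Primality of $p$ is essential here: the support $S=\{i:M_i\neq 0\}$ is shown to be an additive submonoid of $\mathbb{Z}/p\mathbb{Z}$, hence a subgroup, using the grading together with the simplicity of $V$, which precludes the $V^{\chi_i}$ from acting trivially on a nontrivial $V^{\langle g\rangle}$-module. Since $0\in S$, and $S=\{0\}$ would force $M=U$, contradicting properness, $S$ must be all of $\mathbb{Z}/p\mathbb{Z}$. Directness of the sum is then established by verifying that the $M_i$ are pairwise non-isomorphic as $V^{\langle g\rangle}$-modules, a manifestation of the simple-current behavior of the $V^{\chi_i}$ inside the quantum Galois framework.

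Once the grading is in hand, define $\phi:M\to M$ by $\phi|_{M_i}=\zeta^{-i}\cdot\mathrm{id}$. For $v\in V^{\chi_j}$ and $w\in M_i$, the grading yields $Y_M(v,z)w\in M_{i+j}((z))$, whence
\[
\phi Y_M(v,z)w=\zeta^{-(i+j)}Y_M(v,z)w=\zeta^{-j}Y_M(v,z)\phi w=Y_M(g^{-1}v,z)\phi w=Y_{g\circ M}(v,z)\phi w.
\]
Thus $\phi$ is a nonzero $V$-module homomorphism $M\to g\circ M$, and by Schur's lemma applied to the irreducible $V$-modules $M$ and $g\circ M$, $\phi$ is an isomorphism, yielding the desired contradiction.
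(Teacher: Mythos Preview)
The paper does not supply its own proof of this lemma; it merely cites Reference~\cite{dong1997} (Dong--Mason). So there is no in-paper argument to compare against, and I evaluate your proposal on its own terms.

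Your overall strategy is the right one and is in the spirit of the literature: assuming a proper nonzero $V^{\langle g\rangle}$-submodule $U\subsetneq M$, manufacture a $\mathbb{Z}/p\mathbb{Z}$-grading on $M$ compatible with the eigenspace decomposition of $V$, and read off from it a $V$-module isomorphism $M\to g\circ M$. The final computation with $\phi$ is correct \emph{once} the graded direct-sum decomposition is in hand.

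The gap sits exactly where you flag the ``main obstacle'': the passage from the sum $M=\sum_i M_i$ to a genuine direct sum. Your justification does not go through. First, the claim that simplicity of $V$ forces each $V^{\chi_i}$ to act nontrivially on every nonzero $V^{\langle g\rangle}$-submodule is not established: the annihilator in $V$ of a \emph{proper} subspace of a $V$-module is not an ideal in general, so simplicity of $V$ alone does not yield the nonvanishing you need for the submonoid argument. Second, and more seriously, ``pairwise non-isomorphic'' does not by itself imply directness of a sum of submodules; for that you would need the $M_i$ to be simple (or at least to have disjoint composition series), which you have not shown. Appealing to ``simple-current behavior of the $V^{\chi_i}$'' presupposes that the $V^{\chi_i}$ are invertible objects in a well-behaved tensor category of $V^{\langle g\rangle}$-modules---a statement that typically requires rationality and $C_2$-cofiniteness of $V^{\langle g\rangle}$, none of which is among the hypotheses here. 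Without directness the map $\phi$ is not even well-defined, and the argument collapses. The proof in Dong--Mason fills this gap with substantially more work than your sketch indicates.
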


\noindent The next two theorems are from Reference \cite{2015arXiv150703306DOrbifold}, and are the key theorems in determining irreducible modules of a rational, \(C_2\) cofinite vertex operator algebra.
\begin{theorem}
\label{MT1}
Let $g,h\in G$, $M$ an irreducible $g$-twisted $V$-module, $N$ an irreducible $h$-twisted $V$-module. Also assume that $M,N$ are not in the same orbit of $\mathscr {S}$ under the action of $G.$ Then
\\
\noindent 1) Each $M_{\lambda}$ for $\lambda \in \Lambda_{G_M,\alpha_M}$ is an irreducible $V^G$-module.
\\
\noindent 2) For any $\lambda\in \Lambda_{G_M,\alpha_M}$ and $\mu \in \Lambda_{G_N,\alpha_N},$ the irreducible $V^G$-modules
$M_{\lambda}$ and $N_{\mu}$ are inequivalent.
\end{theorem}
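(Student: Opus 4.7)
The plan is to follow the framework of \cite{2015arXiv150703306DOrbifold}, extracting irreducible $V^G$-modules from the isotypic decomposition of a single irreducible twisted $V$-module under the projective action of its stabilizer. First I would fix, for each $a \in G_M$, a $V$-module isomorphism $\varphi_M(a) \colon a \circ M \to M$, and observe that these assemble into a projective representation of $G_M$ on $M$ whose obstruction is precisely the $2$-cocycle $\alpha_M$; since each $\varphi_M(a)$ is a morphism of $g$-twisted $V$-modules, its restriction commutes with the action of $V^{G_M}$ on $M$.

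The key technical step is to establish a Schur--Weyl type decomposition
\[
M \;=\; \bigoplus_{\lambda \in \Lambda_{G_M,\alpha_M}} W_\lambda \otimes M_\lambda,
\]
where $W_\lambda$ is the irreducible $\alpha_M$-projective $G_M$-module of class $\lambda$ and $M_\lambda$ is the corresponding multiplicity space, carrying a natural $V^{G_M}$-action. I would prove this by a double centralizer argument: the irreducibility of $M$ as a $g$-twisted $V$-module combined with the semisimplicity of the twisted group algebra $\C^{\alpha_M}[G_M]$ forces the images of $V^{G_M}$ and $\C^{\alpha_M}[G_M]$ in $\End(M)$ to be mutual commutants, whence each $M_\lambda$ is simple as a $V^{G_M}$-module.

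To promote $M_\lambda$ to an irreducible $V^G$-module, pick coset representatives $1 = g_1,\ldots,g_s$ of $G/G_M$; by definition of $G_M$, the modules $g_i \circ M$ are pairwise non-isomorphic irreducible twisted $V$-modules, and the ambient $V^G$-module structure on $M_\lambda$ arises by summing the corresponding isotypic pieces across this orbit and using the induced $G$-permutation action. Simplicity over $V^G$ then follows by Frobenius reciprocity from the simplicity of $M_\lambda$ over $V^{G_M}$, because no nontrivial $V^G$-submodule can be compatible with the transitive permutation of the $s$ orbit summands. For part (2), an isomorphism $M_\lambda \cong N_\mu$ of $V^G$-modules would, upon restriction to $V^{G_M \cap G_N}$ and tracking the underlying $g$- versus $h$-twisted constituents, force $N$ to appear as a $V$-component of some $g_i \circ M$, directly contradicting the hypothesis that $M$ and $N$ lie in distinct $G$-orbits of $\mathscr{S}$.

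The main obstacle will be rigorously justifying the double centralizer step in the twisted vertex-algebra setting, since one must rule out that $V^{G_M}$ acts through a proper subalgebra of the centralizer of $\C^{\alpha_M}[G_M]$ in $\End(M)$. Here Hypothesis (V3) is essential: the $C_2$-cofiniteness and rationality of $V^G$, together with the flatness result of \cite{miyamoto2009flatness} applied to the intermediate inclusion $V^G \subseteq V^{G_M}$, permit invoking the quantum Galois theory of \cite{dong2004rational} in the refined twisted form established in \cite{2015arXiv150703306DOrbifold}, which supplies exactly the density needed to close the argument.
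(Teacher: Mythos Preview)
The paper does not supply a proof of this theorem at all: it is quoted verbatim as a result ``from Reference \cite{2015arXiv150703306DOrbifold}'' (Dong--Ren--Xu), with no argument given. There is therefore nothing in the paper to compare your proposal against; you are effectively reconstructing the proof from the cited source rather than from this paper.

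That said, your outline is broadly the Dong--Ren--Xu strategy, but one step is muddled. You write that to ``promote $M_\lambda$ to an irreducible $V^G$-module'' you would sum isotypic pieces over the orbit $\{g_i\circ M\}$ and invoke a permutation action. This conflates two different objects. In the statement, $M_\lambda$ is already the multiplicity space inside the single module $M$, and it is already a $V^G$-module because $V^G\subset V^{G_M}$; nothing needs to be ``promoted.'' The correct role of the orbit sum $\bigoplus_i g_i\circ M$ is that $G$ (not just $G_M$) acts projectively on \emph{this} larger object, and one applies the Schur--Weyl/quantum Galois argument there to conclude that the resulting multiplicity spaces are irreducible over $V^G$; one then identifies those multiplicity spaces with the $M_\lambda$'s coming from any single orbit representative. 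Your Frobenius-reciprocity phrasing obscures this and, as written, does not actually exclude a proper $V^G$-submodule of $M_\lambda$. Similarly, for part~(2) the inequivalence is read off directly from the global Schur--Weyl decomposition of the disjoint orbit sums, not from a restriction-to-$V^{G_M\cap G_N}$ argument as you suggest.
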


\begin{theorem}\label{MT2} Any irreducible $V^G$-module is isomorphic to an irreducible $V^G$-submodule $M_{\lambda}$ for some
irreducible $g$-twisted $V$-module $M$ and some $\lambda\in \Lambda_{G_M,\alpha_M}$.
\end{theorem}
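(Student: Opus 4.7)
The plan is to establish Theorem \ref{MT2} through a global-dimension counting argument. By Theorem \ref{MT1}, the family of modules $\{M_\lambda\}$, indexed by $G$-orbit representatives $[g,M]$ of irreducible twisted pairs together with $\lambda\in\Lambda_{G_M,\alpha_M}$, already consists of pairwise inequivalent irreducible $V^G$-modules. Since there are only finitely many irreducible $V^G$-modules (Theorem \ref{grational}) and each squared quantum dimension is strictly positive, it suffices to verify the identity
\[
\sum_{[g,M]}\sum_{\lambda\in\Lambda_{G_M,\alpha_M}}(\qdim_{V^G} M_\lambda)^2 \;=\; \glob(V^G),
\]
which then forces the constructed family to exhaust the full list of irreducible $V^G$-modules.

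I would handle the left-hand side in two reduction steps. First, fix an orbit representative $(g,M)$. A twisted generalization of the quantum Galois theorem (Theorem \ref{on quantum Galois theory}) decomposes $M$, as a module over the projectively $G_M$-graded algebra $\mathbb{C}^{\alpha_M}[G_M]\otimes V^{G_M}$, into isotypic pieces $W_\lambda\otimes M_\lambda$ indexed by $\lambda\in\Lambda_{G_M,\alpha_M}$. Comparing $\qdim_{V^G}$ on both sides of this decomposition and applying the projective Schur orthogonality relation for $G_M$ yields a clean formula of the shape
\[
\sum_{\lambda\in\Lambda_{G_M,\alpha_M}}(\qdim_{V^G} M_\lambda)^2 \;=\; \frac{1}{|G_M|}(\qdim_{V^G} M)^2.
\]
Second, converting the orbit sum into a sum over all twisted pairs via orbit-stabilizer produces
\[
\sum_{[g,M]}\sum_\lambda (\qdim_{V^G} M_\lambda)^2 \;=\; \frac{1}{|G|}\sum_{g\in G}\sum_{M\in\mathscr{M}(g)}(\qdim_{V^G} M)^2.
\]

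The main obstacle is the final step, namely identifying this last double sum with $\glob(V^G)$. This orbifold global-dimension identity is the technical core of the argument, and I would invoke it directly from Reference \cite{2015arXiv150703306DOrbifold}. Its proof combines the modular invariance of twisted trace functions (Theorem \ref{minvariance}) with a Verlinde-type computation of the $S$-matrix of $V^G$, expressing it in terms of the twisted $S$-matrices of $V$ weighted by the projective cocycles $\alpha_M$. This requires the full strength of the modular tensor category structure on $V^G$-modules and is precisely what makes the second half of the orbifold conjecture deep; once it is in hand, the theorem follows from the quantum-dimension bookkeeping sketched above.
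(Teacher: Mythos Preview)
The paper does not supply its own proof of Theorem~\ref{MT2}: it is quoted as a result of Reference~\cite{2015arXiv150703306DOrbifold}, so there is no internal argument to compare against. Your outline is in fact the strategy of that reference. The two reduction steps you record are correct: from the twisted Galois decomposition $M\cong\bigoplus_\lambda W_\lambda\otimes M_\lambda$ together with $\qdim_{V^G}M_\lambda=[G:G_M]\,(\dim W_\lambda)\,\qdim_V M$ and $\sum_\lambda(\dim W_\lambda)^2=|G_M|$ one obtains your inner identity, and orbit--stabilizer then gives the passage to the double sum over all pairs $(g,M)$.

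One caution about the final step. The identity you propose to cite from \cite{2015arXiv150703306DOrbifold} as the ``orbifold global-dimension identity'' is, in that paper, not a lemma established prior to Theorem~\ref{MT2}; it is essentially equivalent to it, and Theorem~\ref{Qd2} is recorded there as a \emph{consequence} of Theorem~\ref{MT2} rather than an input. What is available independently is (i) Theorem~\ref{Qd1} together with the twisted-sector identity $\sum_{M\in\mathscr{M}(g)}(\qdim_V M)^2=\glob(V)$ for every $g\in G$, which follows from modular invariance of the twisted characters and the Verlinde $S$-matrix without any knowledge of the irreducible $V^G$-modules, and (ii) the a~priori bound $\glob(V^G)\le|G|^2\glob(V)$, obtained in \cite{2015arXiv150703306DOrbifold} from the modular-tensor-category structure on $V^G$-modules (where $V$ is a commutative algebra object of categorical dimension $|G|$). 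Feeding (i) into your double sum yields $\sum_{[g,M],\lambda}(\qdim_{V^G}M_\lambda)^2=|G|^2\glob(V)$, and (ii) then forces this to coincide with $\glob(V^G)$, completing the argument. So your plan is sound, but the logical order of the cited inputs must be stated carefully to avoid circularity.
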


\begin{remark}
    \label{remarkType}
    Theorem \ref{MT2} shows that there are two types of irreducible \(V^{G}\) modules modules.
    \begin{itemize}
        \item An irreducible \(V^{G}\) module \(M\) is of \emph{type one} if \(M\) occurs in the decomposition of irreducible \(V\) modules, as \(V^{G}\) modules.
        \item An irreducible \(V^{G}\) module \(M\) is of \emph{type two} if \(M\) does not occur in the decomposition of irreducible \(V\) modules, as \(V^{G}\) modules. That is, \(M\) occurs in a \(g\) twisted \(V^{G}\) module for some \(g\in G\) and \(g\neq 1\).
    \end{itemize}
\end{remark}

\begin{theorem}
    \label{MT3}
    Let \(V\) be a rational vertex operator algebra, and \(G\) an automorphism of \(V\). Let \(H\) be a subgroup of \(G\) such that \(C_G(h)=H\) for each \(h\in H\). Assume that \(V^H\) is rational. (Note that \(V^G\) is not necessarily to be rational). Then, an irreducible \(V^H\) module of type two is an irreducible \(V^G\) module of type two.
\end{theorem}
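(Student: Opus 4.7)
The plan is to realize $M$ as an irreducible $V^G$-submodule of an $h$-twisted $V$-module with $h\in H\setminus\{1\}$, and then exploit the centralizer hypothesis to identify the $V^H$-isotypic and $V^G$-isotypic decompositions of that twisted module. Since $M$ is an irreducible $V^H$-module of type two, Theorem~\ref{MT2} applied to the pair $V^H\subseteq V$ yields some $h\in H$ with $h\neq 1$ (otherwise $M$ would be of type one), an irreducible $h$-twisted $V$-module $W$, and a $\lambda\in\Lambda_{H_W,\alpha_W}$ for which $M\cong W_\lambda$ as $V^H$-modules.

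\smallskip

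Next I would compute the $G$-stabilizer $G_W=\{g\in G:g\circ W\cong W\}$. If $g\in G_W$, then $g\circ W$ is a $ghg^{-1}$-twisted $V$-module isomorphic to the $h$-twisted $V$-module $W$, forcing $ghg^{-1}=h$; the hypothesis $C_G(h)=H$ (used for this non-trivial $h$) then gives $g\in H$, so $G_W\subseteq H$ and hence $G_W=G_W\cap H=H_W$. Moreover the projective cocycle $\alpha_W$ is intrinsic to the $\varphi$-action of the stabilizer on $W$, so it agrees whether viewed on $G_W$ or on $H_W$. Consequently the canonical isotypic decomposition $W=\bigoplus_\nu M_\nu\otimes W_\nu$ with respect to the common projective stabilizer is the same for the two inclusions $V^H\subseteq V$ and $V^G\subseteq V$. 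Invoking Theorem~\ref{MT1} for each inclusion, every multiplicity space $W_\nu$ is simultaneously an irreducible $V^H$-module and an irreducible $V^G$-module; in particular $M\cong W_\lambda$ is irreducible over $V^G$.

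\smallskip

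To confirm that $M$ is of type two over $V^G$, suppose for contradiction that $M$ also appears as $U_\mu$ in the $V^G$-decomposition of some irreducible untwisted $V$-module $U$. Then the $1$-twisted module $U$ and the $h$-twisted module $W$ would share the irreducible $V^G$-submodule $M$; but the $G$-action on twisted $V$-modules preserves the order of the twist, so $U$ and $W$ lie in distinct $G$-orbits, contradicting the inequivalence statement of Theorem~\ref{MT1}(2). The main obstacle is in the second paragraph: one must verify that $G_W=H_W$ together with the intrinsic nature of $\alpha_W$ really forces the two isotypic decompositions to coincide, so that a $V^H$-irreducible multiplicity space cannot split further upon restriction to $V^G$; the remaining steps are bookkeeping around Theorems~\ref{MT1} and~\ref{MT2}.
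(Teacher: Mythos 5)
Your proposal is correct and follows essentially the same route as the paper's own proof: realize $M$ via Theorem~\ref{MT2} inside an $h$-twisted module, use the hypothesis $C_G(h)=H$ to show the $G$-stabilizer of that twisted module coincides with its $H$-stabilizer (with the same projective cocycle), and apply Theorem~\ref{MT1} to conclude the multiplicity space is already irreducible over $V^G$. Your final paragraph spelling out, via Theorem~\ref{MT1}(2), why $M$ remains of type two over $V^G$ is a detail the paper leaves implicit, but it is consistent with the same argument.
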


\begin{proof}
    Let \(M\) be an irreducible $V^{H}$ module of type two.
     Then it occurs in some \(g\)-twisted $V$ module \(N\) with $g\in H$.
     Let $G_N$ be the subgroup of $G$ consisting of $h\in G$ such that $N\circ h\cong N$.
     Note that $G_N\subset C_G{h}=H$.
     So, $G_N$ is the subgroup of $H$ consisting of $h\in H$ such that $N\circ h\cong N$.
     Let $(G_N,\alpha_N)$ be a projective representation of $G_N$ on $N$.
     Theorem \ref{MT1} shows that
     \begin{equation}
      N=\bigoplus_{\lambda\in irr(G_N,\alpha_N)} W_\lambda \otimes N_\lambda.
     \end{equation}
    Then $M=N_\lambda$ for some $\lambda$.
    Since each $N_\lambda$ is an irreducible $V^G$ module, \(M\) is an irreducible $V^G$ module.
\end{proof}

\begin{theorem}
    \label{Qd1}
    By Reference \cite{2015arXiv150703306DOrbifold}, one has
    $$\mathrm{qdim}_{V^G}M=|G|\mathrm{qdim}_VM.$$
    for any irreducible $g$-twisted $V$-module $M$.
\end{theorem}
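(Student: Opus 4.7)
The plan is to reduce the statement to a computation involving only $V$ and $V^G$. By the definition of quantum dimension and multiplicativity of limits of ratios of characters,
\[
\mathrm{qdim}_{V^G}M=\lim_{y\to 0^+}\frac{Z_M(iy)}{Z_{V^G}(iy)}=\lim_{y\to 0^+}\frac{Z_M(iy)}{Z_V(iy)}\cdot\lim_{y\to 0^+}\frac{Z_V(iy)}{Z_{V^G}(iy)}=\mathrm{qdim}_V M\cdot\lim_{y\to 0^+}\frac{Z_V(iy)}{Z_{V^G}(iy)},
\]
provided both factors exist (which they do under (V1)--(V4) together with the rationality and $C_2$-cofiniteness of $V^G$). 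Thus the theorem reduces to showing
\[
\lim_{y\to 0^+}\frac{Z_V(iy)}{Z_{V^G}(iy)}=|G|.
\]

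To evaluate this limit, I would invoke the quantum Galois decomposition from Theorem \ref{on quantum Galois theory},
\[
V=\bigoplus_{\chi\in\mathrm{Irr}(G)}M_\chi\otimes V_\chi,
\]
with each $V_\chi$ a simple $V^G$-module and $\dim M_\chi=\chi(1)$. Taking graded characters gives $Z_V(\tau)=\sum_{\chi}\chi(1)\,Z_{V_\chi}(\tau)$, and dividing by $Z_{V^G}(\tau)$ and sending $y\to 0^+$ produces
\[
\lim_{y\to 0^+}\frac{Z_V(iy)}{Z_{V^G}(iy)}=\sum_{\chi\in\mathrm{Irr}(G)}\chi(1)\,\mathrm{qdim}_{V^G}V_\chi.
\]

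The last step is to use the identity $\mathrm{qdim}_{V^G}V_\chi=\chi(1)$ stated in \cite{dong2013quantum}, so that the sum above collapses via the standard finite-group identity $\sum_{\chi}\chi(1)^2=|G|$, which finishes the argument. The principal obstacle is precisely the identification $\mathrm{qdim}_{V^G}V_\chi=\chi(1)$: this is not a formal consequence of the decomposition but requires the modular $S$-matrix of $V^G$ (via the Zhu and Verlinde machinery of \cite{zhu1996modular}, \cite{huang2008vertex}), the asymptotic analysis of $Z_{V_\chi}(iy)$ as $y\to 0^+$ guaranteed by rationality and $C_2$-cofiniteness, and the use of hypothesis (V4) to ensure that only the vacuum component $V^{G}$ itself has conformal weight zero, so that the leading $q$-exponent of the $S$-transform singles out the vacuum row. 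Once this input is accepted, the factorisation above yields $\mathrm{qdim}_{V^G}M=|G|\,\mathrm{qdim}_V M$ immediately.
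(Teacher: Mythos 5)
The paper does not actually prove this statement: Theorem \ref{Qd1} is quoted verbatim from \cite{2015arXiv150703306DOrbifold}, so there is no in-paper argument to compare against. Your reduction is the standard route and is essentially how the result is obtained in the cited literature: the chain rule $\mathrm{qdim}_{V^G}M=\mathrm{qdim}_VM\cdot \mathrm{qdim}_{V^G}V$, the Galois decomposition $V=\oplus_{\chi}M_\chi\otimes V_\chi$, the identity $\mathrm{qdim}_{V^G}V_\chi=\chi(1)$, and $\sum_\chi\chi(1)^2=|G|$. As an outline it is sound, and you correctly isolate where the real work sits.

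Two caveats keep this from standing as a genuine substitute for the citation. First, the identity $\mathrm{qdim}_{V^G}V_\chi=\chi(1)$ for an arbitrary finite $G$ is itself one of the main results of \cite{2015arXiv150703306DOrbifold}; the version in \cite{dong2013quantum} is established only under restrictive hypotheses (solvability-type assumptions on $G$, matching the solvable hypothesis in the paper's Theorem \ref{on quantum Galois theory}), which excludes exactly the groups this paper needs ($S_4$ is covered, $A_5$ is not). So invoking \cite{dong2013quantum} for general $G$ and then deducing the DRX theorem is partly circular. Second, the factorization of the limit requires that $\lim_{y\to0^+}Z_M(iy)/Z_V(iy)$ exists for a \emph{twisted} module $M$ and that $\lim_{y\to0^+}Z_V(iy)/Z_{V^G}(iy)$ exists, is finite and nonzero; this is not a formal consequence of (V1)--(V4) but rests on the twisted modular invariance of \cite{dong2000modular}, the asymptotics of $V^G$-characters via the $S$-matrix, and the positivity arguments that use (V4). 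You flag this, but it is precisely the analytic content of the quoted theorem, so the proposal should be read as a correct reduction to the cited inputs rather than an independent proof.
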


\begin{theorem}\label{Qd2} By Reference \cite{2015arXiv150703306DOrbifold}, one has the following relation,
    \[
        \mathrm{glob}(V^G)=|G|^2\mathrm{glob}(V).
    \]
\end{theorem}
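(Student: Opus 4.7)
My plan is to parametrize the irreducible $V^{G}$-modules via Theorem \ref{MT2}, evaluate their quantum dimensions via Theorem \ref{Qd1}, and collapse the resulting sum using orbit-stabilizer together with modular invariance. By Theorem \ref{MT2}, every irreducible $V^{G}$-module is a summand $M_{\lambda}$ in the decomposition $M=\bigoplus_{\lambda\in\Lambda_{G_{M},\alpha_{M}}}W_{\lambda}\otimes M_{\lambda}$ of some irreducible $g$-twisted $V$-module $M$ (for some $g\in G$), where $W_{\lambda}$ ranges over the irreducible $\alpha_{M}$-projective representations of the stabilizer $G_{M}=\{h\in G:h\circ M\cong M\}$. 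Twisted modules in a common $G$-orbit contribute the same list of $V^{G}$-modules, while Theorem \ref{MT1} rules out coincidences across distinct orbits; so fixing one representative $M^{O}$ per $G$-orbit $O$ on $\bigsqcup_{g\in G}\mathscr{M}(g)$ indexes the complete list of inequivalent irreducible $V^{G}$-modules by pairs $(O,\lambda)$.

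Next I would establish
\[
\mathrm{qdim}_{V^{G}}(M^{O}_{\lambda})=\frac{|G|}{|G_{M^{O}}|}(\dim W_{\lambda})\,\mathrm{qdim}_{V}M^{O}.
\]
Applying quantum dimension to the decomposition yields $\sum_{\lambda}(\dim W_{\lambda})\,\mathrm{qdim}_{V^{G}}(M^{O}_{\lambda})=\mathrm{qdim}_{V^{G}}M^{O}=|G|\,\mathrm{qdim}_{V}M^{O}$ by Theorem \ref{Qd1}; to split this into individual $\lambda$-contributions I would route through the intermediate orbifold $V^{G_{M^{O}}}$, in which the decomposition is already saturated so that $\mathrm{qdim}_{V^{G_{M^{O}}}}(M^{O}_{\lambda})$ is visibly proportional to $\dim W_{\lambda}$, and then propagate up to $V^{G}$ using the index $[G:G_{M^{O}}]$. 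Squaring and summing, using the identity $\sum_{\lambda}(\dim W_{\lambda})^{2}=|G_{M^{O}}|$ for irreducible projective representations and the orbit-stabilizer relation $|O|=|G|/|G_{M^{O}}|$, telescopes the double sum to
\[
\mathrm{glob}(V^{G})=|G|\sum_{g\in G}\sum_{M\in\mathscr{M}(g)}(\mathrm{qdim}_{V}M)^{2}.
\]

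The last input needed is that each twisted sector contributes the same total, namely $\sum_{M\in\mathscr{M}(g)}(\mathrm{qdim}_{V}M)^{2}=\mathrm{glob}(V)$ for every $g\in G$. This rests on modular invariance: the $S$-transformation of Zhu's modular invariance theorem permutes untwisted and twisted characters unitarily (compare Theorem \ref{minvariance}), and the identification $\mathrm{glob}(V)=1/S_{V,V}^{2}$ coming from the fusion-algebra $S$-matrix yields the same value in every sector. Summing over $g$ supplies the remaining factor of $|G|$, giving $\mathrm{glob}(V^{G})=|G|^{2}\mathrm{glob}(V)$. The main obstacle is splitting the global relation on $\mathrm{qdim}_{V^{G}}$ into its $\lambda$-components: isolating individual terms forces one to pass through the inertia subgroup $G_{M^{O}}$ and to use the fact that irreducible $\alpha_{M^{O}}$-projective representations of $G_{M^{O}}$ are in bijection with the simple summands in the $V^{G_{M^{O}}}$-decomposition of $M^{O}$. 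The modular-invariance step, while technical, is standard for rational $C_{2}$-cofinite VOAs.
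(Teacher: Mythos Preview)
The paper does not prove this theorem; it is quoted verbatim from the cited reference \cite{2015arXiv150703306DOrbifold} (Dong--Ren--Xu), so there is no in-paper argument to compare against. Your outline is in fact a faithful sketch of the proof given in that reference: the parametrization of irreducible $V^{G}$-modules by orbits and projective characters of inertia groups, the formula $\qdim_{V^{G}}M_{\lambda}=[G:G_{M}](\dim W_{\lambda})\,\qdim_{V}M$, the collapse via $\sum_{\lambda}(\dim W_{\lambda})^{2}=|G_{M}|$ and orbit-stabilizer, and finally the equality $\sum_{M\in\mathscr{M}(g)}(\qdim_{V}M)^{2}=\glob(V)$ coming from the $S$-matrix identity $\glob(V)=S_{0,0}^{-2}$ together with modular invariance across twisted sectors---all of this is exactly how the cited paper proceeds.

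Two small points worth tightening if you intend to present this as a complete proof rather than a sketch. First, the step where you ``route through the intermediate orbifold $V^{G_{M}}$'' to isolate the individual $\qdim_{V^{G}}M_{\lambda}$ is the place where the real work lies: one needs that $M_{\lambda}$ remains irreducible over $V^{G}$ (Theorem \ref{MT1}) and that $\qdim_{V^{G}}N=[G:H]\,\qdim_{V^{H}}N$ for any irreducible $V^{H}$-module $N$, which itself uses $\qdim_{V^{G}}V^{H}=[G:H]$; you should make the latter explicit. Second, the twisted-sector identity $\sum_{M\in\mathscr{M}(g)}(\qdim_{V}M)^{2}=\glob(V)$ is not a formal consequence of Theorem \ref{minvariance} alone; it requires the unitarity of the $S$-transformation on the full vector of $(g,h)$-trace functions and the Verlinde-type identification of $\qdim_{V}M$ with a ratio of $S$-matrix entries, both of which depend on the rationality and $C_{2}$-cofiniteness hypotheses (V3). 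With those two points filled in, your argument is correct and coincides with the source.
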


\begin{theorem}
    \label{qd3}
    By Reference \cite{Dong2015476}, let $V$ be a rational,
    $C_{2}$-cofinite vertex operator algebra with central charge $c$
    and $M^{0},\cdots,M^{d}$ be the inequivalent irreducible $V$-modules
    with $M^{0}\cong V$. Then $\mbox{qdim}M^{i}=\mbox{qdim}\left(M^{i}\right)^{\sigma}$.
\end{theorem}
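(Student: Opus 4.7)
\emph{Proof plan.} The plan is to observe that $(M^i)^\sigma$ is literally the same graded vector space as $M^i$ equipped with the same $L(0)$-action, so its $q$-character coincides with that of $M^i$ and the defining limit of $\qdim$ yields the same value on both sides. The only VOA-specific input is the standard fact that any automorphism of $V$ fixes the conformal vector $\omega$, which is built into the definition of a VOA automorphism.

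Concretely, I would unwind the definition $Y^\sigma(u,z)=Y(\sigma^{-1}u,z)$ at $u=\omega$. Using $\sigma^{-1}\omega=\omega$, this gives $Y^\sigma(\omega,z)=Y(\omega,z)$, and extracting the coefficient of $z^{-2}$ shows that the operator denoted $L(0)$ on $(M^i)^\sigma$ acts on the common underlying vector space as exactly the original $L(0)$ on $M^i$. Consequently the two modules share the same $L(0)$-eigenspace decomposition, and
\[
Z_{(M^i)^\sigma}(\tau)=\tr_{(M^i)^\sigma}\,q^{L(0)-c/24}=\tr_{M^i}\,q^{L(0)-c/24}=Z_{M^i}(\tau).
\]
Substituting this equality into Definition \ref{quantum dimension} and letting $y\to 0$ immediately yields $\qdim_V(M^i)^\sigma=\qdim_V M^i$.

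Because the argument is a direct unpacking of definitions, there is no serious obstacle. The only ingredient invoked from outside is the existence of the limit defining $\qdim$, which is guaranteed for rational, $C_2$-cofinite vertex operator algebras by Zhu's modular invariance theorem together with the analysis in Reference \cite{dong2013quantum}. Note also that the hypothesis that $M^0,\ldots,M^d$ enumerate all the inequivalent irreducibles is not really used here; the identity $\qdim_V M^\sigma=\qdim_V M$ holds for any $V$-module $M$ for which the $q$-character converges.
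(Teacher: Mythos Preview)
Your argument is correct: since any VOA automorphism fixes $\omega$, the twisted module $(M^i)^\sigma$ carries the very same $L(0)$-operator as $M^i$ on the same underlying space, so the $q$-characters agree and the quantum dimensions coincide. The paper itself does not supply a proof of this statement but simply quotes it from Reference \cite{Dong2015476}, so there is nothing to compare; your direct unpacking of the definitions is exactly the standard justification and is complete as written.
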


\begin{theorem}
    \label{theoremConjugate}
    Let \(V\) be a vertex operator algebra, and \(G\) an automorphism group of \(V\). Let \(H_1\) and \(H_2\) be two subgroups of \(G\) such that \(g^{-1}H_1g=H_2\) for some \(g\in G\), that is, \(H_1\) and \(H_2\) are conjugate under \(G\). Let \(M_1\) be an irreducible \(V^{H_1}\) module. Then, there exists an irreducible \(V^{H_2}\) module, \(M_2\) such that as \(V_G\) modules, \(M_1\cong M_2\).
\end{theorem}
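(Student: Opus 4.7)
The plan is to use the automorphism $g$ to transport the module structure from $V^{H_1}$ to $V^{H_2}$, and then observe that on the smaller invariant subalgebra $V^G$ this transport is trivial. First I would verify that $g$, viewed as an element of $\Aut(V)$, restricts to a vertex operator algebra isomorphism $g|_{V^{H_2}}: V^{H_2} \to V^{H_1}$. For $v \in V^{H_2}$ and $h \in H_1$, write $h = g h' g^{-1}$ with $h' \in H_2$ using $H_1 = g H_2 g^{-1}$; then $h(gv) = g(h' v) = gv$, so $gv \in V^{H_1}$. Symmetrically $g^{-1}$ sends $V^{H_1}$ into $V^{H_2}$, and since $g$ is a VOA automorphism of the ambient $V$ (preserving $\mathbf{1}$ and $\omega$, both of which lie in $V^G$), the restriction is a genuine VOA isomorphism.

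Next I would transfer the module structure via this isomorphism. Take $M_2$ to have the same underlying graded vector space as $M_1$, equipped with the action
\[
Y_{M_2}(v, z) := Y_{M_1}(gv, z), \quad v \in V^{H_2}.
\]
This is formally the pullback construction already introduced in the paper to define $W^{\sigma}$ (see Definition \ref{defStableModule}); verification of the weak module axioms---Jacobi identity, lower truncation condition, vacuum axiom---is immediate from the corresponding axioms for $M_1$ because $g$ preserves the full VOA structure of $V$ and intertwines the restricted structures. A subspace of $M_2$ is a $V^{H_2}$-submodule exactly when it is a $V^{H_1}$-submodule of $M_1$ under the identification, so the irreducibility of $M_1$ forces $M_2$ to be irreducible as a $V^{H_2}$-module.

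The last step identifies $M_1$ and $M_2$ as $V^G$-modules. Since $g \in G$ fixes $V^G$ pointwise, for every $v \in V^G \subseteq V^{H_2}$ one has $gv = v$, and therefore
\[
Y_{M_2}(v, z) = Y_{M_1}(gv, z) = Y_{M_1}(v, z).
\]
Hence the identity map on underlying vector spaces is a $V^G$-module isomorphism $M_2 \cong M_1$ (in fact an equality of actions).

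I do not anticipate a genuine obstacle: the whole argument reduces to the observation that conjugation by $g$ induces a VOA isomorphism $V^{H_2} \to V^{H_1}$, after which both the transferred module structure and its coincidence with $M_1$ over $V^G$ are formal consequences of $g$ acting trivially on $V^G$. The only point requiring mild care is the direction of the isomorphism---one must pull back by $g$ rather than $g^{-1}$, otherwise the invariance conditions under $H_1$ and $H_2$ get interchanged incorrectly.
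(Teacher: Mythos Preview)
Your proposal is correct and follows essentially the same approach as the paper's own proof: both establish that $g$ restricts to a VOA isomorphism $V^{H_2}\to V^{H_1}$, define $M_2$ on the same underlying space via $Y_{M_2}(v,z)=Y_{M_1}(gv,z)$, deduce irreducibility from the correspondence of submodules, and conclude that the identity map is a $V^G$-isomorphism since $g$ fixes $V^G$ pointwise. Your write-up is in fact a bit cleaner in checking the direction of the map and noting explicitly that $g$ preserves $\mathbf{1}$ and $\omega$.
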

\begin{proof}
    Let \(h_2 \in H_2\), \(v \in V^{H_1}\). Then, there exists \(h_1 \in H_1\) such that \(g^{-1}h_1g=h_2\), that is \(g^{-1}h_1=h_2g^{-1}\)
    \begin{align*}
        h_2(g^{-1}v)&=g^{-1}h_1v\\
                        &=g^{-1}v.
    \end{align*}
    This shows that \(g^{-1}v \in V^{H_2}\). Hence, \(g^{-1}V^{H_1} \subseteq V^{H_2}\). Likewise, \(gV_{H_2}\subseteq V^{H_1}\). Thus,
    \begin{align*}
        V^{H_2}&=g^{-1}(gV^{H_2})\\
                 &\subseteq g^{-1}V^{H_1}\\
                 &\subseteq V^{H_2}.
    \end{align*}
    So, \(g^{-1}V^{H_1}=V^{H_2}\). \\
    \\
    \noindent Let \(M_1\circ g\cong M_1\) as vector spaces. Consider \((M_1\circ g, Y_{M_1\circ g})\) as a \(V^{H_2}\) module. For \(v\in V^{H_2}\), define
    \[
        Y_{M_1\circ g}(v,z):=Y_{M_1}(gv,z).
    \]
    The fact \(g^{-1}V^{H_1}=V^{H_2}\) shows that \(Y_{M_1\circ g}\) is well defined, because \(g\) is an automorphism of \(V\). Write
    \[
        (M_1, Y_{M_1})\circ g=(M_1\circ g, Y_{M_1\circ g})=M_1\circ g.
    \]
    \noindent Let \(N\) be a submodule of \(M_1\circ g\), as \(V^{H_2}\)  modules.
    Then, \(N\circ g^{-1}\) is a submodule of \(M_1\), as \(V^{H_1}\)  modules. The fact that \(M_1\) is an irreducible \(V^{H_1}\) module shows that \(N\circ g^{-1}\) is trivial in \(M_1\).
    Thus, \(N=((N\circ g^{-1})\circ g)\) is trivial in \(M_1\circ h\). So, \(M_1\circ g\) is an irreducible \(V^{H_2}\) module. \\
    \\
    Let \(w\in V^G\). Then, \(gw=w\), and hence \(Y_{G}(w,z)=Y_{G}(gw,z)\). That is, \(M_1\circ g \cong M_1\), as \(V_G\) modules.
\end{proof}
\begin{remark}
    \label{remarkConjugate}
    Denote by \(\mathscr{M}(H)\) the set of all irreducible modules of \(V^{H}\).
    Let \(\{H_i| i \in I\}\) be a conjugacy class of an automorphism group \(G\).
    Theorem \ref{theoremConjugate} shows that \(\mathscr{M}(H_i)\), for \(i\in I\), are the same, as \(V^G\) modules.
\end{remark}

\noindent Recall the definition of Schur covering group. The following materials are based on References \cite{zbMATH05622792} \cite{MR1205350} \cite{schur2001representation}.
\begin{definition}
A group homomorphism from $D$ to $G$ is said to be a Schur cover of the finite group $G$ if:
\begin{itemize}
    \item 1. the kernel is contained both in the center and the derived subgroup of $D$, and
    \item 2. amongst all such homomorphism, this $D$ has maximal size.
\end{itemize}
A group $D$ is a Schur covering group for $G$ if there is a Schur cover from $D$ to $G$.
\end{definition}
\noindent The Schur covers of the symmetric and alternating groups were classified.
\begin{theorem}
    The symmetric group of degree $n\geq 4$ has two isomorphic classes of Schur covers, both of order $2\cdot n!$.
    Then alternating group of degree $n$ has one isomorphic class of Schur covers, which has order $n!$ except when $n$ is 6 or 7.
\end{theorem}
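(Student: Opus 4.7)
The plan is to reduce the statement to the classical computation of the Schur multipliers $M(S_n)=H^2(S_n,\C^*)$ and $M(A_n)=H^2(A_n,\C^*)$, using the standard fact that if $D\to G$ is a Schur cover then the kernel is isomorphic to $M(G)$, so $|D|=|G|\cdot |M(G)|$. With the multipliers in hand, the stated orders $2\cdot n!$ for $S_n$ and $n!$ for $A_n$ (when $n\ne 6,7$) follow immediately, and the anomaly at $n=6,7$ corresponds to $M(A_n)\cong \Z/6\Z$ rather than $\Z/2\Z$. The only remaining issue is counting isomorphism classes of covers, which for a perfect group is automatically one, but for $S_n$ must be handled separately.

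First I would establish the order formula and then compute $M(S_n)\cong \Z/2\Z$ for $n\geq 4$. The standard approach is to present $S_n$ by the Coxeter relations on transpositions $t_i=(i,i+1)$: $t_i^2=1$, $(t_it_{i+1})^3=1$, $(t_it_j)^2=1$ for $|i-j|\geq 2$. A central extension is then classified by how the relations lift; here there is a single $\Z/2\Z$ ambiguity, giving $|M(S_n)|=2$. For $A_n$ with $n\geq 4$, I would use the five-cycle and three-cycle presentation, or equivalently the transfer/inflation-restriction sequence associated to $A_n\subseteq S_n$, to obtain $M(A_n)\cong \Z/2\Z$ for generic $n$, while the exceptional cases $n=6,7$ require a direct identification of an extra $\Z/3$ factor coming from the exceptional outer automorphism of $A_6$ (or can be cited from Schur's original paper).

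Next I would count covers up to isomorphism. For any finite perfect group $G$, a Schur cover is unique up to isomorphism because it coincides with the universal central extension; this disposes of $A_n$ for $n\geq 5$. For $S_n$ the group is not perfect, and the two Schur covers $\widetilde{S}_n^+$ and $\widetilde{S}_n^-$ are distinguished by whether the preimages of the transpositions have order $2$ or order $4$, respectively. I would verify that both presentations yield valid central $\Z/2\Z$-extensions with commutator subgroup mapping onto $A_n$ (so that the kernel lies in both the center and the derived subgroup), and that no isomorphism between them can exist because one has transpositions lifting to involutions while the other does not, a property preserved under any isomorphism of the extension data.

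The main obstacle is the actual computation of the Schur multipliers, especially the exceptional values $|M(A_6)|=|M(A_7)|=6$. A clean way to attack the exceptional cases is to exhibit explicit triple covers $3.A_6$ and $3.A_7$ inside the classical groups (for instance, via the embedding $A_6\cong PSL_2(9)$ and the $3$-fold cover $SL_2(9)$-style construction for $A_7$), and then combine with the generic double cover to assemble a $6$-fold cover. Since the statement is exactly Schur's 1911 result, I would cite References \cite{schur2001representation} and \cite{MR1205350} for the full details of this multiplier computation and treat only the order/uniqueness bookkeeping directly.
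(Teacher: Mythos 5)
The paper offers no proof of this theorem at all: it is quoted as Schur's classical classification and referred to the cited sources \cite{schur2001representation}, \cite{MR1205350}, \cite{zbMATH05622792}. So the only question is whether your outline is itself sound. The reduction to Schur multipliers via $|D|=|G|\cdot|M(G)|$, the values $M(S_n)\cong \mathbb{Z}/2\mathbb{Z}$ for $n\geq 4$ and $M(A_n)\cong \mathbb{Z}/2\mathbb{Z}$ except $M(A_6)\cong M(A_7)\cong \mathbb{Z}/6\mathbb{Z}$, and the citation of Schur for these computations are all standard and acceptable. The first genuine gap is the alternating group of degree $4$: your uniqueness argument ("a perfect group has a unique Schur cover, namely its universal central extension") disposes only of $n\geq 5$, and you never return to $n=4$. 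But $A_4$ is not perfect (its derived subgroup is the Klein four group), so the universal central extension is not available; uniqueness for $A_4$ needs a separate argument, e.g. that the isomorphism classes of stem covers are permuted transitively by a set measured by $\mathrm{Hom}(M(A_4),A_4/[A_4,A_4])=\mathrm{Hom}(\mathbb{Z}/2\mathbb{Z},\mathbb{Z}/3\mathbb{Z})=0$, or a direct identification with $SL(2,3)$. Since the paper applies this circle of results precisely to $A_4$, $S_4$, $A_5$, this case cannot be waved off.

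The second gap is in your count of two covers of $S_n$. Your invariant (transpositions lift to involutions in one cover and to elements of order $4$ in the other) distinguishes the two central extensions as extensions, but the statement, and the paper's definition of Schur covering group, counts isomorphism classes of the groups $D$ themselves; an abstract isomorphism need not commute with the projections to $S_n$. To pass from inequivalent extensions to non-isomorphic covering groups you must also quotient by the action of $\mathrm{Aut}(S_n)$ on the relevant cohomology classes, which is harmless exactly when every automorphism is inner, i.e. $n\neq 6$. For $n=6$ the outer automorphism interchanges transpositions and triple transpositions, and a short Clifford-algebra computation shows the lifts of these two classes have orders $(2,4)$ in one cover and $(4,2)$ in the other, so the outer automorphism swaps the two extensions and the two covering groups of $S_6$ are isomorphic as abstract groups; most references record $n=6$ as an exception to the statement as printed, and your argument cannot establish "two classes" there. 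This does not affect the paper, which only needs $n=4,5$, but your write-up should either restrict to $n\neq 6$ or switch to counting equivalence classes of extensions. Finally, two cosmetic slips: the $\mathbb{Z}/3\mathbb{Z}$ part of $M(A_6)$ and $M(A_7)$ is not explained by the exceptional outer automorphism of $A_6$, and $SL_2(9)$ is the double, not the triple, cover of $A_6\cong PSL_2(9)$; the triple covers are the Valentiner group $3.A_6$ and its $A_7$ analogue, for which citing Schur (as you do) is the right move.
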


\begin{theorem}
    \label{theoremSchur}
    For $n=4$, the Schur cover of the alternating group $A_4$ is given by $SL(2,3)\rightarrow PSL(2,3)\cong A_4$.
    The Schur covers of the symmetric group $S_4$ are $GL(2,3)$ and binary octahedral group.
    For $n=5$, the Schur cover of the alternating group $A_5$ is given by $SL(2,5)\rightarrow PSL(2,5)\cong A_5$.
\end{theorem}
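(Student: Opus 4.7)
The plan is to combine the classical computations of the Schur multipliers $M(A_n)$ and $M(S_n)$ with the exceptional isomorphisms that identify these small groups with quotients of matrix groups by their centers. Since the previous theorem in the excerpt already pins down the \emph{number} of isomorphism classes of Schur covers, what remains is to \emph{exhibit} the specific candidates $SL(2,3)$, $GL(2,3)$, $SL(2,5)$, and the binary octahedral group and to verify the two defining properties of a Schur cover for each.

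First, I would recall the exceptional isomorphisms. The natural action of $PSL(2,q)$ on the projective line $\mathbb P^1(\mathbb F_q)$, combined with a comparison of orders, yields $PSL(2,3)\cong A_4$, $PGL(2,3)\cong S_4$, and $PSL(2,5)\cong A_5$. These produce the three natural surjections
\begin{equation*}
SL(2,3)\twoheadrightarrow A_4,\qquad GL(2,3)\twoheadrightarrow S_4,\qquad SL(2,5)\twoheadrightarrow A_5,
\end{equation*}
each with kernel $\{\pm I\}$ of order $2$.

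Next I would verify the Schur-cover conditions for each of these maps. In each case $\{\pm I\}$ is the full center of the matrix group, so centrality is automatic. For the derived subgroup condition, one checks that $-I$ lies in the commutator subgroup: the derived subgroup of $SL(2,3)$ is the quaternion group $Q_8$, which contains $-I$; the derived subgroup of $GL(2,3)$ is $SL(2,3)$, again containing $-I$; and $SL(2,5)$ is perfect. Maximality then follows from the classical formulas $|M(A_n)|=2$ (for $n=4,5$) and $|M(S_n)|=2$ (for $n\geq 4$), which force any Schur cover of $G$ to have order exactly $2|G|$; the three matrix groups above have this order, so they are indeed Schur covers.

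Finally, for $S_4$ I would exhibit the second, non-isomorphic cover. The binary octahedral group $2O$, defined as the preimage of the rotational octahedral subgroup of $SO(3)$ under the spin double cover $Spin(3)\to SO(3)$, has order $48$ and surjects onto $S_4$ with central kernel of order $2$; the Schur conditions for this map are again immediate. The main obstacle, and the only nontrivial point here, is to show $2O\not\cong GL(2,3)$; I would do this by comparing their Sylow $2$-subgroups, which are the generalized quaternion group $Q_{16}$ and the semidihedral group $SD_{16}$ respectively. Together with the preceding theorem guaranteeing exactly two isomorphism classes of covers of $S_n$ for $n\geq 4$, this completes the identification of the Schur covers of $A_4$, $S_4$, and $A_5$.
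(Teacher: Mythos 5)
Your argument is correct, and it is worth noting that the paper itself offers no proof of this statement: Theorem \ref{theoremSchur} is quoted as a classical fact, with the surrounding discussion deferring to the references on Schur covering groups, and the only related argument in the paper is the subsequent construction realizing the covers of $A_4$, $S_4$, $A_5$ as preimages under $\pi\colon SU(2)\to SO(3)$ (which is exactly your binary-octahedral-type description, specialized via the spin double cover). Your route supplies the missing verification in the standard way: the exceptional isomorphisms $PSL(2,3)\cong A_4$, $PGL(2,3)\cong S_4$, $PSL(2,5)\cong A_5$ give stem extensions with kernel $\{\pm I\}$ central and contained in the derived subgroup ($Q_8$ in $SL(2,3)$, $SL(2,3)$ in $GL(2,3)$, perfectness of $SL(2,5)$), and the multiplier computations $|M(A_4)|=|M(A_5)|=|M(S_4)|=2$ give maximality, since a stem extension has kernel of order at most $|M(G)|$, so order $2|G|$ is the maximum. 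The one genuinely delicate point, distinguishing the two covers of $S_4$, is handled correctly by comparing Sylow $2$-subgroups: $Q_{16}$ for the binary octahedral group (as a subgroup of the unit quaternions it has a unique involution) versus the semidihedral group $SD_{16}$ for $GL(2,3)$. Combined with the preceding theorem's count of exactly two isomorphism classes of covers of $S_4$, this is a complete and correct proof; the only implicit ingredient you might flag explicitly is the standard fact bounding the order of a stem extension by $|M(G)|\cdot|G|$, which is what turns the multiplier computation into the maximality condition in the paper's definition.
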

\noindent There is a construction of Schur covering group for $A_4$.
Let $\pi:SU(2)\rightarrow SO(3)$ be the natural homomorphism.
There is a injective homomorphism $i:A_4\rightarrow SO(3)$ such that $\pi^{-1}(i(A_4))$ is a Schur covering group of $A_4$.\\
\\
\noindent Since $A_4$ has only one 3-dimensional irreducible representation, for any injective homomorphism $i:A_4\rightarrow SO(3)$,
$\pi^{-1}(i(A_4))$ is a Schur covering group of $A_4$.

\begin{lemma}
Let $G$ be a finite group such that $A_4\subset G$.
 Let $\pi:SU(2)\rightarrow SO(3)$ be the canonical homomorphism.
 Let $\phi:G\rightarrow SO(3)$ be a injective homomorphism.
 Then the kernel of $\phi$ is contained both in the center and the derived subgroup of $\pi^{-1}(G)$.
\end{lemma}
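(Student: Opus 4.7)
The statement concerns the kernel of $\pi$ restricted to $\pi^{-1}(G)$ (identifying $G$ with its image $\phi(G)\subset SO(3)$, since $\phi$ is injective); this kernel is $\{\pm I\}$. The plan is to verify separately that $\{\pm I\}$ lies in the center of $\pi^{-1}(G)$ and in its derived subgroup.

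The centrality is immediate: $-I$ is central in all of $SU(2)$, hence in every subgroup, in particular in $\pi^{-1}(G)$. The substantive work is to show $-I \in [\pi^{-1}(G), \pi^{-1}(G)]$, and this is where the hypothesis $A_4 \subset G$ enters.

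For that containment I would exploit the Klein four-subgroup $V_4 \subset A_4 \subset G$. Pick two distinct non-identity involutions $a, b \in V_4$; under $\phi$ they become commuting $180^{\circ}$ rotations in $SO(3)$ about perpendicular axes (for instance, thinking of $A_4$ as rotations of a regular tetrahedron, the three non-trivial elements of $V_4$ are the $180^{\circ}$ rotations about the three mutually perpendicular axes joining midpoints of opposite edges). Take any lifts $\tilde a, \tilde b \in \pi^{-1}(G)$. Since any $A \in SU(2)$ with $A^2 = I$ has eigenvalues in $\{\pm 1\}$ with product $\det A = 1$, the only involutions in $SU(2)$ are $\pm I$, so $\tilde a, \tilde b$ must have order $4$. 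Working in the quaternion model $SU(2) \cong \{\text{unit quaternions}\}$, lifts of $180^{\circ}$ rotations about two orthogonal axes are, up to sign, a pair such as $i, j$; a short computation gives $[\tilde a, \tilde b] = ij(-i)(-j) = ijij = -1 = -I$. Hence $-I$ is a commutator in $\pi^{-1}(G)$, and thus lies in the derived subgroup.

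The main obstacle is the anticommutativity of orthogonal $180^{\circ}$ lifts; the quaternion computation settles it directly. A structural alternative is to note that $\pi^{-1}(\phi(A_4))$ cannot be the split extension $A_4 \times \{\pm I\}$, since an embedded copy of $A_4$ in $SU(2)$ would contain three distinct involutions, contradicting the uniqueness observation above; so $\pi^{-1}(\phi(A_4))$ must be the binary tetrahedral group $SL(2,3)$, whose derived subgroup is the quaternion group $Q_8$ and contains $-I$. Either route completes the proof.
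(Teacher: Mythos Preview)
Your proof is correct. The paper's argument for the derived-subgroup containment is slightly different from your primary route: having already observed (just before the lemma) that $\pi^{-1}(\phi(A_4))$ is a Schur covering group of $A_4$, the paper simply invokes the \emph{definition} of a Schur cover to conclude that $\ker\pi$ lies in the derived subgroup of $\pi^{-1}(\phi(A_4))$, and hence in that of $\pi^{-1}(\phi(G))$. Your structural alternative (showing the extension does not split and identifying it as $SL(2,3)$) is essentially a self-contained version of that same step. Your main route, the explicit quaternion commutator $[i,j]=-1$, is more elementary: it avoids any reference to Schur covers or to the identification $\pi^{-1}(\phi(A_4))\cong SL(2,3)$, at the cost of a small concrete computation. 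Both approaches are short; the paper's is cleaner given the surrounding context, while yours would stand alone without that context.
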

\begin{proof}
Note that $\mathrm{ker}(\pi)=\{\mathrm{Id},-\mathrm{Id}\}\subset SU(2)$ which is the center of $SU(2)$.
So $\ker(\pi)$ is contained in the center of $\pi^{-1}(\phi(G))$.
Since $A_4\subset G$, $\pi^{-1}(\phi(A_4))\subset \pi^{-1}(\phi(G))$.
Note that $\pi^{-1}(\phi(A_4))$ is a Schur covering group of $A_4$.
So $\ker(\pi)$ is contained in the derived subgroup of   $\pi^{-1}(\phi(A_4))$ and is also contained in the derived subgroup of $\pi^{-1}(\phi(G))$.
\end{proof}

\begin{remark}
    \label{remarkSchurA5}
    Let $\phi$ be a injective homomorphism from $A_5$ to $SO(3)$.
    Note that the the degree of Schur covering group of $A_5$ is 120 which is the same as the degree of the group $\pi^{-1}(\phi(A_5))$.
    By the definition of Schur covering group, $\pi^{-1}(\phi(A_5))$ is the unique Schur covering group of $A_5$.
\end{remark}

\section{The vertex operator algebra \(V_{L_2}^{A_4}\)}

\noindent Let $L=\mathbb{Z}\alpha$ be a positive definite even lattice of rank
one. That is $\left(\alpha,\alpha\right)=2k$ for some positive integer
$k$. $\mathfrak{h}=L\otimes_{\mathbb{Z}}\mathbb{C}$ and $\hat{\mathfrak{h}}_{\mathbb{Z}}$
the corresponding Heisenberg algebra; the bilinear form on $L$ or
$\mathfrak{h}$ is denoted $\left\langle \cdot,\cdot\right\rangle $.
Denote by $M\left(1\right)$ the associated irreducible module for
$\hat{\mathfrak{h}}_{\mathbb{Z}}$ such that the canonical central
element of $\hat{\mathfrak{h}}_{\mathbb{Z}}$ acts as $1$. Let $\mathbb{C}\left[L\right]$
be the group algebra of $L$ with a basis $e^{\alpha}$ for $\alpha\in L$.
Let $\beta\in\mathfrak{h}$ such that $\left\langle \beta,\beta\right\rangle =1$.
It was proved in References \cite{borcherds1986vertex} \cite{frenkel1989vertex} that there is a linear map
\[
V_{L}\to\left(\mbox{End}V_{L}\right)\left[\left[z,z^{-1}\right]\right]
\]
\[
v\mapsto Y\left(v,z\right)=\sum_{n\in\mathbb{Z}}v_{n}z^{-n-1}\ \ \left(v_{n}\in\mbox{End}V_{L}\right)
\]
such that $V_{L}=\left(V_{L},Y,\mathbf{1},\omega\right)$ is a simple
vertex operator algebra where $\mathbf{1}=1\otimes e^{0}.$
The dual lattice $L^{\circ}$ of $L$ is
\[
L^{\circ}=\left\{ \lambda\in\mathfrak{h}|\left(\alpha,\lambda\right)\in\mathbb{Z}\right\} =\frac{1}{2k}L.
\]
Then $L^{\circ}=\cup_{i=-k+1}^{k}\left(L+\lambda_{i}\right)$ is the
coset decomposition with $\lambda_{i}=\frac{i}{2k}\alpha$. Set $V_{L+\lambda_{i}}=M\left(1\right)\otimes\mathbb{C}\left[L+\lambda_{i}\right]$.
Then $V_{L+\lambda_{i}}$ for $i=-k+1$, $\cdots$, $k$ are the irreducible
modules for $V_{L}$.
Let $\theta$ be an automorphism of $\hat{L}$ such that $\theta\left(\alpha\right)=-\alpha$.
We define an automorphism of $V_{L}$, denote again by $\theta$,
such that
\[
\theta\left(u\otimes e^{\alpha}\right)=\theta\left(u\right)\otimes e^{-\alpha}\ \mbox{for}\ u\in M\left(1\right)\ \mbox{and}\ \alpha\in\hat{L}.
\]
Here the action of $\theta$ on $M\left(1\right)$ is given by
\[
\theta\left(\beta\left(n_{1}\right)\cdots\beta\left(n_{k}\right)\right)=\left(-1\right)^{k}\beta\left(n_{1}\right)\cdots\beta\left(n_{k}\right).
\]
The $\theta$-invariant $V_{L}^{+}$ of $V_{L}$ form a simple vertex
operator subalgebra and the $\left(-1\right)$-eigenspace $V_{L}^{-}$
is an irreducible $V_{L}^{+}$-module. Clearly $V_{L}=V_{L}^{+}\oplus V_{L}^{-}.$\\
\\
\noindent Let $\chi_{s}$ be a character of $L/2L$ such that $\chi_{s}\left(\alpha\right)=\left(-1\right)^{s}$
for $s=0,1$ and $T_{\chi_{s}}=\mathbb{C}$ the irreducible $L/2L$-module
with character $\chi_{s}$. It is well known that $V_{L}^{T_{s}}=M\left(1\right)\left(\theta\right)\otimes T_{\chi_{s}}$
is an irreducible $\theta$-twisted $V_{L}$-module. By References \cite{frenkel1989vertex} \cite{dong1994twisted},
denote the $\pm1$-eigenspaces of $V_{L}^{T_{s}}$ under $\theta$
by $\left(V_{L}^{T_{s}}\right)^{\pm}$.

\begin{theorem}
    \label{theoremV_L^+}
    By Reference \cite{dong1999representations}, any irreducible $V_{L}^{+}$-module is isomorphic
    to one of the following modules,
    \[
        V_{L}^{\pm},V_{\lambda_{i}+L}\left(i\not=k\right),V_{\lambda_{k}+L}^{\pm},\left(V_{L}^{T_{s}}\right)^{\pm}.
    \]
\end{theorem}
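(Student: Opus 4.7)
The plan is to apply the general orbifold machinery recorded earlier in the excerpt to the pair $(V_L, G)$ with $G=\langle\theta\rangle\cong \Z/2\Z$. By Theorem \ref{MT2}, every irreducible $V_L^+$-module occurs inside an irreducible $g$-twisted $V_L$-module for some $g\in G$, so the modules split into two families: those coming from the (untwisted) irreducible $V_L$-modules, and those coming from the $\theta$-twisted $V_L$-modules. It suffices to enumerate both families and show the list exhausts all possibilities up to isomorphism.

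For the untwisted family, recall the $2k$ irreducible $V_L$-modules $V_{L+\lambda_i}$, $i=-k+1,\ldots,k$. Since $\theta$ sends $e^{\lambda_i}$ to $e^{-\lambda_i}$, a direct check gives $\theta\circ V_{L+\lambda_i}\cong V_{L+\lambda_{-i}}$, so the $\theta$-stable modules are exactly $V_L$ (case $i=0$) and $V_{L+\lambda_k}$ (case $i=k$, since $2\lambda_k=\alpha\in L$). Lemma \ref{constructionLemma1} then splits each stable module into its two $\theta$-eigenspaces, producing the four $V_L^+$-modules $V_L^\pm$ and $V_{L+\lambda_k}^\pm$. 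For $i\in\{1,\ldots,k-1\}$ one has $\theta\circ V_{L+\lambda_i}\not\cong V_{L+\lambda_i}$, so since $\theta$ has prime order $2$ Lemma \ref{constructionLemma2} applies and each such $V_{L+\lambda_i}$ is already irreducible over $V_L^+$, with $V_{L+\lambda_i}\cong V_{L+\lambda_{-i}}$ as $V_L^+$-modules.

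For the twisted family, Theorem \ref{minvariance} shows that the number of irreducible $\theta$-twisted $V_L$-modules equals the number of $\theta$-stable irreducible $V_L$-modules, namely $2$. These are realized explicitly as $V_L^{T_s}=M(1)(\theta)\otimes T_{\chi_s}$ for $s=0,1$, via the Heisenberg twisted construction of References \cite{frenkel1989vertex} and \cite{dong1994twisted}. Applying Lemma \ref{constructionLemma1} to each $V_L^{T_s}$ yields the four additional irreducible $V_L^+$-modules $(V_L^{T_s})^\pm$, and by the count these exhaust the twisted contribution.

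The main obstacles in carrying out the plan are (i) computing the $\theta$-action on untwisted modules and correctly identifying the $\theta$-stable ones, (ii) constructing the twisted modules $V_L^{T_s}$ so that their irreducibility can be invoked, and (iii) verifying that the modules produced are pairwise non-isomorphic. The hardest step is (iii): distinguishing $(V_L^{T_0})^+, (V_L^{T_0})^-, (V_L^{T_1})^+, (V_L^{T_1})^-$ from one another, and distinguishing them from the untwisted modules, which is typically accomplished by comparing conformal weights of the minimal eigenspaces of $L(0)$ together with the central characters of $L/2L$. Once exhaustiveness is confirmed via Theorem \ref{MT2} and the count matches via Theorem \ref{minvariance}, the claimed classification follows.
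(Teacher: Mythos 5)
Your route is not the one behind the cited result. The paper offers no proof of this theorem at all: it is quoted from Reference \cite{dong1999representations}, where the classification is obtained by computing the Zhu algebra $A(V_L^+)$ explicitly and classifying its irreducible representations, together with the twisted-module constructions of \cite{frenkel1989vertex} and \cite{dong1994twisted}. You instead derive the list from the modern orbifold machinery (Theorems \ref{MT1}, \ref{MT2}, \ref{minvariance}) applied to $G=\langle\theta\rangle$: enumerate the $\theta$-stable untwisted modules ($V_L$ and $V_{L+\lambda_k}$, correctly identified), split them and the two $\theta$-twisted modules $V_L^{T_s}$ into eigenspaces, and keep the non-stable $V_{L+\lambda_i}$, $1\le i\le k-1$, whole via Lemma \ref{constructionLemma2}. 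As a re-derivation this is coherent and the counting matches, and it has the virtue of showing how the list falls out of the general orbifold theorems; but it is a genuinely different, and much later, argument than the one the paper is citing.

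Two caveats keep this from standing as an independent proof. First, Theorems \ref{MT1} and \ref{MT2} are stated under the standing hypotheses (V1)--(V4), in particular that $V^G=V_L^+$ is rational and $C_2$-cofinite. Those facts are theorems in their own right (\cite{abe2004rationality}, \cite{abe2005rationality}), and their proofs in the literature use the Dong--Nagatomo classification you are trying to establish; so your argument is a consistency check granted those inputs rather than a proof from scratch, and you should at least state the dependence explicitly. Second, Lemma \ref{constructionLemma1} does not do the job you assign it for the untwisted family: it decomposes a $g$-\emph{twisted} module by its $\frac{1}{T}\Z_+$-grading, so it applies to $V_L^{T_s}$ but not to the $\theta$-stable untwisted modules $V_L$ and $V_{L+\lambda_k}$. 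For $V_L$ itself the splitting $V_L=V_L^+\oplus V_L^-$ into irreducible $V_L^+$-modules is the quantum Galois theorem (quoted in the paper from \cite{dong1997}), and for $V_{L+\lambda_k}$ one needs Theorem \ref{MT1} (or the Dong--Yamskulna theory) applied with stabilizer $G_M=\langle\theta\rangle$; with that substitution, and with inequivalence across orbits supplied by Theorem \ref{MT1}(2) plus conformal-weight comparisons as you indicate, your enumeration goes through.
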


\begin{theorem} The quantum dimensions for all irreducible $V_{L}^{+}$-modules over $V_{L}^{+}$
are given by the following tables.
\begin{center}
\begin{tabular}{|c|c|c|c|c|c|}
\hline
 & $V_{L}^{+}$ & $V_{L}^{-}$ & $V_{L+\frac{r}{2k}\alpha}(1\leq r\leq k-1)$  & $V_{L+\frac{\alpha}{2}^{+}}$  & $V_{L+\frac{\alpha}{2}^{-}}$\tabularnewline
\hline
$\omega$ & 0 & 1 & \(\frac{r^2}{4k}\) & \(\frac{k}{4}\) & $\frac{k}{4}$\tabularnewline
\hline
$\mathrm{qdim}$ & 1 & 1 & 2 & 1 & 1\tabularnewline
\hline
\end{tabular}
\par\end{center}

\begin{center}
\begin{tabular}{|c|c|c|c|c|}
\hline
 & \(V_L^{T_1,+}\) & \(V_L^{T_1,-}\) & \(V_L^{T_2,+}\) & \(V_L^{T_2,-}\)\tabularnewline
\hline
$\omega$ & \(\frac{1}{16}\) & \(\frac{9}{16}\) & \(\frac{1}{16}\) & \(\frac{9}{16}\)\tabularnewline
\hline
$\mathrm{qdim}$ & \(k\) & \(k\) & \(k\) & \(k\)\tabularnewline
\hline
\end{tabular}
\par\end{center}
\end{theorem}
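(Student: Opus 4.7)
The plan is to compute each entry of the tables by combining Theorem \ref{Qd1} (which gives $\mathrm{qdim}_{V_L^+} M = 2\,\mathrm{qdim}_{V_L} M$ for every irreducible $g$-twisted $V_L$-module, with $g \in \langle\theta\rangle$) with the standard asymptotics of theta-function characters. Recall that each irreducible $V_L$-module $V_{L+\lambda_i}$ has character $\theta_{L+\lambda_i}(\tau)/\eta(\tau)$, and the $S$-transformation mixes these characters via a unitary matrix, from which one reads off $\mathrm{qdim}_{V_L} V_{L+\lambda_i} = 1$ for every $i$. Similarly, the $\theta$-twisted module $V_L^{T_s} = M(1)(\theta)\otimes T_{\chi_s}$ has character that $S$-transforms into an untwisted combination, and the leading $y\to 0^+$ asymptotic of $Z_{V_L^{T_s}}(iy)/Z_{V_L}(iy)$ gives $\mathrm{qdim}_{V_L} V_L^{T_s} = k$; hence Theorem \ref{Qd1} yields $\mathrm{qdim}_{V_L^+} V_L^{T_s} = 2k$.

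Next I would handle the rows of the first table. For $V_{L+\lambda_r}$ with $1\le r\le k-1$, the coset $-\lambda_r$ is not congruent to $\lambda_r$ modulo $L$, so $\theta \circ V_{L+\lambda_r} \cong V_{L-\lambda_r} \not\cong V_{L+\lambda_r}$ as $V_L$-modules; Lemma \ref{constructionLemma2} then ensures that $V_{L+\lambda_r}$ remains irreducible as a $V_L^+$-module, and Theorem \ref{Qd1} gives $\mathrm{qdim}_{V_L^+} V_{L+\lambda_r} = 2$. For $V_L^{\pm}$ themselves, $\mathrm{qdim}_{V_L^+} V_L^+ = 1$ is automatic, while the $V_L^+$-decomposition $V_L = V_L^+ \oplus V_L^-$ combined with $\mathrm{qdim}_{V_L^+} V_L = 2$ forces $\mathrm{qdim}_{V_L^+} V_L^- = 1$.

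The remaining four cases are $V_{L+\alpha/2}^{\pm}$ and $V_L^{T_s,\pm}$, where $\theta$ stabilizes the underlying $V_L$- or twisted $V_L$-module and each splits into two $V_L^+$-submodules. In each case the two qdims sum to $2$ or $2k$ respectively by Theorem \ref{Qd1}, so the task reduces to showing that both summands contribute equally in the $y\to 0^+$ limit. Write the character of each eigenspace as $\tfrac{1}{2}(\chi_M(\tau) \pm \chi_M^\theta(\tau))$ where $\chi_M^\theta(\tau)$ denotes the graded trace of $\theta$. Using the $S$-transformation of these $\theta$-twisted characters (they transform to characters supported in a genuinely different sector, carrying an additional $\eta$-function factor), one sees that $\chi_M^\theta(iy)/\chi_{V_L^+}(iy) \to 0$ as $y \to 0^+$, so the two $\pm$ pieces have equal quantum dimension, namely $1$ and $k$ respectively.

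The main obstacle is this last vanishing step: carrying out the $S$-modular transformation on the $\theta$-twisted characters and comparing leading orders to conclude that the antisymmetric contribution is subleading. The remaining entries of the table concerning the conformal weights $\omega$ come directly from the explicit $L(0)$-spectrum on each module (lattice vectors of minimal norm, respectively $1/16$ from the twisted ground state and $9/16$ for the first excited $\theta$-eigenspace) and require no asymptotic input.
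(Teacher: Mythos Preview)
Your overall strategy is sound and is more explicit than the paper's: the paper's proof is a two-line appeal to Theorems~\ref{Qd1} and~\ref{Qd2} (the global-dimension identity $\glob(V^G)=|G|^2\glob(V)$) together with a bare assertion that ``the definition of $V_L^{T_i,\pm}$'' forces their quantum dimensions to coincide with $\mathrm{qdim}_{V_L^+}V_{\mathbb{Z}\alpha}$. You instead bypass Theorem~\ref{Qd2} entirely, handling the $\pm$ splittings by writing each eigenspace character as $\tfrac12(\chi_M\pm\chi_M^\theta)$ and arguing via the $S$-transformation that the $\theta$-twisted trace is subleading as $y\to 0^+$. That is a genuinely different and more transparent route: it makes the equal-splitting claim for $V_{L+\alpha/2}^{\pm}$ and $V_L^{T_s,\pm}$ an honest computation rather than a citation.

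There is, however, one concrete gap. Your claimed value $\mathrm{qdim}_{V_L}V_L^{T_s}=k$ is asserted without carrying out the asymptotic, and if you actually do it you will not get $k$. The character of $V_L^{T_s}=M(1)(\theta)\otimes T_{\chi_s}$ is $\eta(\tau)/\eta(\tau/2)$, independent of the lattice; comparing its $y\to 0^+$ growth $\sim 2^{-1/2}e^{\pi/(12y)}$ against $Z_{V_L}(iy)\sim(2k)^{-1/2}e^{\pi/(12y)}$ yields $\mathrm{qdim}_{V_L}V_L^{T_s}=\sqrt{k}$, hence $\mathrm{qdim}_{V_L^+}V_L^{T_s,\pm}=\sqrt{k}$ rather than $k$. (The same answer drops out of the global-dimension check: $\glob(V_L^+)=4\glob(V_L)=8k$, and $1+1+4(k-1)+1+1+4x^2=8k$ forces $x=\sqrt{k}$; it is also consistent with the later tables in the paper, e.g.\ the value $12=3\sqrt{16}$ for $V_{\mathbb{Z}\zeta}^{T_2,\pm}$ over $V_{L_2}^{S_4}$.) So your method is correct, but this particular step would not reproduce the entry $k$ in the stated table---it exposes a typo there.
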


\begin{proof}
    Reference \cite{dong1999representations} shows the action of \(\omega\) on the first level of each irreducible module. The definition of \(V_L^{T_i,\pm}\), for \(i=1,2\), indicates
    \[
        \mathrm{qdim}_{V_{L}^{+}}V_L^{T_i,\pm}=\mathrm{qdim}_{V_{L}^{+}}V_{\mathbb{Z}\alpha}.
    \]
    Hence, the quantum dimensions listed are obtained by Theorem \ref{Qd1} and Theorem \ref{Qd2}.
\end{proof}

\noindent Let $L_{2}$ be the root lattice of type $A_{1}$ and $A_{4}$
the alternating group which is a subgroup of the automorphism group
of lattice vertex operator algebra $V_{L_{2}}$. Motivated by the
classification of rational vertex operator algebras with $c=1$, $V_{L_{2}}^{A_{4}}$
was studied in Reference \cite{Dong201376}. The $C_{2}$-cofiniteness and rationality
of $V_{L_{2}}^{A_{4}}$ are obtained, and the irreducible modules
are classified. They first realize $V_{\mathbb{Z}\alpha}^{G}$ as
$\left(V_{\mathbb{Z}\beta}^{+}\right)^{\left\langle \sigma\right\rangle }$
where $\left\langle \beta,\beta\right\rangle =8$ and $\sigma$ is
an automorphism of $sl\left(2,\mathbb{C}\right)$ of order 3, since
$V_{\mathbb{Z}\beta}^{+}$ is well understood, by  References \cite{abe2000fusion} \cite{abe2004rationality} \cite{abe2005rationality} \cite{dong1999representations} \cite{dong2001classification} \cite{Dong1999384}.\\
\\
\noindent Let $L_{2}=\mathbb{Z}\alpha$ be the rank one positive-definite even
lattice such that $\left(\alpha,\alpha\right)=2$ and $V_{L_{2}}$
the associated simple rational vertex operator algebra. Then $\left(V_{L_{2}}\right)_{1}\cong sl_{2}\left(\mathbb{C}\right)$
and $\left(V_{L_{2}}\right)_{1}$ has an orthonormal basis:
\[
x^{1}=\frac{1}{\sqrt{2}}\alpha\left(-1\right)\mathbf{1},\ x^{2}=\frac{1}{\sqrt{2}}\left(e^{\alpha}+e^{-\alpha}\right),\ x^{3}=\frac{i}{\sqrt{2}}\left(e^{\alpha}-e^{-\alpha}\right).
\]
\noindent Let $\tau_{i}\in Aut\left(V_{L_{2}}\right)$, $i=1,2,3$ be such that

\[
\tau_{1}\left(x^{1},x^{2},x^{3}\right)=\left(x^{1},x^{2},x^{3}\right)\left[\begin{array}{ccc}
1\\
 & -1\\
 &  & -1
\end{array}\right],
\]

\[
\tau_{2}\left(x^{1},x^{2},x^{3}\right)=\left(x^{1},x^{2},x^{3}\right)\left[\begin{array}{ccc}
-1\\
 & 1\\
 &  & -1
\end{array}\right],
\]

\[
\tau_{3}\left(x^{1},x^{2},x^{3}\right)=\left(x^{1},x^{2},x^{3}\right)\left[\begin{array}{ccc}
-1\\
 & -1\\
 &  & 1
\end{array}\right].
\]

\noindent Let $\sigma\in Aut\left(V_{L_{2}}\right)$ be such that
\[
\sigma\left(x^{1},x^{2},x^{3}\right)=\left(x^{1},x^{2},x^{3}\right)\left[\begin{array}{ccc}
0 & 1 & 0\\
0 & 0 & -1\\
-1 & 0 & 0
\end{array}\right].
\]\\
\\
\noindent Then $\sigma$ and $\tau_{i},i=1,2,3,$ generate a finite subgroup
of $\mathrm{Aut}\left(V_{L_{2}}\right)$ isomorphic to the alternating group
$A_{4}$. Denote this group by $A_{4}$. The subgroup $K$ generated by $\tau_{i},i=1,2,3$, is a normal
subgroup of $A_{4}$ of order $4$. Let $\beta=2\alpha$.

\begin{lemma}
     \label{lemmaK}
     By Reference \cite{dong1998rank}, $V_{L_{2}}^{K}\cong V_{\mathbb{Z}\beta}^{+}$.
\end{lemma}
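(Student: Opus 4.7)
\noindent The plan is to exploit that $K$ is a Klein four-group with the relation $\tau_{1}\tau_{2}=\tau_{3}$ (read off from the diagonal matrices defining them), so $K=\langle\tau_{1},\tau_{3}\rangle$ and
\[
V_{L_{2}}^{K}=\bigl(V_{L_{2}}^{\tau_{1}}\bigr)^{\tau_{3}}.
\]
The problem therefore splits into two concrete computations: first identify $V_{L_{2}}^{\tau_{1}}$ with $V_{\mathbb{Z}\beta}$, and then identify the restriction of $\tau_{3}$ to $V_{\mathbb{Z}\beta}$ with the canonical lift $\theta$ of $\beta\mapsto-\beta$. Both identifications are carried out by testing the given automorphism on a set of VOA generators and extending by the automorphism property.

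\noindent For the first step, $\tau_{1}(x^{1})=x^{1}$ gives $\tau_{1}(\alpha(-n)\mathbf{1})=\alpha(-n)\mathbf{1}$ for all $n\geq 1$, so $\tau_{1}$ acts as the identity on $M(1)$. Combining $\tau_{1}(x^{2})=-x^{2}$ and $\tau_{1}(x^{3})=-x^{3}$ with the inverse change of basis $e^{\pm\alpha}=\tfrac{1}{\sqrt{2}}(x^{2}\mp ix^{3})$ gives $\tau_{1}(e^{\pm\alpha})=-e^{\pm\alpha}$. Since each $e^{n\alpha}$ is, up to a cocycle scalar, an $n$-fold vertex operator product of $e^{\pm\alpha}$, the automorphism property yields $\tau_{1}(e^{n\alpha})=(-1)^{n}e^{n\alpha}$ for all $n\in\mathbb{Z}$. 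Hence
\[
V_{L_{2}}^{\tau_{1}}=M(1)\otimes\mathbb{C}[2\mathbb{Z}\alpha]=V_{2L_{2}}=V_{\mathbb{Z}\beta}
\]
as vertex operator subalgebras of $V_{L_{2}}$.

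\noindent For the second step, $\tau_{3}(x^{1})=-x^{1}$ gives $\tau_{3}(\alpha(-n)\mathbf{1})=-\alpha(-n)\mathbf{1}$, and in particular $\tau_{3}(\beta(-1)\mathbf{1})=-\beta(-1)\mathbf{1}$. Applying the same change of basis to $\tau_{3}(x^{2})=-x^{2}$ and $\tau_{3}(x^{3})=x^{3}$ yields $\tau_{3}(e^{\pm\alpha})=-e^{\mp\alpha}$. Using $(e^{\alpha})_{-3}e^{\alpha}=\epsilon(\alpha,\alpha)\,e^{2\alpha}$ together with the automorphism property,
\[
\tau_{3}(e^{2\alpha})=\epsilon(\alpha,\alpha)^{-1}(\tau_{3}e^{\alpha})_{-3}(\tau_{3}e^{\alpha})=\epsilon(\alpha,\alpha)^{-1}(-e^{-\alpha})_{-3}(-e^{-\alpha})=e^{-2\alpha},
\]
i.e.\ $\tau_{3}(e^{\beta})=e^{-\beta}$. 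Thus $\tau_{3}|_{V_{\mathbb{Z}\beta}}$ agrees with $\theta$ on the VOA generators $\beta(-1)\mathbf{1}$ and $e^{\pm\beta}$ of $V_{\mathbb{Z}\beta}$, so $\tau_{3}|_{V_{\mathbb{Z}\beta}}=\theta$ and $V_{L_{2}}^{K}=V_{\mathbb{Z}\beta}^{\theta}=V_{\mathbb{Z}\beta}^{+}$. The one delicate point is the cocycle bookkeeping: one must check that $\epsilon(-\alpha,-\alpha)=\epsilon(\alpha,\alpha)$ so that the two sign contributions coming from $\tau_{3}(e^{\pm\alpha})=-e^{\mp\alpha}$ indeed cancel. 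For the rank-one even lattice $\mathbb{Z}\alpha$ a standard choice of cocycle is symmetric on the diagonal, and any other choice is cohomologous to it, so the isomorphism $V_{L_{2}}^{K}\cong V_{\mathbb{Z}\beta}^{+}$ is independent of this choice.
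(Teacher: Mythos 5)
Your proposal is correct, but it takes a different route from the paper: the paper offers no argument at all for this lemma, simply quoting it from the cited reference (Dong--Griess, rank one lattice type vertex operator algebras), whereas you reprove it by direct computation. Your two-step reduction is sound: $\tau_1\tau_2=\tau_3$ does give $K=\langle\tau_1,\tau_3\rangle$, so $V_{L_2}^K=(V_{L_2}^{\tau_1})^{\tau_3}$; the eigenvalue computation $\tau_1|_{M(1)}=\mathrm{id}$, $\tau_1(e^{n\alpha})=(-1)^n e^{n\alpha}$ correctly identifies $V_{L_2}^{\tau_1}=M(1)\otimes\mathbb{C}[2\mathbb{Z}\alpha]=V_{\mathbb{Z}\beta}$; and the check that $\tau_3$ sends $\beta(-1)\mathbf{1}\mapsto-\beta(-1)\mathbf{1}$ and $e^{\beta}\mapsto e^{-\beta}$ (via $(e^{\pm\alpha})_{-3}e^{\pm\alpha}=\epsilon(\pm\alpha,\pm\alpha)e^{\pm 2\alpha}$, with the signs from $\tau_3(e^{\pm\alpha})=-e^{\mp\alpha}$ cancelling) pins down $\tau_3|_{V_{\mathbb{Z}\beta}}$ as the standard lift $\theta$ on the generators $\beta(-1)\mathbf{1},e^{\pm\beta}$, hence everywhere. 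Your cocycle remark is the right point of care, and for $\mathbb{Z}\alpha$ with $(\alpha,\alpha)=2$ all pairings are even, so the trivial cocycle is admissible and the sign bookkeeping is as you say; even for a nonsymmetric choice the fixed-point subalgebra would only change by conjugation by an inner automorphism, so the isomorphism class is unaffected. What the citation buys the paper is brevity; what your argument buys is an explicit identification of $V_{L_2}^{K}$ with $V_{\mathbb{Z}\beta}^{+}$ as a concrete subalgebra of $V_{L_2}$ (not merely an abstract isomorphism), which is in the same spirit as the later passage of the paper where $(V_{\mathbb{Z}\beta}^{+})^{\langle\rho\rangle}$ and $V_{\mathbb{Z}\zeta}^{+}$ are compared generator by generator.
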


\noindent Thus, $V_{L_{2}}^{A_{4}}=\left(V_{\mathbb{Z}\beta}^{+}\right)^{\left\langle \sigma\right\rangle }$.\\

\noindent Let $W^{1,T_{1}}$ and $W^{2,T_{1}}$ be the only two irreducible
$\sigma$-twisted modules of $V_{\mathbb{Z}\beta}^{+}$ and $W^{1,T_{2}}$,
$W^{2,T_{2}}$ be  the only two irreducible $\sigma^{2}$-twisted modules
of $V_{\mathbb{Z}\beta}^{+}$. Denote irreducible $V_{\mathbb{Z}\beta}^{+}$-submodules
of $W^{i,T_{j}}$ by $W^{i,T_{j},k}$ , $i,j=1,2$; $k=1,2,3$ which
are irreducible $\left(V_{\mathbb{Z}\beta}^{+}\right)^{\left\langle \sigma\right\rangle }$-modules.
Then there are exactly 21 irreducible modules of $\left(V_{\mathbb{Z}\beta}^{+}\right)^{\left\langle \sigma\right\rangle }$
which could be listed as following, by Reference \cite{Dong201376}.
\begin{align*}
    \{ \left(V_{\mathbb{Z}\beta}^{+}\right)^{m},V_{\mathbb{Z}\beta}^{-},
    &V_{\mathbb{Z}\beta+\frac{1}{8}\beta},V_{\mathbb{Z}\beta+\frac{3}{8}\beta},W^{i,T_{j},k},\left(V_{\mathbb{Z}\beta+\frac{1}{4}\beta}\right)^{n} \\
    &|m,n=0,1,2;i,j=1,2;k=1,2,3\}.
\end{align*}

\begin{lemma}
    \label{adtype}
    By Reference \cite{Dong2015476}, the group \(SO(3)\) is the connected compact subgroup of \(\mathrm{Aut}(V_{L_2})\), whose discrete subgroup are the cyclic group \(Z_n\), the dihedral group \(D_n\), \(A_4\), \(S_4\) and \(A_5\). Also, the vertex operator algebra \(V_{L_2}^{\mathbb{Z}_n}\cong V_{\mathbb{Z}n\alpha}\), and \(V_{L_2}^{D_n}\cong V_{\mathbb{Z}n\alpha}^+\).
\end{lemma}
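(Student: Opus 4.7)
The plan is to reduce everything to the action on the weight one subspace. Since $V_{L_2}$ is of CFT type and is generated as a vertex operator algebra by $(V_{L_2})_1\cong sl_2(\mathbb{C})$, any automorphism of $V_{L_2}$ is determined by its restriction to $(V_{L_2})_1$. Conversely, every automorphism of $(V_{L_2})_1$ that preserves the invariant bilinear form (given by $\langle u,v\rangle\mathbf{1}=u_1 v$) extends uniquely to an automorphism of $V_{L_2}$, because the OPE on the weight one part is encoded by the Lie bracket and this bilinear form, and the commutator formulas propagate to all of $V_{L_2}$. First I would make this identification precise and deduce that the connected compact subgroup of $\mathrm{Aut}(V_{L_2})$ equals the connected compact subgroup of the automorphism group of the real form of $sl_2(\mathbb{C})$ that preserves the form, which is $PSU(2)\cong SO(3)$ acting on the three-dimensional real space spanned by $x^1,x^2,x^3$.

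Next I would invoke the classical classification of finite subgroups of $SO(3)$ (going back to Klein) to obtain the list $\mathbb{Z}_n$, $D_n$, $A_4$, $S_4$, $A_5$, up to conjugacy.

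For the two isomorphism statements I would pick convenient representatives inside $SO(3)$. Take $\mathbb{Z}_n$ generated by the rotation of angle $2\pi/n$ about the $x^1$-axis. Using $e^{\pm\alpha}=(x^2\mp ix^3)/\sqrt{2}$, this rotation multiplies $e^{m\alpha}$ by $e^{2\pi im/n}$ and fixes $x^1=\alpha(-1)\mathbf{1}/\sqrt{2}$, hence fixes every element of the Heisenberg module $M(1)$. The invariants in $\mathbb{C}[L_2]$ are therefore exactly $\mathbb{C}[n\mathbb{Z}\alpha]$, so
\begin{equation*}
V_{L_2}^{\mathbb{Z}_n}=M(1)\otimes\mathbb{C}[\mathbb{Z}n\alpha]\cong V_{\mathbb{Z}n\alpha}.
\end{equation*}
For $D_n$ I would take the extra generator to be the $180^{\circ}$ rotation about the $x^2$-axis, which sends $x^1\mapsto-x^1$ and $x^3\mapsto-x^3$. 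A direct check shows that this is exactly the lattice involution $\theta$ used to define $V_L^+$. Hence
\begin{equation*}
V_{L_2}^{D_n}=(V_{L_2}^{\mathbb{Z}_n})^{\theta}=(V_{\mathbb{Z}n\alpha})^+=V_{\mathbb{Z}n\alpha}^+.
\end{equation*}
The answer is independent of the chosen representatives because any two cyclic (respectively dihedral) subgroups of order $n$ (respectively $2n$) in $SO(3)$ are conjugate and conjugate subgroups produce isomorphic fixed point subalgebras.

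The main obstacle is the first step: justifying that the connected compact piece of $\mathrm{Aut}(V_{L_2})$ really is no larger than $SO(3)$. This rests on the fact that $V_{L_2}$ is generated by $(V_{L_2})_1$ together with the uniqueness of the extension of an orthogonal Lie-algebra automorphism to the full vertex operator algebra; once that is in place, the remaining lattice/Heisenberg computations and the Klein classification are routine and the two displayed isomorphisms fall out by direct inspection.
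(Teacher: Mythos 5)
Your proposal is correct, but note that the paper does not actually prove this lemma: it is imported wholesale from Reference \cite{Dong2015476}, so there is no internal argument to compare against. What you have written is essentially the standard proof lying behind that citation. Your reduction to the weight-one subspace is the right mechanism: $V_{L_2}\cong L_{\widehat{sl_2}}(1,0)$ is generated by $(V_{L_2})_1\cong sl_2(\mathbb{C})$, so restriction gives an injection $\mathrm{Aut}(V_{L_2})\hookrightarrow \mathrm{Aut}(sl_2(\mathbb{C}))\cong PSL_2(\mathbb{C})$, and the converse extension (which you argue via propagation of the commutator relations) identifies $\mathrm{Aut}(V_{L_2})$ with $PSL_2(\mathbb{C})\cong SO_3(\mathbb{C})$, whose maximal connected compact subgroup is $SO(3)$, realized as real orthogonal transformations of $\mathrm{span}_{\mathbb{R}}\{x^1,x^2,x^3\}$ --- exactly the realization the paper uses when writing $\tau_i,\sigma,\rho$ as orthogonal matrices. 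This extension step is the only place where your write-up is informal; it is true, but in a referee-ready version you would either cite the known computation of $\mathrm{Aut}(V_{L_2})$ (e.g.\ \cite{Dong2015476} or the Dong--Nagatomo determination of automorphism groups of lattice vertex operator algebras) or spell out that an automorphism of $sl_2$ lifts to the vacuum affine module and descends to its simple quotient. Your explicit computations are correct and match the standard ones: the rotation about the $x^1$-axis is $\exp(\pi i\,\alpha(0)/n)$, which fixes $M(1)$ and scales $e^{m\alpha}$ by $e^{2\pi im/n}$, giving $V_{L_2}^{\mathbb{Z}_n}=M(1)\otimes\mathbb{C}[\mathbb{Z}n\alpha]\cong V_{\mathbb{Z}n\alpha}$; the half-turn about the $x^2$-axis is precisely the involution $\theta$ of Section 4, giving $V_{L_2}^{D_n}\cong V_{\mathbb{Z}n\alpha}^{+}$; and your conjugacy remark disposing of the choice of axes is the same observation as Theorem \ref{theoremConjugate} of the paper. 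So your route is a self-contained verification of what the paper treats as a black-box citation, at the modest cost of one step (the identification of $\mathrm{Aut}(V_{L_2})$) that is easier to cite than to prove from scratch.
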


\begin{lemma}
    \label{decompSpace}
    By Reference \cite{dong1997}, let \(g\) be an automorphism of the vertex operator algebra \(V_{L_2}\) and of order \(T\). Let \(\nu=T\alpha\). Then, eigenvalues of the \(g\) action on \(V_{\mathbb{Z}}\) show that
    \[
        V_{L_2}\cong \bigoplus_{i \in \mathbb{Z},\  0\leq i \leq T-1}V_{\mathbb{Z}\nu+\frac{i}{T}\nu},
    \]
and
    \[
        V_{L_2+\frac{1}{2}\alpha}\cong \bigoplus_{i\in \mathbb{Z},\  0\leq i \leq T-1}V_{\mathbb{Z}\nu-\frac{\nu}{2T}+\frac{i}{T}\nu}.
    \]
\end{lemma}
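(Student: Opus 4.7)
The plan is to realize both decompositions as the common refinement of: (a) the lattice coset decomposition of $L_2$ (resp.\ $L_2+\tfrac{1}{2}\alpha$) into cosets of the sublattice $\mathbb{Z}\nu\subset L_2$, and (b) the $g$-eigenspace decomposition of $V_{L_2}$ (resp.\ $V_{L_2+\tfrac{1}{2}\alpha}$). The two will match because the coset summands are pairwise non-isomorphic irreducible modules over the fixed subalgebra.

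First I would identify the fixed-point subalgebra. Because $g$ is an order-$T$ element of the $SO(3)$-action on $V_{L_2}$, Lemma \ref{adtype} gives $V_{L_2}^{\langle g\rangle}\cong V_{\mathbb{Z}T\alpha}=V_{\mathbb{Z}\nu}$, so $V_{\mathbb{Z}\nu}$ embeds as a vertex operator subalgebra of $V_{L_2}$, sharing the same Heisenberg piece $M(1)$ since $\mathbb{C}\alpha=\mathbb{C}\nu$. Next comes a routine lattice check: the cosets of $\mathbb{Z}\nu=T\mathbb{Z}\alpha$ in $L_2$ are exactly $\mathbb{Z}\nu+i\alpha=\mathbb{Z}\nu+\tfrac{i}{T}\nu$ for $i=0,\dots,T-1$, and the cosets of $\mathbb{Z}\nu$ in $L_2+\tfrac{1}{2}\alpha$ are $\mathbb{Z}\nu+(i-\tfrac{1}{2})\alpha=\mathbb{Z}\nu-\tfrac{\nu}{2T}+\tfrac{i}{T}\nu$. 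Tensoring the corresponding group-algebra partitions with $M(1)$ then yields the two claimed direct sums as $V_{\mathbb{Z}\nu}$-submodules.

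Finally, to reconcile with the ``eigenvalues of the $g$-action'' language, each summand $V_{\mathbb{Z}\nu+\tfrac{i}{T}\nu}$ is an irreducible $V_{\mathbb{Z}\nu}$-module, and the $T$ summands appearing in each decomposition are pairwise non-isomorphic because the cosets $\tfrac{i}{T}\nu$ represent distinct classes in $\mathbb{Z}\nu^{\circ}/\mathbb{Z}\nu$. Since $g$ fixes $V_{\mathbb{Z}\nu}$ pointwise, it commutes with the $V_{\mathbb{Z}\nu}$-module structure on $V_{L_2}$ and on $V_{L_2+\tfrac12\alpha}$, so Schur's lemma forces $g$ to act on each summand by a scalar, which must be a $T$-th root of unity because $g^T=1$. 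The coset decomposition is therefore also the $g$-eigenspace decomposition, with the index $i$ corresponding to the eigenvalue $e^{2\pi\sqrt{-1}\,i/T}$ after a standard choice of generator of $\langle g\rangle$ normalizing $g(e^{\alpha})=e^{2\pi\sqrt{-1}/T}e^{\alpha}$.

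The main obstacle is exactly this last normalization: a priori $g$ could be any order-$T$ element, so the labelling of cosets by specific primitive roots of unity is only canonical up to replacing $g$ by $g^k$ for some $k$ coprime to $T$. Conjugating inside $SO(3)$ allows one to assume $g$ acts diagonally on the Cartan $\mathbb{C}\alpha\subset(V_{L_2})_1$ in the chosen basis, after which $g(e^{j\alpha})=e^{2\pi\sqrt{-1}\,j/T}e^{j\alpha}$ and the eigenvalue assignment becomes forced; this reduction is implicit in the statement and does not affect the underlying coset decomposition that the lemma asserts.
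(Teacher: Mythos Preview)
The paper does not give its own proof of this lemma; it is stated as a result from \cite{dong1997}, with the phrase ``eigenvalues of the $g$ action'' as the only indication of method. Your argument is correct and supplies exactly the standard details behind that phrase: the coset decomposition of $L_2$ modulo $\mathbb{Z}\nu$ gives the claimed direct sum at the level of $V_{\mathbb{Z}\nu}$-modules, and Schur's lemma then identifies the summands with the $g$-eigenspaces once $g$ has been conjugated into the torus fixing $M(1)$, which you rightly isolate as the only nontrivial normalization step.
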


\begin{remark}
    \label{Mod0}
    Let \(g\) be an automorphism of the vertex operator algebra \(V_{L_2}\) and of order \(T\). Let \(\xi=T\alpha\).
    Notice that \(V_{L_2}\cong V_{L_2}^g\) and \(V_{L_2+\frac{\alpha}{2}}\cong V_{L_2+\frac{\alpha}{2}}^g\), that is, \(V_{L_2}\) and \(V_{L_2+\frac{\alpha}{2}}\) are \(g\) stable.
    Likewise, \(V_{L_2}\) and \(V_{L_2+\alpha/2}\) are \(g^i\) stable, where \(i \in \mathbb{Z},\ 0\leq i \leq T-1\).
    Theorem \ref{minvariance} shows that there are exactly two irreducible \(g^i\) twisted modules, where \(i \in \mathbb{Z},\ 0\leq i \leq T-1\).
    Lemma \ref{decompSpace} and Theorem \ref{MT1} show that \(V_{\mathbb{Z}\xi+\frac{r}{T}\xi}\) for \(r = 0\ (\mathrm{mod}\ T)\) are irreducible \(V_{L_2}^{\langle g \rangle}\) modules occurring in \(g^0\) twisted \(V_{L_2}\) modules.
    Hence, \(V_{\mathbb{Z}\xi+\frac{r}{T}\xi}\) for \(r \neq 0\ (\mathrm{mod}\ T)\) are irreducible \(V_{L_2}^{\langle g \rangle}\) modules occurring in \(g^r\) twisted \(V_{L_2}\) modules, where \(r\in \mathbb{Z},\ 1\leq r \leq T-1\).
\end{remark}

\begin{theorem}
    Irreducible modules of \(V_{L_2}^{A_4}\) are
    \[
        (V^+_{\mathbb{Z}\beta})^0, \ (V^+_{\mathbb{Z}\beta})^1, (V^+_{\mathbb{Z}\beta})^2,
    \]
    \[
        V^-_{\mathbb{Z}\beta}, \ V_{\mathbb{Z}\beta+\frac{1}{8} \beta }, \ V_{\mathbb{Z}\beta+\frac{3}{8} \beta },
    \]
    \[
        V_{\mathbb{Z}\beta+\frac{1}{4} \beta }^0, \ V_{\mathbb{Z}\beta+\frac{1}{4} \beta }^1, \ V_{\mathbb{Z}\beta+\frac{1}{4} \beta }^2,
    \]
    \[
        V_{\mathbb{Z}\gamma \pm \frac{r}{18}\gamma}, \ \mathrm{for} \  r\in \mathbb{Z}, \ 1\leq r \leq 8,\  \mathrm{and}\ r \neq 0 (\mathrm\ {mod}\ 3).
    \]
\end{theorem}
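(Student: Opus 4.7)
The plan is to classify the irreducible \(V_{L_2}^{A_4}\)-modules by combining two viewpoints. First, by Lemma~\ref{lemmaK}, \(V_{L_2}^{A_4} = (V_{\mathbb{Z}\beta}^+)^{\langle\sigma\rangle}\), so the list of 21 irreducible \((V_{\mathbb{Z}\beta}^+)^{\langle\sigma\rangle}\)-modules quoted from Reference \cite{Dong201376} already gives a complete classification; the nine modules \((V_{\mathbb{Z}\beta}^+)^m\), \(V_{\mathbb{Z}\beta}^-\), \(V_{\mathbb{Z}\beta+\frac{1}{8}\beta}\), \(V_{\mathbb{Z}\beta+\frac{3}{8}\beta}\), \((V_{\mathbb{Z}\beta+\frac{1}{4}\beta})^n\) for \(m,n \in \{0,1,2\}\) appear in exactly the form stated in the theorem. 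To justify these, I would use Lemma~\ref{constructionLemma1} (for \(\sigma\)-stable \(V_{\mathbb{Z}\beta}^+\)-modules on which \(\sigma\) acts non-trivially, giving three \((V_{\mathbb{Z}\beta}^+)^{\langle\sigma\rangle}\)-pieces each) together with Lemma~\ref{constructionLemma2} (for non-\(\sigma\)-stable modules, where the entire \(\sigma\)-orbit of size three contributes one irreducible \((V_{\mathbb{Z}\beta}^+)^{\langle\sigma\rangle}\)-module).

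Second, for the twelve remaining type-two modules (the \(W^{i,T_j,k}\) of Reference \cite{Dong201376}), I would re-parametrize them via the cyclic subgroup \(\langle\sigma\rangle \subset A_4\). Since \(\sigma\) is a 3-cycle in \(A_4\), \(|C_{A_4}(\sigma)| = |A_4|/4 = 3\), so \(C_{A_4}(\sigma) = \langle\sigma\rangle\); Theorem~\ref{MT3} applied with \(H = \langle\sigma\rangle\) then shows that every type-two irreducible \(V_{L_2}^{\langle\sigma\rangle}\)-module is a type-two irreducible \(V_{L_2}^{A_4}\)-module. By Lemma~\ref{adtype}, \(V_{L_2}^{\langle\sigma\rangle} \cong V_{\mathbb{Z}\gamma}\) with \(\gamma = 3\alpha\) and \((\gamma,\gamma) = 18\), so the 18 irreducible \(V_{\mathbb{Z}\gamma}\)-modules are \(V_{\mathbb{Z}\gamma + \frac{i}{18}\gamma}\) for \(i = 0, 1, \ldots, 17\). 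Lemma~\ref{decompSpace} and Remark~\ref{Mod0} then identify the six indices \(i \equiv 0 \pmod 3\) with the pieces appearing inside the two untwisted \(V_{L_2}\)-modules, leaving the twelve indices \(i \not\equiv 0 \pmod 3\) as the type-two modules. Using \(V_{\mathbb{Z}\gamma + \frac{18-r}{18}\gamma} = V_{\mathbb{Z}\gamma - \frac{r}{18}\gamma}\) for \(r \in \{10,11,13,14,16,17\}\) compresses these twelve modules to \(V_{\mathbb{Z}\gamma \pm \frac{r}{18}\gamma}\) with \(r \in \{1,2,4,5,7,8\}\), matching the theorem's list.

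The main obstacle I expect is the detailed \(\sigma\)-orbit analysis of the eleven irreducible \(V_{\mathbb{Z}\beta}^+\)-modules needed in the first step, in particular tracking how \(\sigma\) acts on the \(\theta\)-twisted modules \((V_{\mathbb{Z}\beta}^{T_s})^{\pm}\) and on the coset modules \(V_{\mathbb{Z}\beta + \frac{r}{8}\beta}\) for \(r = 1, 2, 3\) and \(V_{\mathbb{Z}\beta + \frac{1}{2}\beta}^{\pm}\). The lattice viewpoint in the second step precisely bypasses the analogous computation for the twelve twisted modules, and the count \(9 + 12 = 21\) then confirms that the classification is complete.
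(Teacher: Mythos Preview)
Your proposal is correct and follows essentially the same approach as the paper: quote the nine type-one modules from Reference~\cite{Dong201376}, then use Theorem~\ref{MT3} with \(H=\langle\sigma\rangle\) together with Lemma~\ref{adtype} and Remark~\ref{Mod0} to realize the twelve type-two modules as the lattice modules \(V_{\mathbb{Z}\gamma\pm\frac{r}{18}\gamma}\). The only minor difference is in the last step, where the paper invokes Theorems~\ref{Qd1} and~\ref{Qd2} (a global-dimension count) to separate the twelve type-two modules as \(V_{L_2}^{A_4}\)-modules, whereas you match against the total count of 21 from Reference~\cite{Dong201376}; both arguments are valid.
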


\noindent Reference \cite{Dong201376} gives twenty one irreducible modules of \(V_{L_2}^{A_4}\) by analyzing the action of \(\sigma\) on irreducible modules of \(V_{\mathbb{Z}\beta}^{+}\), and by constructing all irreducible \(\sigma^i\), for \(i=1,2\), twisted \(V_{\mathbb{Z}\beta}^{+}\) modules.
Reference \cite{Dong201376} shows that twelve of the twenty one irreducible modules come from irreducible \(\sigma^i\), for \(i=1,2\), twisted \(V_{\mathbb{Z}\beta}^+\) modules. That is, they are of type two.
These twelve irreducible modules are found from irreducible \(V_{\mathbb{Z}\gamma}\) modules, by Reference \cite{Dong2015476}, without specifying twisted modules. \\
\\
\noindent Let \(H\)=\(\langle \sigma \rangle\). Then, \(H\subseteq A_4\) satisfies the assumption in Theorem \ref{MT3}.
Remark \ref{Mod0} shows that \(V_{\mathbb{Z}\gamma \pm \frac{r}{18}}\), for \( r\in \mathbb{Z}, \ 1\leq r \leq 8\), and \(r \neq 0 \ (\mathrm{mod} \ 3)\) are twelve irreducible \(V_{L_2}^{\langle \sigma \rangle}\) modules of type two.
Theorem \ref{MT3} indicates that the twelve irreducible \(V_{L_2}^{\langle \sigma \rangle}\) modules of type two are twelve irreducible \(V_{L_2}^{A_4}\) modules of type two, probably isomorphic under \(V_{L_2}^{A_4}\).
Also, theorem \ref{MT2} indicates that the twelve irreducible \(V_{L_2}^{\langle \sigma \rangle}\) modules of type two exhaust irreducible \(V_{L_2}^{A_4}\) modules of type two.
Thus, Theorems \ref{Qd1} and \ref{Qd2} show that the twelve irreducible \(V_{L_2}^{\langle \sigma \rangle}\) modules of type two are exactly the twelve irreducible \(V_{L_2}^{A_4}\) modules of type two.
That is, they are  nonisomorphic under \(V_{L_2}^{A_4}\). The same idea will be used to find irreducible modules of \(V_{L_2}^{S_4}\).

\section{Irreducible modules of \(V_{L_2}^{S_4}\)}

\noindent Let \(\zeta=4\alpha\), and \(\rho\) an automorphism of \(V_{L_2}\) such that \(\rho(x^1)=-x^1\), \(\rho(x^2)=x^3\), and \(\rho(x^2)=x^3\). That is,
\[
    \rho \left(x^{1},x^{2},x^{3}\right)=\left(x^{1},x^{2},x^{3}\right)\left[\begin{array}{ccc}
        -1 & 0 & 0\\
        0 & 0 & 1\\
        0 & 1 & 0
    \end{array}\right],
\]
Then, \(\rho \in S_4 \backslash A_4\), and \(V_{L_2}^{S^4}\cong (V_{L_2}^{A_4})^{\langle \rho \rangle}\).
Definition of automorphism on vertex operator algebra shows that \(\rho (\omega)=\omega\). Observe the table in the next theorem, by Reference \cite{Dong2015476}
\begin{theorem}
    By Reference \cite{miyamoto2015c_2OfCyclicCMP}, let \(V\) be a \(C_2\)-cofinite simple vertex operator algebra of CFT-type and \(\sigma\in \mathrm{Aut}(V)\) of finite order \(p\). Then a fixed point vertex operator subalgebra \(V^{\sigma}\) is also \(C_2\)-cofinite.
\end{theorem}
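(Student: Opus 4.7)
The plan is to prove that $V^\sigma / C_2(V^\sigma)$ is finite-dimensional by working with the eigenspace decomposition $V = \bigoplus_{r=0}^{p-1} V^r$, where $V^r = \{v \in V : \sigma v = e^{-2\pi i r/p} v\}$ and $V^0 = V^\sigma$. The strategy is to first establish the stronger statement that $V$ is ``$C_2$-cofinite over $V^\sigma$'' in the sense that the quotient $V / \spa\{a_{-2}w : a \in V^\sigma, \, \mathrm{wt}(a) > 0, \, w \in V\}$ is finite-dimensional, and then read off the conclusion at the $r = 0$ component, since intersecting that quotient with $V^\sigma$ recovers $V^\sigma / C_2(V^\sigma)$.

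First I would fix a finite set of homogeneous $\sigma$-eigenvectors $\{u_1, \ldots, u_N\}$ whose classes span $V / C_2(V)$; because $\sigma$ acts semisimply and $V / C_2(V)$ is finite-dimensional, the averaging operator $E = \frac{1}{p}\sum_{i=0}^{p-1}\sigma^i$ allows us to arrange for this set to be partitioned according to the $V^r$. The goal is to show that every $v \in V$ lies in $\sum_i V^\sigma \cdot u_i$ modulo the $V^\sigma$-$C_2$ subspace above. Given $v \in V$, apply $C_2$-cofiniteness of $V$ to write $v = \sum_j b^{(j)}_{-2}c^{(j)} + \sum_i \lambda_i u_i$ with $\lambda_i \in \C$, and then rewrite each term $b_{-2}c$ for which $b \notin V^\sigma$ by splitting $b = Eb + (b - Eb)$ and applying skew-symmetry $Y(b,z)c = e^{zL(-1)}Y(c,-z)b$ together with the Borcherds identity. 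This moves the ``outer'' factor into $V^\sigma$, at the cost of finitely many correction terms of strictly smaller conformal weight.

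The main obstacle, and the technical heart of the argument, is controlling this rewriting so that the induction on weight actually terminates. The subtlety is that replacing $b_{-2}c$ with $c_{-2}b$ plus $L(-1)$-derivative corrections, or expanding symmetrized sums $\sum_i \sigma^i(b)_{-2}\sigma^i(c)$, reintroduces products of $\sigma$-non-invariant factors that must themselves be re-expanded. In principle such a procedure could cascade indefinitely. The key observation is that each step strictly decreases a carefully chosen complexity invariant --- typically the pair (conformal weight, number of non-invariant factors) in lexicographic order --- while each weight space $V_n$ is finite-dimensional (a standard consequence of $C_2$-cofiniteness of $V$). Hence only finitely many generators are required modulo the $V^\sigma$-$C_2$ subspace.

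Finally, intersecting the resulting finite spanning set with $V^\sigma = V^0$ yields a finite set whose classes generate $V^\sigma / C_2(V^\sigma)$, which proves the theorem. I expect the delicate point to be precisely the bookkeeping in the third paragraph: ensuring that the correction terms produced by Borcherds and skew-symmetry remain within a uniformly bounded range of weights and $\sigma$-eigenspaces, so that the procedure closes in finitely many steps and does not generate an unbounded stream of new non-invariant products.
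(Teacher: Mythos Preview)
The paper does not prove this statement; it is quoted verbatim from Miyamoto and used as a black box, so there is no argument in the paper to compare against. Assessing your proposal on its own terms, there is a genuine gap at precisely the step you flag as the technical heart: the rewriting does not terminate under the invariant you propose. Applying skew-symmetry to $b_{-2}c$ with $b\in V^r$, $r\neq 0$, gives $b_{-2}c=-c_{-2}b+L(-1)(c_{-1}b)+(\text{lower-weight terms})$. The term $-c_{-2}b$ has the \emph{same} conformal weight and the same count of non-invariant factors---you have only swapped the roles of $b$ and $c$. Writing $L(-1)(c_{-1}b)=(c_{-1}b)_{-2}\mathbf{1}$ places the outer factor in $V^{r+s}$ (for $c\in V^s$), which is again non-invariant unless $r+s\equiv 0$, and even then the unresolved $c_{-2}b$ remains. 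Averaging by $E=\tfrac{1}{p}\sum_i\sigma^i$ does not help: the element $\sum_i\sigma^i(b)_{-2}\sigma^i(c)$ is $\sigma$-invariant, but each summand still has a non-invariant left factor, so expanding it regenerates the same problem. No lexicographic order on (weight, number of non-invariant factors) is strictly decreased by these moves, so the induction is not well-founded.

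This is not a bookkeeping detail that closes with more care; it is the entire difficulty of the theorem. Miyamoto's actual argument does not proceed by naive $C_2$-rewriting: it first establishes that $V$ and each eigenspace $V^r$ are $C_2$-cofinite \emph{as $V^\sigma$-modules}, which requires a structural analysis of the associated graded Poisson algebra together with finiteness results for twisted Zhu-type algebras and intertwining operators---machinery entirely absent from your sketch. Your reduction ``$V^\sigma/C_2(V^\sigma)$ finite $\Leftarrow$ $V$ is $C_2$-cofinite over $V^\sigma$'' is correct, but the second statement is essentially as hard as the first, and the argument you offer for it is circular.
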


\begin{remark}
    The preceding theorem shows that the vertex operator algebra \(V_{L_2}^{S^4}\cong (V_{L_2}^{A_4})^{\langle \rho \rangle}\) is \(C_2\) cofinite. Reference \cite{Dong201376} shows that \(V_{L_2}^{A_4}\) is rational. Also, the simpleness of \(V_{L_2}^{A_4}\)) shows that \(V_{L_2}^{A_4}\cong (V_{L_2}^{A_4})'\)). Thus, Condition I in Reference \cite{miyamoto2011Flat&Semi-Rigidity} is satisfied. Hence, Corollary 28 in Reference \cite{miyamoto2011Flat&Semi-Rigidity} indicates that \(V_{L_2}^{S^4}\cong (V_{L_2}^{A_4})^{\langle \rho \rangle}\) is rational.
\end{remark}

\begin{theorem} The quantum dimensions for all irreducible $\left(V_{\mathbb{Z}\beta}^{+}\right)^{\left\langle \sigma\right\rangle }$-modules
are given by the following tables $1-4$.
\begin{center}
\begin{tabular}{|c|c|c|c|c|c|c|}
\hline
 & $\left(V_{\mathbb{Z}\beta}^{+}\right)^{0}$ & $\left(V_{\mathbb{Z}\beta}^{+}\right)^{1}$ & $\left(V_{\mathbb{Z}\beta}^{+}\right)^{2}$  & $V_{\mathbb{Z}\beta}^{-}$  & $V_{\mathbb{Z}\beta+\frac{1}{8}\beta}$ & $V_{\mathbb{Z}\beta+\frac{3}{8}\beta}$\tabularnewline
\hline
$\text{\ensuremath{\omega}}$ & 0 & 4 & 4 & 1 & $\frac{1}{16}$ & $\frac{9}{16}$\tabularnewline
\hline
$\mbox{qdim}$ & $1$ & 1 & 1 & 3 & 6 & 6\tabularnewline
\hline
\end{tabular}
\par\end{center}

\begin{center}
\begin{tabular}{|c|c|c|c|c|c|c|}
\hline
 & $W^{1,T_{1},0}$ & $W^{1,T_{1},1}$  & $W^{1,T_{1},2}$ & $W^{2,T_{1},0}$ & $W^{2,T_{1},1}$ & $W^{2,T_{1},2}$\tabularnewline
\hline
$\omega$ & $\frac{1}{36}$ & $\frac{25}{36}$ & $\frac{49}{36}$ & $\frac{1}{9}$ & $\frac{4}{9}$ & $\frac{16}{9}$\tabularnewline
\hline
$\mbox{qdim}$ & 4 & 4 & 4 & 4 & 4 & 4\tabularnewline
\hline
\end{tabular}
\par\end{center}

\begin{center}
\begin{tabular}{|c|c|c|c|c|c|c|}
\hline
 & $W^{1,T_{2},0}$ & $W^{1,T_{2},1}$ & $W^{1,T_{2},2}$ & $W^{2,T_{2},0}$ & $W^{2,T_{2},1}$ & $W^{2,T_{2},2}$\tabularnewline
\hline
$\omega$ & $\frac{1}{36}$ & $\frac{25}{36}$ & $\frac{49}{36}$ & $\frac{1}{9}$ & $\frac{4}{9}$ & $\frac{16}{9}$\tabularnewline
\hline
$\mbox{qdim}$ & 4 & 4 & 4 & 4 & 4 & 4\tabularnewline
\hline
\end{tabular}
\par\end{center}

\begin{center}
\begin{tabular}{|c|c|c|c|}
\hline
 & $\left(V_{\mathbb{Z}\beta+\frac{1}{4}\beta}\right)^{0}$ & $\left(V_{\mathbb{Z}\beta+\frac{1}{4}\beta}\right)^{1}$ & $\left(V_{\mathbb{Z}\beta+\frac{1}{4}\beta}\right)^{2}$\tabularnewline
\hline
$\omega$ & $\frac{1}{4}$ & $\frac{9}{4}$ & $\frac{9}{4}$\tabularnewline
\hline
$\mbox{qdim}$ & 2 & 2 & 2\tabularnewline
\hline
\end{tabular}
\par\end{center}
\end{theorem}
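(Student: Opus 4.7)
The plan is to apply Theorem \ref{Qd1} to the extension \(V^{\langle\sigma\rangle}\subset V_{\mathbb{Z}\beta}^+\) (and a second time to \(V_{L_2}^{A_4}\subset V_{L_2}\) for the twisted sector), combined with a case analysis of how \(\sigma\) permutes the eleven irreducible \(V_{\mathbb{Z}\beta}^+\)-modules. The 21 irreducible \((V_{\mathbb{Z}\beta}^+)^{\langle\sigma\rangle}\)-modules partition into the 9 type-one entries in Tables 1 and 4 (arising from untwisted \(V_{\mathbb{Z}\beta}^+\)-modules) and the 12 type-two entries in Tables 2 and 3 (arising from \(\sigma^j\)-twisted \(V_{\mathbb{Z}\beta}^+\)-modules for \(j=1,2\)).

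First I would classify the \(\sigma\)-orbits on the irreducible \(V_{\mathbb{Z}\beta}^+\)-modules via conformal weights: \(V_{\mathbb{Z}\beta}^+\) (weight \(0\)) and \(V_{\mathbb{Z}\beta+\frac{1}{4}\beta}\) (weight \(\frac{1}{4}\)) are \(\sigma\)-stable because each is the unique irreducible module at its weight, while the three weight-\(1\) modules \(V_{\mathbb{Z}\beta}^-,V_{\mathbb{Z}\beta+\frac{1}{2}\beta}^\pm\) form one \(\sigma\)-orbit, the three weight-\(\frac{1}{16}\) modules \(V_{\mathbb{Z}\beta+\frac{1}{8}\beta},V_{\mathbb{Z}\beta}^{T_1,+},V_{\mathbb{Z}\beta}^{T_2,+}\) form another, and the three weight-\(\frac{9}{16}\) modules \(V_{\mathbb{Z}\beta+\frac{3}{8}\beta},V_{\mathbb{Z}\beta}^{T_1,-},V_{\mathbb{Z}\beta}^{T_2,-}\) form a third. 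For the two \(\sigma\)-stable modules, Theorem \ref{on quantum Galois theory} (applicable since \(\langle\sigma\rangle\cong\mathbb{Z}_3\) is solvable) splits each into three irreducible \(V^{\langle\sigma\rangle}\)-summands indexed by the characters of \(\mathbb{Z}_3\); Theorem \ref{Qd1} gives the total qdim as \(3\,\mathrm{qdim}_{V_{\mathbb{Z}\beta}^+}\), and the Galois-conjugacy of the three \(\mathbb{Z}_3\)-characters combined with Theorem \ref{qd3} forces the summands to share this qdim equally, yielding qdim \(1\) for each \((V_{\mathbb{Z}\beta}^+)^m\) and qdim \(2\) for each \((V_{\mathbb{Z}\beta+\frac{1}{4}\beta})^n\). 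For a module in a size-three \(\sigma\)-orbit, Lemma \ref{constructionLemma2} (sharpened by Theorem \ref{theoremConjugate} and Remark \ref{remarkConjugate}, which identify the three orbit members as one \(V^{\langle\sigma\rangle}\)-module) shows it stays irreducible, so Theorem \ref{Qd1} directly gives qdim \(3\,\mathrm{qdim}_{V_{\mathbb{Z}\beta}^+}\), producing qdim \(3\) for \(V_{\mathbb{Z}\beta}^-\) and qdim \(6\) for each of \(V_{\mathbb{Z}\beta+\frac{1}{8}\beta}\) and \(V_{\mathbb{Z}\beta+\frac{3}{8}\beta}\).

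For the twelve type-two entries, Lemma \ref{constructionLemma1} decomposes each of the four \(\sigma^j\)-twisted \(V_{\mathbb{Z}\beta}^+\)-modules \(W^{i,T_j}\) into three irreducible \(V^{\langle\sigma\rangle}\)-summands \(W^{i,T_j,k}\), and the uniform conformal-weight shifts by \(\frac{2}{3}\) and \(\frac{4}{3}\) visible in Tables 2 and 3 witness that the three summands are Galois-conjugate, whence Theorem \ref{qd3} forces their qdims to be equal. Theorem \ref{Qd1} gives the total qdim over \(V^{\langle\sigma\rangle}\) as \(3\,\mathrm{qdim}_{V_{\mathbb{Z}\beta}^+}W^{i,T_j}\).

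The main obstacle is computing \(\mathrm{qdim}_{V_{\mathbb{Z}\beta}^+}W^{i,T_j}\) for a twisted module directly. I would sidestep this by working in the larger tower \(V_{L_2}^{A_4}\subset V_{L_2}\): each \(W^{i,T_j}\) arises as a constituent of an irreducible \(g\)-twisted \(V_{L_2}\)-module for \(g\) of order \(3\) in \(A_4\), and such a twisted \(V_{L_2}\)-module has qdim \(1\) over \(V_{L_2}\); Theorem \ref{Qd1} applied at this level gives total qdim \(|A_4|=12\) over \(V_{L_2}^{A_4}\), which by the Galois-conjugacy argument above distributes equally across the three summands \(W^{i,T_j,k}\) as qdim \(4\) each. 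As a final verification, the identity \(\sum_M(\mathrm{qdim}\,M)^2 = 3+9+36+36+12+192 = 288 = |A_4|^2\,\mathrm{glob}(V_{L_2})\) holds by Theorem \ref{Qd2}, confirming both the module list and the tabulated quantum dimensions.
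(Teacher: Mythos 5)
Your overall route is sound and necessarily more detailed than the paper's: the paper states these tables without proof (they are imported from the reference on $V_{L_2}^{A_4}$, and for the parallel $V_{L}^{+}$ table the paper's proof is just a citation of Theorems \ref{Qd1} and \ref{Qd2}). Your ingredients -- the known quantum dimensions over $V_{\mathbb{Z}\beta}^{+}$ (here $k=4$), the $\sigma$-orbit analysis, the factor $|G|=3$ from Theorem \ref{Qd1}, and the closing check $3+9+36+36+12+192=288=|A_4|^{2}\,\mathrm{glob}(V_{L_2})$ via Theorem \ref{Qd2} -- do reproduce every tabulated value.

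Two steps are weaker than you present them. First, the ``equal distribution'' device via Galois conjugacy of the three characters of $\mathbb{Z}_3$ together with Theorem \ref{qd3} does not work as stated: the trivial character of $\mathbb{Z}_3$ is not Galois-conjugate to the two nontrivial ones, so this argument at best equates the quantum dimensions of the two nontrivial eigencomponents. For $(V_{\mathbb{Z}\beta}^{+})^{m}$ this is harmless, since $(V_{\mathbb{Z}\beta}^{+})^{0}$ is the orbifold algebra itself and has quantum dimension $1$; but for $(V_{\mathbb{Z}\beta+\frac14\beta})^{n}$ and, crucially, for the twelve $W^{i,T_j,k}$ the trivial-character component is left undetermined. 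What you actually need is the refined quantum-dimension formula from the orbifold reference behind Theorems \ref{MT1}--\ref{Qd2}, namely $\mathrm{qdim}_{V^{G}}M_{\lambda}=\dim(\lambda)\,[G:G_{M}]\,\mathrm{qdim}_{V}M$, which gives $1,1,1$, then $2,2,2$, then $4,4,4$ at once (Theorem \ref{MT1}, not the quantum Galois theorem for $V$ itself, is also the correct tool for splitting the stable module $V_{\mathbb{Z}\beta+\frac14\beta}$). Second, your assertion that an irreducible $\sigma$-twisted $V_{L_2}$-module has quantum dimension $1$ over $V_{L_2}$ is true but unproved in your write-up; it follows, for instance, from Remark \ref{Mod0} and Lemma \ref{adtype}: such a module restricts over $V_{L_2}^{\langle\sigma\rangle}\cong V_{\mathbb{Z}\gamma}$ to three lattice coset modules of quantum dimension $1$, so Theorem \ref{Qd1} with $G=\langle\sigma\rangle$ forces $\mathrm{qdim}_{V_{L_2}}=1$; with the refined formula above the detour through $V_{L_2}$ becomes unnecessary anyway. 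A smaller point: the $\sigma$-orbit structure of the weight-$1$, weight-$\frac{1}{16}$ and weight-$\frac{9}{16}$ triples is asserted rather than derived from conformal weights alone, though it is implicit in the list of $21$ irreducible modules you take as given, since the entries appearing unsplit there are exactly the non-$\sigma$-stable ones.
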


\noindent The fact that \(\rho (\omega)=\omega\) shows that the irreducible modules with distinguished \(\omega\) actions are \(\rho\) stable. That is,  \((Z_{\mathbb{Z}\beta}^{+})^{0}\), \(V_{\mathbb{Z}\beta}^{-}\), \(V_{\mathbb{Z}\beta+\frac{1}{8}\beta}\), \(V_{\mathbb{Z}\beta+\frac{3}{8}\beta}\), \((V_{\mathbb{Z}\beta+\frac{1}{4}\beta})^0\), are \(\rho\) stable.
For a \(\rho\)-invariant subspace \(W\) of \(V_{\mathfrak{h}}=M(1)\otimes \mathbb{C}[\mathfrak{h}]\), abuse the notation \(W^{\pm}\) for the \(\pm 1\)-eigenspaces of \(W\) under \(\rho\).
\begin{lemma}
    \label{lemma10Modules}
    The following 10 spaces are irreducible  \(V_{L_2}^{S_4}\) modules,
    \[
        ((V^+_{\mathbb{Z}\beta})^0)^+, \ ((V^+_{\mathbb{Z}\beta})^0)^-,
    \]
    \[
        (V^-_{\mathbb{Z}\beta})^+, \ (V^-_{\mathbb{Z}\beta})^-,
    \]
    \[
        (V_{\mathbb{Z}\beta+\frac{1}{8} \beta })^+, \ (V_{\mathbb{Z}\beta+\frac{1}{8} \beta })^-,
    \]
    \[
        (V_{\mathbb{Z}\beta+\frac{3}{8} \beta })^+, \ (V_{\mathbb{Z}\beta+\frac{3}{8} \beta })^-,
    \]
    \[
        (V_{\mathbb{Z}\beta+\frac{1}{4} \beta }^0)^+, \ ((V_{\mathbb{Z}\beta+\frac{1}{4} \beta }^0)^-.
    \]
\end{lemma}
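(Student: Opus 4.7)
The plan is to apply Theorem \ref{MT1} to the ambient vertex operator algebra $V_{L_2}^{A_4}$ with the automorphism group $\langle \rho \rangle$ of prime order $2$. Since $V_{L_2}^{S_4} \cong (V_{L_2}^{A_4})^{\langle \rho \rangle}$, the irreducible $V_{L_2}^{S_4}$-submodules of a $\rho$-stable irreducible $V_{L_2}^{A_4}$-module $M$ are precisely the two eigenspaces of the $\rho$-action on $M$. So the task reduces to (i) verifying the five modules in the list are $\rho$-stable, (ii) invoking Theorem \ref{MT1} to obtain irreducibility, and (iii) checking that neither eigenspace collapses to $0$.

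For (i), I would use the fact that $\rho(\omega) = \omega$, so $\rho$ permutes irreducible $V_{L_2}^{A_4}$-modules while preserving the conformal weight. Inspection of the four tables immediately preceding the lemma shows that each of $(V_{\mathbb{Z}\beta}^{+})^{0}$, $V_{\mathbb{Z}\beta}^{-}$, $V_{\mathbb{Z}\beta+\frac{1}{8}\beta}$, $V_{\mathbb{Z}\beta+\frac{3}{8}\beta}$, and $(V_{\mathbb{Z}\beta+\frac{1}{4}\beta})^{0}$ is the unique module in the list of $21$ irreducibles with its given conformal weight (namely $0$, $1$, $1/16$, $9/16$, $1/4$ respectively), forcing $\rho \circ M \cong M$ in each case. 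Note that the modules with repeated conformal weights (such as $(V_{\mathbb{Z}\beta}^{+})^{1}, (V_{\mathbb{Z}\beta}^{+})^{2}$ at weight $4$) are consistently omitted from the lemma.

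For (ii), given a $\rho$-stable $M$, the isomorphism $\varphi(\rho): \rho \circ M \to M$ recalled in Section 2 provides a linear action of $\rho$ on $M$, unique up to scalar; normalizing so that $\varphi(\rho)^{2} = \mathrm{Id}$, one obtains the decomposition $M = M^{+} \oplus M^{-}$. Since $\langle \rho \rangle$ is cyclic of prime order $2$, the relevant projective characters reduce to ordinary characters, so Theorem \ref{MT1}(1) applied with $G = \langle \rho \rangle$ and trivial twist $g = 1$ asserts that each nonzero eigenspace $M^{\pm}$ is an irreducible $(V_{L_2}^{A_4})^{\langle \rho \rangle} = V_{L_2}^{S_4}$-module.

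The main obstacle is step (iii): ensuring that both eigenspaces are nonzero for every one of the five $M$, so that no entry of the list is spurious. For $(V_{\mathbb{Z}\beta}^{+})^{0}$ this is immediate, since $\mathbf{1}$ lies in the $+1$-eigenspace while a level-two vector such as $x^{2}_{-1}x^{2}_{-1}\mathbf{1} - x^{3}_{-1}x^{3}_{-1}\mathbf{1}$ is a $-1$-eigenvector for $\rho$. For the other four modules I would carry out the analogous inspection on a few low-weight vectors, using the explicit matrix describing $\rho$ on $x^{1}, x^{2}, x^{3}$ together with the known construction of the coset modules $V_{\mathbb{Z}\beta + j\beta/8}$ and the $\sigma$-graded piece $(V_{\mathbb{Z}\beta + \beta/4})^{0}$, to exhibit nonzero vectors in each eigenspace. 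Once each eigenspace is shown to be nontrivial, all ten listed spaces are irreducible $V_{L_2}^{S_4}$-modules.
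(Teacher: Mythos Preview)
Your proposal is correct and follows the same route as the paper, which simply says ``The order of $\rho$ is $2$. So, directly employ Theorem~\ref{MT1}.'' The $\rho$-stability you justify in step~(i) is exactly what the paper establishes in the sentence preceding the lemma, and your step~(ii) is the invocation of Theorem~\ref{MT1}. Your step~(iii) is in fact unnecessary: Theorem~\ref{MT1}(1) already asserts that each $M_\lambda$ is an \emph{irreducible} $V^G$-module, hence nonzero, so once $\rho$-stability is known the nonvanishing of both eigenspaces is automatic and no explicit low-weight computation is required.
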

\begin{proof}
    The order of \(\rho\) is 2. So, directly employ Theorem \ref{MT1}.
\end{proof}

\begin{proposition}
    By Reference \cite{Dong2015476}, as irreducible \(V_{L_2}^{A_4}\) modules,
    \[
        V_{\mathbb{Z}\gamma+\frac{1}{18}\gamma}\cong W^{1,T_{1},0},\ V_{\mathbb{Z}\gamma-\frac{1}{18}\gamma}\cong W^{1,T_{2},0},
    \]
    \[
        V_{\mathbb{Z}\gamma+\frac{2}{18}\gamma}\cong W^{2,T_{2},0},\ V_{\mathbb{Z}\gamma-\frac{2}{18}\gamma}\cong W^{2,T_{1},0},
    \]
    \[
        V_{\mathbb{Z}\gamma+\frac{4}{18}\gamma}\cong W^{2,T_{1},1},\ V_{\mathbb{Z}\gamma-\frac{4}{18}\gamma}\cong W^{2,T_{2},1},
    \]
    \[
        V_{\mathbb{Z}\gamma+\frac{5}{18}\gamma}\cong W^{1,T_{2},1},\ V_{\mathbb{Z}\gamma-\frac{5}{18}\gamma}\cong W^{1,T_{1},1},
    \]
    \[
        V_{\mathbb{Z}\gamma+\frac{7}{18}\gamma}\cong W^{1,T_{1},2},\ V_{\mathbb{Z}\gamma-\frac{7}{18}\gamma}\cong W^{1,T_{2},2},
    \]
    \[
        V_{\mathbb{Z}\gamma+\frac{8}{18}\gamma}\cong W^{2,T_{2},2},\ V_{\mathbb{Z}\gamma-\frac{8}{18}\gamma}\cong W^{2,T_{1},2},
    \]
\end{proposition}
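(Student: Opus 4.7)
The plan is to show that the twelve modules $V_{\mathbb{Z}\gamma \pm \frac{r}{18}\gamma}$ and the twelve modules $W^{i,T_j,k}$ are two labellings for the same complete list of irreducible $V_{L_2}^{A_4}$-modules of type two, and then match them pair by pair. First, I invoke Lemma \ref{adtype} to identify $V_{L_2}^{\langle \sigma \rangle}$ with the rank-one lattice VOA $V_{\mathbb{Z}\gamma}$ where $\gamma = 3\alpha$, so that $|\gamma|^2 = 18$; its eighteen irreducible modules are then $V_{\mathbb{Z}\gamma + \frac{r}{18}\gamma}$ for $r \in \mathbb{Z}/18\mathbb{Z}$. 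Applying Lemma \ref{decompSpace} with $T=3$ and $\nu = \gamma$ exhibits the six with $r \equiv 0 \pmod 3$ as summands of the type-one modules $V_{L_2}$ and $V_{L_2 + \alpha/2}$, so by Remark \ref{Mod0} the remaining twelve, namely $V_{\mathbb{Z}\gamma \pm \frac{r}{18}\gamma}$ for $r \in \{1,2,4,5,7,8\}$, are of type two. Since $C_{A_4}(\sigma) = \langle \sigma \rangle$, Theorem \ref{MT3} promotes them to irreducible $V_{L_2}^{A_4}$-modules of type two, and by the count from Reference \cite{Dong201376} this exhausts the type-two list and yields a bijection with the $W^{i,T_j,k}$.

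Next I use the conformal weight as the primary matching invariant. The top weight of $V_{\mathbb{Z}\gamma + \lambda}$ is $\frac{1}{2}\langle \lambda, \lambda \rangle$, so for $\lambda = \pm\frac{r}{18}\gamma$ the conformal weight equals $r^2/36$. Matching this against the conformal weights in the tables for $W^{i,T_j,k}$ pins down the pair $(i, k)$ from $r$ alone: $r=1 \mapsto (1,0)$, $r=2 \mapsto (2,0)$, $r=4 \mapsto (2,1)$, $r=5 \mapsto (1,1)$, $r=7 \mapsto (1,2)$, and $r=8 \mapsto (2,2)$. Hence each pair $\{V_{\mathbb{Z}\gamma + \frac{r}{18}\gamma},\, V_{\mathbb{Z}\gamma - \frac{r}{18}\gamma}\}$ is identified with $\{W^{i,T_1,k},\, W^{i,T_2,k}\}$, reducing the problem to pairing each sign $\pm$ with a twist label $T_j$.

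For this last pairing I appeal to the twist data. By Remark \ref{Mod0}, $V_{\mathbb{Z}\gamma + \frac{r}{18}\gamma}$ is a $V_{L_2}^{\langle \sigma \rangle}$-summand of a $\sigma^{r \bmod 3}$-twisted $V_{L_2}$-module while $V_{\mathbb{Z}\gamma - \frac{r}{18}\gamma}$ is a summand of the $\sigma^{(-r) \bmod 3}$-twisted one; these two twists are distinct whenever $r \not\equiv 0 \pmod 3$, so the sign is rigidly coupled to the twist. Following the explicit construction of the $\sigma$- and $\sigma^2$-twisted $V_{\mathbb{Z}\beta}^+$-modules $W^{i,T_j}$ in Reference \cite{Dong201376} and tracking a single lowest-weight vector determines which of $T_1, T_2$ corresponds to $\sigma$-twist and which to $\sigma^2$-twist, yielding the specific twelve isomorphisms in the statement.

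The main obstacle is this final twist identification, since $L(0)$-eigenvalues alone cannot distinguish $V_{\mathbb{Z}\gamma + \frac{r}{18}\gamma}$ from $V_{\mathbb{Z}\gamma - \frac{r}{18}\gamma}$. The cleanest alternative to a direct cocycle computation is to compare twisted one-point trace functions on both sides: the character of $V_{\mathbb{Z}\gamma \pm \frac{r}{18}\gamma}$ equals $\eta(\tau)^{-1}\sum_{n \in \mathbb{Z}} q^{9(\pm r/18 + n)^2}$, and by the modular invariance of twisted characters recalled after Theorem \ref{grational} this character must appear in the $S$-transform of the appropriate $W^{i,T_j,k}$-trace; matching the leading $q$-expansions then pins down the sign/twist pairing without further calculation.
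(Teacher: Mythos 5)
First, note that the paper itself gives no argument for this proposition: it is quoted verbatim from Reference \cite{Dong2015476}, so your reconstruction is necessarily being compared against the cited source's conventions rather than an in-paper proof. Your overall strategy is sound and is essentially the natural verification: identify $V_{L_2}^{\langle\sigma\rangle}$ with $V_{\mathbb{Z}\gamma}$, $\langle\gamma,\gamma\rangle=18$ (Lemma \ref{adtype}, Remark \ref{remarkConjugate}), use Lemma \ref{decompSpace} and Remark \ref{Mod0} to see that the twelve $V_{\mathbb{Z}\gamma\pm\frac{r}{18}\gamma}$, $r\in\{1,2,4,5,7,8\}$, are exactly the type-two modules, promote them via Theorem \ref{MT3} (with $C_{A_4}(\sigma)=\langle\sigma\rangle$), and match lowest conformal weights $r^{2}/36$ against the table of the $W^{i,T_j,k}$. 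That weight computation is correct and does pin down the pair $(i,k)$ uniquely, since each weight $1/36,\,1/9,\,4/9,\,25/36,\,49/36,\,16/9$ occurs exactly once for each twist label; and your observation that the sign $\pm$ is rigidly coupled to the $\sigma$- versus $\sigma^{2}$-twisted sector is the right mechanism behind the alternating $T_1/T_2$ pattern in the statement.

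The genuine gap is in your proposed shortcut for the last step. You suggest that comparing $q$-characters and their modular $S$-transforms ``pins down the sign/twist pairing without further calculation,'' but the ordinary characters of $V_{\mathbb{Z}\gamma+\frac{r}{18}\gamma}$ and $V_{\mathbb{Z}\gamma-\frac{r}{18}\gamma}$ are identical ($\eta(\tau)^{-1}\sum_{n}q^{9(n\pm r/18)^{2}}$ is symmetric under $r\mapsto -r$), so no amount of matching leading $q$-expansions of untwisted traces can separate the two members of a $\pm$ pair, nor decide which of $T_1,T_2$ is the $\sigma$-twist. What is actually needed is either the graded traces with a $\sigma$-insertion (the functions $\mathrm{tr}_M\,\sigma^{n}q^{L(0)-c/24}$ and their transforms into the twisted sectors) or, as in your primary route, an explicit check of the action on a lowest-weight vector in the twisted-module construction of References \cite{Dong201376} and \cite{Dong2015476}; the latter is a real computation tied to those papers' conventions for $\sigma$, $T_1$, $T_2$, and you defer it rather than carry it out. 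So your proof establishes the matching up to the simultaneous interchange $T_1\leftrightarrow T_2$ (equivalently $+\leftrightarrow-$), and the precise form stated in the proposition still rests on the cited reference. For the purposes of this paper the ambiguity is harmless, since Lemma \ref{lemma6Pairs} later identifies the two members of each pair as $V_{L_2}^{S_4}$-modules anyway, but it should be acknowledged rather than claimed to follow from character matching.
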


\noindent Let \(H\)=\(\langle \sigma \rangle\). Then, \(H\subseteq S_4\) satisfies the assumption in Theorem \ref{MT3}.
Remark \ref{Mod0} shows that \(V_{\mathbb{Z}\gamma \pm \frac{r}{18}}\), for \( r\in \mathbb{Z}, \ 1\leq r \leq 8\), and \(r \neq 0 (\mathrm{mod} 3)\) are twelve irreducible \(V_{L_2}^{\langle \sigma \rangle}\) modules of type two.
Theorem \ref{MT1} indicates that the twelve irreducible \(V_{L_2}^{\langle \sigma \rangle}\) modules of type two are twelve irreducible \(V_{L_2}^{S_4}\) modules of type two, probably isomorphic under \(V_{L_2}^{S_4}\), of type two. \\
\\
\noindent As \(V_{L_2}^{D_3}\cong V_{\mathbb{Z}\gamma}^{+}\) irreducible modules, lemma \ref{adtype} shows that
\[
    V_{\mathbb{Z}\gamma+\frac{1}{18}\gamma}\cong V_{\mathbb{Z}\gamma-\frac{1}{18}\gamma},
    V_{\mathbb{Z}\gamma+\frac{2}{18}\gamma}\cong V_{\mathbb{Z}\gamma-\frac{2}{18}\gamma},
    V_{\mathbb{Z}\gamma+\frac{4}{18}\gamma}\cong V_{\mathbb{Z}\gamma-\frac{4}{18}\gamma},
\]
\[
    V_{\mathbb{Z}\gamma+\frac{5}{18}\gamma}\cong V_{\mathbb{Z}\gamma-\frac{5}{18}\gamma},
    V_{\mathbb{Z}\gamma+\frac{7}{18}\gamma}\cong V_{\mathbb{Z}\gamma-\frac{7}{18}\gamma},
    V_{\mathbb{Z}\gamma+\frac{8}{18}\gamma}\cong V_{\mathbb{Z}\gamma-\frac{8}{18}\gamma},
\]
\\
\noindent Notice that \(D_3 \subset S_4\), that is, \(V_{L_2}^{S^4}\) is a vertex operator subalgebra of \(V_{L_2}^{D_3}\). Thus, as \(V_{L_2}^{S_4}\) irreducible modules,
\[
    V_{\mathbb{Z}\gamma+\frac{1}{18}\gamma}\cong V_{\mathbb{Z}\gamma-\frac{1}{18}\gamma},
    V_{\mathbb{Z}\gamma+\frac{2}{18}\gamma}\cong V_{\mathbb{Z}\gamma-\frac{2}{18}\gamma},
    V_{\mathbb{Z}\gamma+\frac{4}{18}\gamma}\cong V_{\mathbb{Z}\gamma-\frac{4}{18}\gamma},
\]
\[
    V_{\mathbb{Z}\gamma+\frac{5}{18}\gamma}\cong V_{\mathbb{Z}\gamma-\frac{5}{18}\gamma},
    V_{\mathbb{Z}\gamma+\frac{7}{18}\gamma}\cong V_{\mathbb{Z}\gamma-\frac{7}{18}\gamma},
    V_{\mathbb{Z}\gamma+\frac{8}{18}\gamma}\cong V_{\mathbb{Z}\gamma-\frac{8}{18}\gamma},
\]
\begin{lemma}
    \label{lemma6Pairs}
    Definition of \(g\) stable module shows that these twelve irreducible \(V_{L_2}^{\langle \sigma \rangle}\) modules are not \(\rho\) stable as and
    \[
        V_{\mathbb{Z}\gamma+\frac{1}{18}\gamma}^{\rho}\cong V_{\mathbb{Z}\gamma-\frac{1}{18}\gamma},
        V_{\mathbb{Z}\gamma+\frac{2}{18}\gamma}^{\rho}\cong V_{\mathbb{Z}\gamma-\frac{2}{18}\gamma},
        V_{\mathbb{Z}\gamma+\frac{4}{18}\gamma}^{\rho}\cong V_{\mathbb{Z}\gamma-\frac{4}{18}\gamma},
    \]
    \[
        V_{\mathbb{Z}\gamma+\frac{5}{18}\gamma}^{\rho}\cong V_{\mathbb{Z}\gamma-\frac{5}{18}\gamma},
        V_{\mathbb{Z}\gamma+\frac{7}{18}\gamma}^{\rho}\cong V_{\mathbb{Z}\gamma-\frac{7}{18}\gamma},
        V_{\mathbb{Z}\gamma+\frac{8}{18}\gamma}^{\rho}\cong V_{\mathbb{Z}\gamma-\frac{8}{18}\gamma},
    \]
\end{lemma}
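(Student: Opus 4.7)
The plan is to combine the conjugation identity \(\rho\sigma\rho^{-1}=\sigma^{-1}\) with the invariance \(\rho(\omega)=\omega\), and then match candidate images against the correspondence \(V_{\mathbb{Z}\gamma\pm\frac{r}{18}\gamma}\cong W^{i,T_{j},k}\) from the proposition just before the lemma. First I would verify by a direct matrix computation that \(\rho^{2}=1\) and \(\rho\sigma\rho^{-1}=\sigma^{-1}\), so \(\rho\) normalizes \(\langle\sigma\rangle\) and defines an involutive automorphism of \(V_{L_{2}}^{\langle\sigma\rangle}\). Invoking the categorical action \(M\mapsto h\circ M\) recalled in Section~2, this involution carries any \(\sigma\)-twisted \(V_{L_{2}}\)-module to a \(\sigma^{-1}\)-twisted one; restricting to \(V_{L_{2}}^{\langle\sigma\rangle}\), the assignment \(N\mapsto N^{\rho}\) accordingly interchanges the six irreducible pieces appearing in \(W^{1,T_{1}}\oplus W^{2,T_{1}}\) with the six appearing in \(W^{1,T_{2}}\oplus W^{2,T_{2}}\). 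Because \(\langle\sigma\rangle\) is abelian and contains no element conjugating \(\sigma\) to \(\sigma^{-1}\), the \(\sigma\)- and \(\sigma^{-1}\)-twisted sectors lie in distinct \(\langle\sigma\rangle\)-orbits, and Theorem~\ref{MT1} applied with \(G=\langle\sigma\rangle\) then guarantees that these two collections of irreducibles are pairwise non-isomorphic; in particular each of the twelve listed modules automatically fails to be \(\rho\)-stable.

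Next, since \(\rho(\omega)=\omega\), the twisted Virasoro operator \(Y^{\rho}(\omega,z)=Y(\omega,z)\) is unchanged, so \(N^{\rho}\) has the same conformal weight as \(N\). Hence \(V_{\mathbb{Z}\gamma+\frac{r}{18}\gamma}^{\rho}\) is an irreducible \(V_{L_{2}}^{\langle\sigma\rangle}\)-module of weight \(r^{2}/36\) lying in the twist sector opposite to the one containing \(V_{\mathbb{Z}\gamma+\frac{r}{18}\gamma}\). Inspecting the conformal-weight tables together with the proposition immediately before the lemma, the unique irreducible with both properties is \(V_{\mathbb{Z}\gamma-\frac{r}{18}\gamma}\), which yields the claimed isomorphism.

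The delicate point will be the twist-sector swap: I must confirm that the restriction of \(M\mapsto\rho\circ M\) from twisted \(V_{L_{2}}\)-modules to \(V_{L_{2}}^{\langle\sigma\rangle}\)-modules genuinely implements \(N\mapsto N^{\rho}\) on every irreducible summand, and that no accidental isomorphism across distinct \(\sigma^{a}\)-twisted sectors can occur. Both reductions ultimately come down to applications of Lemma~\ref{constructionLemma1} and Theorem~\ref{MT1}, but they must be tracked carefully through the decomposition of each \(W^{i,T_{j}}\) into its three irreducible \(V_{L_{2}}^{\langle\sigma\rangle}\)-pieces.
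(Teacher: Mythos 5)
Your argument is correct, but it is not the route the paper takes. The paper gets the pairing from the fixed-point picture: by Lemma \ref{adtype} one has \(V_{L_2}^{D_3}\cong V_{\mathbb{Z}\gamma}^{+}\), and the classification of irreducible \(V_L^{+}\)-modules (Theorem \ref{theoremV_L^+}) gives \(V_{\mathbb{Z}\gamma+\frac{r}{18}\gamma}\cong V_{\mathbb{Z}\gamma-\frac{r}{18}\gamma}\) as \(V_{\mathbb{Z}\gamma}^{+}\)-modules; since \(\rho\) restricted to \(V_{L_2}^{\langle\sigma\rangle}\cong V_{\mathbb{Z}\gamma}\) has fixed points \(V_{\mathbb{Z}\gamma}^{+}\), it acts as a lift of the \(-1\) isometry, and the lemma is then read off (together with the inclusion \(V_{L_2}^{S_4}\subset V_{L_2}^{D_3}\)), with the non-stability coming from the fact that \(V_{\mathbb{Z}\gamma\pm\frac{r}{18}\gamma}\) are inequivalent \(V_{\mathbb{Z}\gamma}\)-modules. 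You instead never touch \(V_{\mathbb{Z}\gamma}^{+}\): you use \(\rho\sigma\rho^{-1}=\sigma^{-1}\) to swap the \(\sigma\)- and \(\sigma^{2}\)-twisted sectors under \(M\mapsto M\circ\rho\), invoke Theorem \ref{MT1} with \(G=\langle\sigma\rangle\) (the abelian-orbit observation is exactly what makes the two sectors lie in distinct orbits) to get inequivalence across sectors and hence non-\(\rho\)-stability, and then pin down the image by \(\rho(\omega)=\omega\) together with the fact that the six lowest conformal weights within each sector are pairwise distinct. The ``delicate point'' you defer is genuinely harmless: if \(N\subseteq M\) is a \(V_{L_2}^{\langle\sigma\rangle}\)-submodule of a \(\sigma^{r}\)-twisted module \(M\), the same subspace inside \(M\circ\rho\) carries exactly the action \(Y(\rho^{-1}\cdot,z)\), i.e.\ is \(N^{\rho}\), and ``no accidental isomorphism across sectors'' is precisely part (2) of Theorem \ref{MT1}; alternatively \(N^{\rho}\) is an irreducible module over the lattice vertex operator algebra \(V_{\mathbb{Z}\gamma}\), so the weight already restricts it to \(V_{\mathbb{Z}\gamma\pm\frac{r}{18}\gamma}\) and the sector swap excludes the plus sign. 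One cosmetic caution: the \(W^{i,T_j}\) are \(\sigma^{j}\)-twisted \(V_{\mathbb{Z}\beta}^{+}\)-modules, not twisted \(V_{L_2}\)-modules, so you should run the sector bookkeeping consistently in one setting (most cleanly \(V=V_{L_2}\), \(G=\langle\sigma\rangle\), using Remark \ref{Mod0} to place \(V_{\mathbb{Z}\gamma\pm\frac{r}{18}\gamma}\) in the \(\sigma^{\pm r}\)-sectors) rather than mixing the two. What each approach buys: the paper's is shorter given the known \(V_L^{+}\) classification, but leaves the precise conjugation statement implicit; yours stays inside the orbifold machinery already assembled in the paper and delivers both the non-stability and the exact \(\rho\)-image explicitly.
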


\begin{remark}
    \label{remarkExactly1}
    Actions of \(\omega\) on the first level of a module show that those six irreducible \(V_{L_2}^{S_4}\) modules are not isomorphic.
    Theorem \ref{MT2} indicates that those six irreducible \(V_{L_2}^{\langle \sigma \rangle}\) modules exhaust irreducible \(V_{L_2}^{S_4}\) modules occurring in irreducible \(\sigma^i\), for \(i=1,2\), twisted \(V_{L_2}\) modules.
    So, those six irreducible \(V_{L_2}^{S_4}\) modules are exactly the six irreducible \(V_{L_2}^{S_4}\) modules occurring in irreducible \(\sigma^i\), for \(i=1,2\), twisted \(V_{L_2}\) modules.
\end{remark}

\begin{proposition}
    \label{propNinaG}
    By Reference \cite{Dong2015476}, let \(g\) be an automorphism of \(V_{L_2}\) of order \(T\neq 1\). Then there exists some vector \(u\in (V_{L_2})_1\), such that \(g=e^{2\pi i u(0)}\).
\end{proposition}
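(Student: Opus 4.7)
The plan is to reduce the problem to the action of $g$ on the weight-one subspace $(V_{L_2})_1 \cong sl_2(\mathbb{C})$ and invoke the structure theory of $SO(3)$. Since $V_{L_2}$ is isomorphic to the level-one simple affine vertex operator algebra of type $A_1$, it is generated as a vertex operator algebra by $(V_{L_2})_1$, so any automorphism is uniquely determined by its restriction to $(V_{L_2})_1$ and the restriction map $\mathrm{Aut}(V_{L_2}) \to \mathrm{Aut}((V_{L_2})_1)$ is injective. For any $u \in (V_{L_2})_1$, the zero mode $u(0)$ is a derivation of $V_{L_2}$, and hence $e^{2\pi i u(0)}$ is a vertex operator algebra automorphism that acts on $(V_{L_2})_1$ as $e^{2\pi i \mathrm{ad}(u)}$. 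It therefore suffices to find $u \in (V_{L_2})_1$ such that $g|_{(V_{L_2})_1} = e^{2\pi i \mathrm{ad}(u)}$.

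First I would argue that the finite cyclic group $\langle g \rangle$ is conjugate inside $\mathrm{Aut}(V_{L_2})$ into the compact connected subgroup $SO(3)$ appearing in Lemma \ref{adtype}. Here $\langle g \rangle$ is compact (since it is finite), and every compact subgroup of the complex automorphism group $PSL_2(\mathbb{C}) \cong \mathrm{Aut}(sl_2(\mathbb{C}))$ is contained in some maximal compact subgroup, all of which are conjugate to $SO(3) = PSU(2)$. So there exists $h \in \mathrm{Aut}(V_{L_2})$ with $h^{-1} g h \in SO(3)$. Next I use the classical fact that every element of $SO(3)$ is a rotation about some axis: this produces a vector $u_0$ in the real span of $\{x^1, x^2, x^3\}$, normalized so that $e^{2\pi i \mathrm{ad}(u_0)}$ has order exactly $T$, satisfying $h^{-1} g h = e^{2\pi i u_0(0)}$.

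Finally, since $h$ is a vertex operator algebra automorphism, the intertwining identity $h \cdot v(0) \cdot h^{-1} = (hv)(0)$ for $v \in (V_{L_2})_1$ gives
\[
    g = h \cdot e^{2\pi i u_0(0)} \cdot h^{-1} = e^{2\pi i (hu_0)(0)},
\]
so $u := h u_0 \in (V_{L_2})_1$ is the desired vector.

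The hard part will be the first step, namely showing that $\langle g \rangle$ can be conjugated into $SO(3)$. This requires a clean identification of $\mathrm{Aut}(V_{L_2})$ with a specific Lie group acting on $(V_{L_2})_1$, which in turn rests on the fact that the restriction map $\mathrm{Aut}(V_{L_2}) \to \mathrm{Aut}((V_{L_2})_1)$ has trivial kernel (coming from the generation of $V_{L_2}$ by its weight-one subspace). Once that identification is established, conjugation of compact subgroups into a maximal compact is standard Lie theory, and the explicit rotation description in $SO(3)$ immediately yields the exponential form $e^{2\pi i u(0)}$.
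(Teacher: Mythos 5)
The paper does not actually prove this proposition: it is quoted verbatim from Reference \cite{Dong2015476}, so there is no internal argument to compare yours against. Judged on its own, your sketch is essentially sound and is the natural way to prove the statement. Reducing to $(V_{L_2})_1\cong sl_2(\mathbb{C})$ is legitimate because $V_{L_2}$ (the level-one affine VOA of type $A_1$) is generated by its weight-one subspace, so the restriction map $\mathrm{Aut}(V_{L_2})\to \mathrm{Aut}(sl_2(\mathbb{C}))\cong PSL_2(\mathbb{C})$ is injective, and the intertwining identity $h\,u_0(0)\,h^{-1}=(hu_0)(0)$ correctly transports the exponential form back through the conjugation.

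Two points deserve more care than you give them. First, the step you yourself flag as hard can be closed without computing $\mathrm{Aut}(V_{L_2})$ exactly: what you need is only that every element of $PSL_2(\mathbb{C})$ acting on $(V_{L_2})_1$ is realized by a genuine VOA automorphism, so that the conjugator $h$ produced by Lie theory lies in $\mathrm{Aut}(V_{L_2})$. This holds because the inner automorphisms $e^{v(0)}$, $v\in (V_{L_2})_1$, already exhaust the connected group $PSL_2(\mathbb{C})$ in its action on the weight-one Lie algebra; without saying this, your phrase ``there exists $h\in\mathrm{Aut}(V_{L_2})$'' is unjustified, since a priori the maximal-compact theorem only hands you $h$ in $\mathrm{Aut}(sl_2(\mathbb{C}))$. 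Second, the detour through maximal compact subgroups and rotations is heavier than necessary: a finite-order element of $PSL_2(\mathbb{C})$ is semisimple, hence lies in a maximal torus, hence equals $e^{2\pi i\,\mathrm{ad}(u)}$ for $u$ in a Cartan subalgebra of $sl_2(\mathbb{C})$ (e.g.\ a multiple of $x^1$ after conjugation), which gives the same conclusion with less machinery and is closer in spirit to how such statements are established in \cite{Dong2015476}. With the realization of $PSL_2(\mathbb{C})$ by inner automorphisms made explicit, your argument is complete.
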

\noindent The action of \(\{e^{2\pi i h(0)}|h\in (V_{L_2})_{1}\}\) on \(V_{L_2}\) is isomorphic to \(SO(3)\), and on \(V_{\mathbb{Z}\frac{\alpha}{2}}\) is isomorphic to \(SU(2)\).
The action of the group generated by \(\{ \sigma, \tau_{1}, \tau_{2}, \tau_{3}, \rho\}\) on \(V_{L_2}\) is isomorphic to \(S_4\). Proposition \ref{propNinaG} shows that \(\langle \sigma, \tau_{1}, \tau_{2}, \tau_{3}, \rho \rangle\) is a subgroup of \(\{e^{2\pi i h(0)}|h\in (V_{L_2})_{1}\}\).
So, \(\langle \sigma, \tau_{1}, \tau_{2}, \tau_{3}, \rho \rangle\) acts on \(V_{\mathbb{Z}\frac{\alpha}{2}}=M(1)\otimes \mathbb{C}[\frac{1}{2}\mathbb{Z}\alpha]\).
Theorem \ref{theoremSchur} is the action of the group \(\langle \sigma, \tau_{1}, \tau_{2}, \tau_{3}, \rho \rangle\) on \(V_{\mathbb{Z}\frac{\alpha}{2}}\) is a Schur cover of \(S_4\), which is isomorphic to \(GL(2,3)\), the general linear group of degree 2 over a field of three elements. Thus, by the quantum Galois theory \cite{dong1997}
\begin{equation}
    \label{qdEquation}
    V_{\mathbb{Z}\frac{\alpha}{2}}\cong \bigoplus_{\chi}V_{\chi}\otimes W_{\chi}
\end{equation}
\noindent , where \(\chi\) runs over all irreducible characters of \(GL(2,3)\).
The irreducible representations of the group \(GL(2,3)\) are well known, two 1-dimensional, three 2-dimensional, two 3-dimensional, and one 4-dimensional irreducible representations.
Denote them by \(W_{1}^{i},W_2^{j},W_3^{k}, W_{4}\), where \(i=0, 1,\ j=0, 1, 2, \ k=0, 1\).
The subindex is the dimensional of the module and the upper indices distinguish the irreducible modules of the same dimension. \\
\\
Reference \cite{Dong2015476} shows that
\begin{equation}
    \label{equation1}
    V_{L_2}=(V_{\mathbb{Z}\beta}^+)^{0}\otimes U_{1}^{0} \oplus (V_{\mathbb{Z}\beta}^+)^{1}\otimes U_{1}^{1} \oplus (V_{\mathbb{Z}\beta}^+)^{2}\otimes U_{1}^{2}\oplus
    V_{\mathbb{Z}\beta}^{-}\otimes U_{3},
\end{equation}
\noindent and
\begin{equation}
    \label{equation2}
    V_{L_2+\frac{1}{2} \alpha}=V_{\mathbb{Z}\beta+\frac{1}{4}\beta}^{0}\otimes U_{2}^{0} \oplus V_{\mathbb{Z}\beta+\frac{1}{4}\beta}^{1}\otimes U_{2}^{1} \oplus V_{\mathbb{Z}\beta+\frac{1}{4}\beta}^{2}\otimes U_{2}^{2}
\end{equation}

\begin{lemma}
    \label{lemma2Pairs}
    The following isomorphisms hold,
    \[
        ((V_{\mathbb{Z}\beta}^+)^{1})^{\rho}\cong (V_{\mathbb{Z}\beta}^+)^{2},
    \]
\noindent and
    \[
        (V_{\mathbb{Z}\beta+\frac{1}{4}\beta}^{1})^{\rho} \cong V_{\mathbb{Z}\beta+\frac{1}{4}\beta}^{2}.
    \]
\end{lemma}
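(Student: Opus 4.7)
The plan is to realize $\rho$ itself, appropriately restricted or lifted, as the required $V_{L_2}^{A_4}$-module intertwiner. The key tool is the identity
\[
\rho\bigl(Y(\rho^{-1}u,z)v\bigr)=Y(u,z)\rho(v),
\]
which holds whenever $\rho$ is a vertex operator algebra automorphism and is precisely the condition that the linear map $\rho$ intertwines the $\rho$-twisted module structure on the source with the untwisted structure on the target; combining it with an explicit commutation relation between $\rho$ and $\sigma$ produces both isomorphisms.

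I would first establish $\rho\sigma\rho^{-1}=\sigma^{-1}$ by a direct $3\times 3$ matrix computation on the orthonormal basis $x^{1},x^{2},x^{3}$ of $(V_{L_2})_1$: multiplying the defining matrices $R$ and $S$ of $\rho$ and $\sigma$ yields $RSR^{-1}=S^{2}=S^{-1}$, and since $V_{L_2}$ is generated by its weight-one subspace this lifts to the corresponding identity of automorphisms of $V_{L_2}$. Because $K=\langle\tau_{1},\tau_{2},\tau_{3}\rangle$ is normal in $S_{4}$, the subalgebra $V_{\mathbb{Z}\beta}^{+}=V_{L_2}^{K}$ is preserved by $\rho$, and hence so is $V_{L_2}^{A_{4}}=(V_{\mathbb{Z}\beta}^{+})^{\langle\sigma\rangle}$. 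For $v\in(V_{\mathbb{Z}\beta}^{+})^{1}$ the commutation relation gives
\[
\sigma(\rho v)=\rho(\sigma^{-1}v)=e^{-2\pi i/3}\rho v,
\]
so $\rho v\in(V_{\mathbb{Z}\beta}^{+})^{2}$, and $\rho$ restricts to a linear isomorphism $(V_{\mathbb{Z}\beta}^{+})^{1}\to(V_{\mathbb{Z}\beta}^{+})^{2}$. Combined with the displayed identity above, this proves $((V_{\mathbb{Z}\beta}^{+})^{1})^{\rho}\cong(V_{\mathbb{Z}\beta}^{+})^{2}$ as $V_{L_2}^{A_{4}}$-modules.

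For the second isomorphism the same scheme applies once a concrete action of $\rho$ on the larger space $V_{L_2+\frac{1}{2}\alpha}$ is available. As explained in the paragraph preceding the lemma, such an action is obtained by lifting $\rho$ through $\pi:SU(2)\to SO(3)$ to the Schur cover $GL(2,3)\subset SU(2)$, which acts on $V_{\mathbb{Z}\alpha/2}=V_{L_2}\oplus V_{L_2+\frac{1}{2}\alpha}$ and therefore restricts to a linear automorphism $\tilde\rho$ of $V_{L_2+\frac{1}{2}\alpha}$ covering $\rho$, in the sense that $\tilde\rho Y(u,z)=Y(\rho u,z)\tilde\rho$ for all $u\in V_{L_2}$. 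Choosing a similar lift $\tilde\sigma$ of $\sigma$, the matrix identity above lifts to $\tilde\rho\tilde\sigma\tilde\rho^{-1}=\tilde\sigma^{-1}$ up to a central element of the Schur cover, which acts by an overall nonzero scalar. Identifying the $V_{\mathbb{Z}\beta+\frac{1}{4}\beta}^{i}$ with the three $\tilde\sigma$-eigenspaces of $V_{L_2+\frac{1}{2}\alpha}$ via Lemma~\ref{decompSpace}, I conclude that $\tilde\rho$ fixes $V_{\mathbb{Z}\beta+\frac{1}{4}\beta}^{0}$ (in agreement with Lemma~\ref{lemma10Modules}) and interchanges $V_{\mathbb{Z}\beta+\frac{1}{4}\beta}^{1}$ with $V_{\mathbb{Z}\beta+\frac{1}{4}\beta}^{2}$. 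Applying the intertwining identity once more delivers $(V_{\mathbb{Z}\beta+\frac{1}{4}\beta}^{1})^{\rho}\cong V_{\mathbb{Z}\beta+\frac{1}{4}\beta}^{2}$.

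The main technical obstacle is the Schur-cover step: I must verify that the central scalar ambiguity in $\tilde\rho\tilde\sigma\tilde\rho^{-1}=\tilde\sigma^{-1}$ only rescales the resulting intertwiner by a nonzero constant and does not destroy the module isomorphism, and that the labelling of the $\tilde\sigma$-eigenspaces in $V_{L_2+\frac{1}{2}\alpha}$ matches that of the $V_{\mathbb{Z}\beta+\frac{1}{4}\beta}^{i}$ in the paper's conventions. Once these bookkeeping issues are settled, everything reduces to the same elementary intertwining argument used for the first isomorphism.
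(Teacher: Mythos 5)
Your argument for the first isomorphism is correct, and it is a genuinely different route from the paper's: you verify $\rho\sigma\rho^{-1}=\sigma^{-1}$ on $(V_{L_2})_1$, note that $K$ is normal so $\rho$ preserves $V_{\mathbb{Z}\beta}^{+}$, and then the restriction of $\rho$ itself intertwines $((V_{\mathbb{Z}\beta}^{+})^{1})^{\rho}$ with $(V_{\mathbb{Z}\beta}^{+})^{2}$ (up to the harmless ambiguity of which nontrivial eigenvalue is labelled $1$, which is irrelevant since $(M^{\rho})^{\rho}=M$). The paper instead gets both isomorphisms at once by a counting argument: in the quantum Galois decomposition of $V_{\mathbb{Z}\frac{\alpha}{2}}$ under the Schur cover of $S_4$ (Equation \ref{qdEquation}) only eight inequivalent irreducible $V_{L_2}^{S_4}$-modules can occur, six are already produced by the $\rho$-stable modules of Lemma \ref{lemma10Modules}, and comparing quantum dimensions forces the two remaining pairings. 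Your direct computation is more elementary and more explicit for the first pair.

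For the second isomorphism, however, there is a genuine gap. The modules $V_{\mathbb{Z}\beta+\frac{1}{4}\beta}^{n}$ are the eigenspaces of a lift $\varphi(\sigma)$ of $\sigma$ acting on the abstract $\sigma$-stable irreducible $V_{\mathbb{Z}\beta}^{+}$-module $V_{\mathbb{Z}\beta+\frac{1}{4}\beta}$; they are \emph{not} the $\tilde\sigma$-eigenspaces of $V_{L_2+\frac{1}{2}\alpha}$. By Equation \ref{equation2}, $V_{L_2+\frac{1}{2}\alpha}\cong\bigoplus_{n}V_{\mathbb{Z}\beta+\frac{1}{4}\beta}^{n}\otimes U_2^{n}$ with each multiplicity space $U_2^{n}$ two-dimensional and acted on nontrivially by $\tilde\sigma$, so each $\tilde\sigma$-eigenspace of $V_{L_2+\frac{1}{2}\alpha}$ mixes several isotypic components; indeed, by Lemma \ref{decompSpace} those eigenspaces are modules for the larger algebra $V_{L_2}^{\langle\sigma\rangle}$, not copies of the $V_{\mathbb{Z}\beta+\frac{1}{4}\beta}^{n}$. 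Moreover the central ambiguity is not merely an overall rescaling of the intertwiner: in a relation of the form $\varphi(\rho)\varphi(\sigma)\varphi(\rho)^{-1}=c\,\varphi(\sigma)^{-1}$ the scalar $c$ determines the induced permutation $\lambda\mapsto c\lambda^{-1}$ of eigenvalues, i.e.\ \emph{which} component is sent to which, so it must actually be pinned down. The argument can be repaired along your lines: work on $V_{\mathbb{Z}\beta+\frac{1}{4}\beta}$ itself, which is both $\sigma$- and $\rho$-stable as a $V_{\mathbb{Z}\beta}^{+}$-module; choose $\varphi(\sigma)$ of order $3$ and any lift $\varphi(\rho)$; Schur's lemma gives the above relation with $c$ a cube root of unity, and since $\rho(\omega)=\omega$ the map $\varphi(\rho)$ commutes with $L(0)$, hence must preserve the component of lowest weight $\frac{1}{4}$, namely $V_{\mathbb{Z}\beta+\frac{1}{4}\beta}^{0}$; this forces the permutation to fix that eigenvalue and therefore to exchange $V_{\mathbb{Z}\beta+\frac{1}{4}\beta}^{1}$ and $V_{\mathbb{Z}\beta+\frac{1}{4}\beta}^{2}$, giving $(V_{\mathbb{Z}\beta+\frac{1}{4}\beta}^{1})^{\rho}\cong V_{\mathbb{Z}\beta+\frac{1}{4}\beta}^{2}$. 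As written, though, the eigenspace identification and the unresolved scalar leave the second half incomplete.
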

\begin{proof}
    Lemma \ref{lemma10Modules} shows that, \((V_{\mathbb{Z}\beta}^+)^{0}\) in equation \ref{equation1} becomes \(((V_{\mathbb{Z}\beta}^+)^{0})^{+}\), and \(((V_{\mathbb{Z}\beta}^+)^{0})^{-}\) in equation \ref{qdEquation} for \(V_{L_2}^{S_4}\).
    Likewise, \(V_{\mathbb{Z}\beta}^{-}\) in equation \ref{equation1} becomes \((V_{\mathbb{Z}\beta}^{-})^{+}\), and \((V_{\mathbb{Z}\beta}^{-})^{-}\) in equation \ref{qdEquation} for \(V_{L_2}^{S_4}\).
    Also, \(V_{\mathbb{Z}\beta+\frac{1}{4}\beta}^{0}\) in Equation \ref{equation2} becomes \((V_{\mathbb{Z}\beta+\frac{1}{4}\beta}^{0})^{+}\), and \((V_{\mathbb{Z}\beta+\frac{1}{4}\beta}^{0})^{-}\) in Equation \ref{qdEquation} for \(V_{L_2}^{S_4}\). At this point, there is a total of six nonisomorphic irreducible \(V_{L_2}^{S_4}\) modules in Equation \ref{qdEquation}
     Notice that total number of irreducible characters of \(GL(2,3)\) is eight.
     Quantum Galois theory \ref{qdEquation} shows that the total number of nonisomorphic irreducible \(V_{L_2}^{S_4}\) modules in Equation \ref{qdEquation} is eight. Theorem \ref{Qd1} shows that
     \[
        \mathrm{qdim}_{V_{L^2}^{A_4}}(V_{\mathbb{Z}\beta}^+)^{1}=\mathrm{qdim}_{V_{L^2}^{A_4}}(V_{\mathbb{Z}\beta}^+)^{2}=1
     \]
     and
     \[
        \mathrm{qdim}_{V_{L^2}^{A_4}}V_{\mathbb{Z}\beta+\frac{1}{4}\beta}^{1}=\mathrm{qdim}_{V_{L^2}^{A_4}}V_{\mathbb{Z}\beta+\frac{1}{4}\beta}^{2}=2
     \]
     This forces \(((V_{\mathbb{Z}\beta}^+)^{1})^{\rho}\cong (V_{\mathbb{Z}\beta}^+)^{2}\), and \((V_{\mathbb{Z}\beta+\frac{1}{4}\beta}^{1})^{\rho} \cong V_{\mathbb{Z}\beta+\frac{1}{4}\beta}^{2}\).
\end{proof}

\begin{remark}
    Lemma \ref{lemma2Pairs}, Equation \ref{qdEquation}, Equation \ref{equation1}, Equation \ref{equation2} show that
    \begin{equation}
        \label{equation3}
        \begin{aligned}
            V_{L_2}=&((V_{\mathbb{Z}\beta}^+)^{0})^{+}\otimes W_{1}^{0} \oplus ((V_{\mathbb{Z}\beta}^+)^{0})^{-}\otimes W_{1}^{1} \oplus (V_{\mathbb{Z}\beta}^+)^{1}\otimes W_{2}^{0}\\
            &\oplus
            (V_{\mathbb{Z}\beta}^{-})^{+}\otimes W_{3}^{0}\oplus
            (V_{\mathbb{Z}\beta}^{-})^{-}\otimes W_{3}^{1},
        \end{aligned}
    \end{equation}
and
    \begin{equation}
        \label{equation4}
        V_{L_2+\frac{1}{2} \alpha}=(V_{\mathbb{Z}\beta+\frac{1}{4}\beta}^{0})^{+}\otimes W_{2}^{1} \oplus (V_{\mathbb{Z}\beta+\frac{1}{4}\beta}^{0})^{-}\otimes W_{2}^{2} \oplus V_{\mathbb{Z}\beta+\frac{1}{4}\beta}^{1}\otimes W_{4}.
    \end{equation}
\end{remark}

\begin{lemma}
    \label{lemma8modules}
    The following 8 spaces are irreducible  \(V_{L_2}^{S_4}\) modules
     \[
        V_{\mathbb{Z}\gamma+\frac{r}{18}\gamma},\ \mathrm{for} \ (s\in \mathbb{Z}, 1\leq r \leq 8,\ \mathrm{and}\ r \neq 0 \ (\mathrm{mod}\ 3),
    \]
    \[
        (V^+_{\mathbb{Z}\beta})^1, V_{\mathbb{Z}\beta+\frac{1}{4} \beta }^1.
    \]
\end{lemma}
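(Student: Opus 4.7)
The eight modules split naturally into two groups that require separate arguments, both of which have essentially been set up by the preceding machinery. The plan is to verify each group independently, then observe that the combined count is consistent with what was already proved.

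For the six modules $V_{\mathbb{Z}\gamma+\frac{r}{18}\gamma}$ with $r\in\{1,2,4,5,7,8\}$, the plan is to invoke Theorem \ref{MT3} with $H=\langle\sigma\rangle$ and $G=S_4$. First I would verify the hypothesis of Theorem \ref{MT3}, namely that $C_{S_4}(h)=H$ for each nontrivial $h\in H$: since $\sigma$ corresponds to a 3-cycle in $S_4$ and the centralizer of a 3-cycle in $S_4$ has order $3$, this condition is satisfied. By Remark \ref{Mod0}, these six modules are irreducible $V_{L_2}^{\langle\sigma\rangle}$-modules of type two (they arise in $\sigma^i$-twisted $V_{L_2}$-modules for $i=1,2$). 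Theorem \ref{MT3} then promotes each of them to an irreducible $V_{L_2}^{S_4}$-module of type two. Lemma \ref{lemma6Pairs} together with Remark \ref{remarkExactly1} shows these six are pairwise nonisomorphic as $V_{L_2}^{S_4}$-modules and exhaust the type-two irreducibles.

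For the remaining two modules $(V_{\mathbb{Z}\beta}^{+})^{1}$ and $V_{\mathbb{Z}\beta+\frac{1}{4}\beta}^{1}$, the plan is to read them off directly from the quantum Galois decomposition (\ref{qdEquation}) and the explicit decompositions (\ref{equation3}) and (\ref{equation4}). By Theorem \ref{on quantum Galois theory} applied to the action of the Schur cover of $S_4$ (isomorphic to $GL(2,3)$) on $V_{\mathbb{Z}\frac{\alpha}{2}}$, every $V_{\chi}$ appearing in (\ref{qdEquation}) is a simple $V_{L_2}^{S_4}$-module. Since $(V_{\mathbb{Z}\beta}^{+})^{1}$ is the $V_{L_2}^{S_4}$-factor paired with the simple $GL(2,3)$-module $W_{2}^{0}$ in (\ref{equation3}), and $V_{\mathbb{Z}\beta+\frac{1}{4}\beta}^{1}$ is the $V_{L_2}^{S_4}$-factor paired with $W_{4}$ in (\ref{equation4}), both are simple $V_{L_2}^{S_4}$-modules. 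A sanity check is that the dimensions match: $\dim W_{2}^{0}=2$ consistent with $(V_{\mathbb{Z}\beta}^{+})^{1}$ appearing with multiplicity $2$ inside $V_{L_2}$ (since $((V_{\mathbb{Z}\beta}^{+})^{1})^{\rho}\cong (V_{\mathbb{Z}\beta}^{+})^{2}$ by Lemma \ref{lemma2Pairs}), and $\dim W_{4}=4$ consistent with the corresponding multiplicity in $V_{L_2+\frac{1}{2}\alpha}$.

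The main obstacle, conceptually, is the bookkeeping needed to guarantee these eight modules are pairwise nonisomorphic as $V_{L_2}^{S_4}$-modules, not just as $V_{L_2}^{\langle\sigma\rangle}$-modules. The type-one versus type-two distinction (Remark \ref{remarkType}) separates the first six from the last two, so the only danger of coincidence is among the six modules $V_{\mathbb{Z}\gamma+\frac{r}{18}\gamma}$ themselves, or within the pair $(V_{\mathbb{Z}\beta}^{+})^{1}$ and $V_{\mathbb{Z}\beta+\frac{1}{4}\beta}^{1}$. For the six type-two modules, the conformal weights read off from the tables (for $W^{i,T_j,k}$) are pairwise distinct, so the action of $L(0)$ on the lowest-weight subspaces separates them. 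For the two type-one modules, the decomposition (\ref{equation3}) and (\ref{equation4}) already expresses them as $V_\chi$ for different irreducible characters $\chi$ of $GL(2,3)$, so the Galois theory furnishes nonisomorphism for free. Once these checks are recorded, the lemma follows.
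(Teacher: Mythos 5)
Your argument is correct, and for the six modules $V_{\mathbb{Z}\gamma+\frac{r}{18}\gamma}$ it is the same as the paper's: the paper's one-line proof cites Remark \ref{remarkExactly1}, which encapsulates exactly the chain Remark \ref{Mod0}, Theorem \ref{MT3} (stated in the text via Theorem \ref{MT1}), Lemma \ref{lemma6Pairs} that you spell out, including the centralizer check $C_{S_4}(\sigma)=\langle\sigma\rangle$. Where you genuinely diverge is the pair $(V_{\mathbb{Z}\beta}^{+})^{1}$, $V_{\mathbb{Z}\beta+\frac{1}{4}\beta}^{1}$: the paper obtains their irreducibility from Lemma \ref{constructionLemma2} applied to the simple vertex operator algebra $V_{L_2}^{A_4}$ with the prime-order automorphism $\rho$, using Lemma \ref{lemma2Pairs} only to see that these two modules are not $\rho$-stable; you instead read simplicity directly off the quantum Galois decompositions (\ref{equation3}) and (\ref{equation4}), i.e.\ off Theorem \ref{on quantum Galois theory} for the $GL(2,3)$-action on $V_{\mathbb{Z}\frac{\alpha}{2}}$. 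Both routes work, but they are not independent: Equations (\ref{equation3})--(\ref{equation4}) are themselves derived in the paper from Lemma \ref{lemma2Pairs} together with the Galois counting, so your version leans on the Galois machinery (and on $V_{\mathbb{Z}\frac{\alpha}{2}}$ being a legitimate object for it, a point the paper also takes for granted), whereas the paper's version needs only the elementary prime-order orbifold lemma once Lemma \ref{lemma2Pairs} is available. Two small cautions: Remark \ref{remarkExactly1} says the six modules exhaust the irreducible $V_{L_2}^{S_4}$-modules occurring in $\sigma^{i}$-twisted modules for $i=1,2$, not all type-two modules (the modules of Lemma \ref{lemma8Twistedmodules} and Remark \ref{remark2TwistedModules} are also type two), so your clause ``exhaust the type-two irreducibles'' is an overstatement; and the nonisomorphism bookkeeping, though correct via the distinct $\omega$-eigenvalues $r^{2}/36$, is not actually required, since the lemma asserts only irreducibility.
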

\begin{proof}
    The desired result follows from Lemma \ref{constructionLemma2}, Lemma \ref{lemma2Pairs}, and Remark \ref{remarkExactly1}.
\end{proof}
\begin{remark}
    \label{remarkConclusion1}
    Lemmas \ref{lemma10Modules} and \ref{lemma8modules} provide eighteen irreducible \(V_{L_2}^{S_4}\) modules, which exhaust irreducible \(V_{L_2}^{S_4}\) modules occurring in irreducible \(V_{L_2}^{A_4}\) modules.
    Lemmas \ref{lemma10Modules}, \ref{lemma6Pairs}, and \ref{lemma2Pairs} show that there are exactly five irreducible \(\rho\) stable \(V_{L_2}^{A_4}\) modules.
    Thus, by Theorem \ref{minvariance}, there are exactly five irreducible \(\rho\) twisted \(V_{L_2}^{A_4}\) modules.
    So, Lemma \ref{constructionLemma1}, Theorems \ref{MT1} and \ref{MT2} indicate that there are exactly ten extra irreducible \(V_{L_2}^{S_4}\) modules except for the eighteen  irreducible \(V_{L_2}^{S_4}\) modules occurring in irreducible \(V_{L_2}^{A_4}\) modules.
\end{remark}

\noindent Consider the vertex operator algebra isomorphism
\[
    V_{L_2}^{S_4}\cong (V_{\mathbb{Z}\beta}^{+})^{\langle \sigma, \rho \rangle} \cong (V_{\mathbb{Z}\beta}^{+})^{D_3}.
\]
Theorem \ref{MT2} and Remark \ref{remarkConjugate} show that irreducible \(V_{L_2}^{S_4}\) modules come from
\begin{itemize}
    \item irreducible \(V_{\mathbb{Z}\beta}^{+}\) modules,
    \item irreducible \(\sigma\) twisted \(V_{\mathbb{Z}\beta}^{+}\) modules,
    \item irreducible \(\sigma^2\) twisted \(V_{\mathbb{Z}\beta}^{+}\) modules,
    \item irreducible \(\rho\) twisted \(V_{\mathbb{Z}\beta}^{+}\) modules.
\end{itemize}

\begin{remark}
    Let \(M\) be an irreducible \(V_{\mathbb{Z}\beta}^{+}\) module.
    Then, Theorem \ref{MT1} and Lemma \ref{constructionLemma2} shows that \(M\) is a \(V_{L_2}^{A_4}\) module, and a direct sum of irreducible \(V_{L_2}^{A_4}\) modules.
    So, irreducible \(V_{L_2}^{S_4}\) modules from irreducible \(V_{\mathbb{Z}\beta}^{+}\) modules are irreducible \(V_{L_2}^{S_4}\) modules from irreducible \(V_{L_2}^{A_4}\) modules.
\end{remark}
\begin{remark}
    \label{remarkConclusion2}
    Let \(M\) be an irreducible \(\sigma\) twisted \(V_{\mathbb{Z}\beta}^{+}\) module. Then, Lemmas \ref{constructionLemma1}, \ref{constructionLemma2} shows that \(M\) is a \(V_{L_2}^{A_4}\) module, and hence a direct sum of irreducible \(V_{L_2}^{A_4}\) modules.
    So, irreducible \(V_{L_2}^{S_4}\) modules from irreducible \(\sigma\) twisted \(V_{\mathbb{Z}\beta}^{+}\) modules are irreducible \(V_{L_2}^{S_4}\) modules from irreducible \(V_{L_2}^{A_4}\) modules.
    Likewise, irreducible \(V_{L_2}^{S_4}\) modules from irreducible \(\sigma^2\) twisted \(V_{\mathbb{Z}\beta}^{+}\) modules are irreducible \(V_{L_2}^{S_4}\) modules from irreducible \(V_{L_2}^{A_4}\) modules.
    Therefore, those ten extra irreducible \(V_{L_2}^{S_4}\) modules in Remark \ref{remarkConclusion1} are from irreducible \(\rho\) twisted \(V_{\mathbb{Z}\beta}^{+}\) modules.
\end{remark}
\noindent Notice that Lemma \ref{adtype} shows that
\[
    (V_{\mathbb{Z}\beta}^{+})^{\langle \rho \rangle}\cong V_{\mathbb{Z}\zeta}^{+}\cong V_{L_{2}}^{D_4}.
\]
\noindent Reference \cite{dong1999representations} shows that the lattice vertex operator algebra \(V_{n\mathbb{Z}\alpha}^{+}\) is generated by \(\omega, J, \ \mathrm{and}\ E_{n\alpha}\), where
\[
    J=(x^{1}(-1))^4\mathbf{1}-2x^{1}(-3)h(-1)\mathbf{1}+\frac{3}{2}(x^{1}(-2))^2\mathbf{1}
\]
\[
    E_{n\alpha}=e^{n\alpha}+e^{-n\alpha}.
\]
\\
\noindent Definition of \(\rho\) shows that
\[
    \rho(E_{2\alpha})=-E_{2\alpha},
\]
\noindent and
\[
    \rho(E_{4\alpha})=E_{4\alpha}.
\]
\noindent Notice that \(V_{\mathbb{Z}\beta}^{+}\) is generated by \(\omega, J, \ \mathrm{and}\  E_{2\alpha}\), and \(V_{\mathbb{Z}\zeta}^{+}\) is generated by \(\omega, J, \ \mathrm{and}\ E_{4\alpha}\).
Thus, \((V_{\mathbb{Z}\beta}^{+})^{\langle \rho \rangle}\) and \(V_{\mathbb{Z}\zeta}^{+}\) share the same generators.
That is, they are not only isomorphic, but also the same vertex operator algebra.

\begin{lemma}
    \label{lemma2TwistedModules}
    As \(V_{\mathbb{Z}\zeta}^{+}\) modules,
    \[
        V_{\mathbb{Z}\beta}^{T_{1},+} \cong (V_{\mathbb{Z}\beta}^{T_{2},+})^{\rho} \cong V_{\mathbb{Z}\zeta}^{T_{1},+},
    \]
    and
    \[
        V_{\mathbb{Z}\beta}^{T_{2},-} \cong (V_{\mathbb{Z}\beta}^{T_{2},-})^{\rho} \cong V_{\mathbb{Z}\zeta}^{T_{1},-}.
    \]
\end{lemma}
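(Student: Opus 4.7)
The strategy is to identify each side with $V_{\mathbb{Z}\zeta}^{T_{1},\pm}$ as a $V_{\mathbb{Z}\zeta}^{+}$-module, by studying (i) how $\rho$ acts on the $\theta$-twisted $V_{\mathbb{Z}\beta}$-modules and (ii) how those modules restrict along the sublattice inclusion $\mathbb{Z}\zeta = 2\mathbb{Z}\beta \subset \mathbb{Z}\beta$.

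First I would compute $\rho$ on the exponential vectors. Inverting $x^{2}=(e^{\alpha}+e^{-\alpha})/\sqrt{2}$ and $x^{3}=i(e^{\alpha}-e^{-\alpha})/\sqrt{2}$ and using $\rho(x^{1})=-x^{1}$, $\rho(x^{2})=x^{3}$, $\rho(x^{3})=x^{2}$, one obtains $\rho(e^{\pm\alpha}) = \mp i\,e^{\mp\alpha}$, whence $\rho(e^{\beta}) = -e^{-\beta}$ and $\rho(e^{\zeta}) = e^{-\zeta}$. Comparing with the standard $-1$ lift $\theta$ (which sends $e^{\pm\gamma}$ to $e^{\mp\gamma}$), $\rho$ agrees with $\theta$ on the subalgebra $V_{\mathbb{Z}\zeta}$, while on $V_{\mathbb{Z}\beta}$ the two order-$2$ lifts of $-1$ differ by the non-trivial character $\chi_{1}$ of $\mathbb{Z}\beta/2\mathbb{Z}\beta$; that is, $\rho = \theta\cdot\chi_{1}$ as automorphisms of $V_{\mathbb{Z}\beta}$.

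Second, I would use the principle that two order-$2$ lifts of $-1$ on a rank-one even lattice $L$ differing by a character $\chi$ of $L/2L$ permute the irreducible $\theta$-twisted $V_{L}$-modules by translating the character label $T_{\chi_{s}} \mapsto T_{\chi_{s}\chi}$. Here $\chi_{1}$ is non-trivial, so $\rho$ swaps $V_{\mathbb{Z}\beta}^{T_{1}}$ and $V_{\mathbb{Z}\beta}^{T_{2}}$; since $\rho$ commutes with $\theta$ on $V_{\mathbb{Z}\beta}$, the induced map intertwines the $\theta$-involutions used to define the $\pm$-decomposition, so the sign is preserved, giving $(V_{\mathbb{Z}\beta}^{T_{2},\pm})^{\rho} \cong V_{\mathbb{Z}\beta}^{T_{1},\pm}$ as $V_{\mathbb{Z}\beta}^{+}$-modules, hence also as $V_{\mathbb{Z}\zeta}^{+}$-modules.

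Third, I would restrict to $V_{\mathbb{Z}\zeta}$. Because $V_{L}^{T_{s}} = M(1)(\theta)\otimes T_{\chi_{s}}$ and the rank-one twisted Fock space $M(1)(\theta)$ depends only on $\mathfrak{h}$, restricting the $V_{\mathbb{Z}\beta}$-action to $V_{\mathbb{Z}\zeta}$ realises $V_{\mathbb{Z}\beta}^{T_{s}}$ as the $\theta$-twisted $V_{\mathbb{Z}\zeta}$-module whose twist character is the pull-back of $\chi_{s}$ along $\mathbb{Z}\zeta/2\mathbb{Z}\zeta \to \mathbb{Z}\beta/2\mathbb{Z}\beta$. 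Since $\zeta = 2\beta$ maps to $0$ in the target, this pull-back is the trivial character for both $s$, so $V_{\mathbb{Z}\beta}^{T_{s}}|_{V_{\mathbb{Z}\zeta}} \cong V_{\mathbb{Z}\zeta}^{T_{1}}$; the two $\theta$-involutions agree, and taking $\pm$-eigenspaces gives $V_{\mathbb{Z}\beta}^{T_{s},\pm}|_{V_{\mathbb{Z}\zeta}^{+}} \cong V_{\mathbb{Z}\zeta}^{T_{1},\pm}$, closing the chain.

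The main obstacle is step two: rigorously implementing the character-shift requires tracking the cocycle data of the $\theta$-lift to $V_{\mathbb{Z}\beta}$ and the resulting action of the inner automorphism $\chi_{1}$ on the twisted vacuum $T_{\chi_{s}}$. A cleaner fall-back is an invariants argument: once the $\rho$-swap is established, Lemma~\ref{constructionLemma2} forces each restriction $V_{\mathbb{Z}\beta}^{T_{s},\pm}|_{V_{\mathbb{Z}\zeta}^{+}}$ to be irreducible, and matching the conformal weights $1/16$ and $9/16$ against the irreducible $V_{\mathbb{Z}\zeta}^{+}$-modules listed in Theorem~\ref{theoremV_L^+} pins the restriction down to $V_{\mathbb{Z}\zeta}^{T_{1},\pm}$.
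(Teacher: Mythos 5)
Your third step is precisely the paper's own argument: since $\zeta=2\beta$ lies in $2\mathbb{Z}\beta$, the character $\chi_{s}$ pulls back trivially along $\mathbb{Z}\zeta/2\mathbb{Z}\zeta\to\mathbb{Z}\beta/2\mathbb{Z}\beta$, so both $V_{\mathbb{Z}\beta}^{T_{s}}$ restrict to $V_{\mathbb{Z}\zeta}^{T_{1}}$ and the $\pm$-eigenspaces match; so the proposal is correct and essentially the same proof. Your steps one and two are extra work the statement does not require, because the isomorphisms are asserted only as $V_{\mathbb{Z}\zeta}^{+}$-modules and $\rho$ fixes $V_{\mathbb{Z}\zeta}^{+}=(V_{\mathbb{Z}\beta}^{+})^{\langle\rho\rangle}$ pointwise, so $M^{\rho}=M$ there. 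One caution about your proposed fall-back: matching conformal weights $1/16$ and $9/16$ cannot by itself pin the restriction down to $V_{\mathbb{Z}\zeta}^{T_{1},\pm}$, since $V_{\mathbb{Z}\zeta}^{T_{2},\pm}$ have exactly the same lowest weights; only the character pull-back (or some finer invariant) distinguishes $T_{1}$ from $T_{2}$.
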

\begin{proof}
     Since \(\zeta=2\beta\), \(\mathbb{Z}\zeta\) acts as 0 on \(\mathbb{Z}\beta/(2\mathbb{Z}\beta)\).
     Thus, definition of  \(V_{\mathbb{Z}\beta}^{T_{i}}\), where \(i=1,2\), shows that \(\mathbb{Z}\zeta\) acts as 1 on both \(V_{\mathbb{Z}\beta}^{T_{1}}\) and \(V_{\mathbb{Z}\beta}^{T_{2}}\).
     So, as \(V_{\mathbb{Z}\zeta}^{+}=(V_{\mathbb{Z}\beta}^{+})^{\langle \rho \rangle}\) modules,
     \[
        (V_{\mathbb{Z}\beta}^{T_{2},+})^\rho\cong  V_{\mathbb{Z}\zeta}^{T_{1},+}, \ V_{\mathbb{Z}\beta}^{T_{1},+} \cong  V_{\mathbb{Z}\zeta}^{T_{1},+},
     \]
and
     \[
        (V_{\mathbb{Z}\beta}^{T_{2},-})^\rho\cong  V_{\mathbb{Z}\zeta}^{T_{1},-}, \ V_{\mathbb{Z}\beta}^{T_{1},-} \cong  V_{\mathbb{Z}\zeta}^{T_{1},-}.
     \]
     Transitivity of congruence yields the desired results.
\end{proof}

\begin{lemma}
    \label{lemma8Twistedmodules}
    The following eight spaces are irreducible \(V_{\mathbb{Z}\zeta}^{+}=(V_{\mathbb{Z}\beta}^{+})^{\langle \rho \rangle}\) modules occurring in irreducible \(\rho\) twisted \(V_{\mathbb{Z}\beta}^{+}\) modules.
    \[
            V_{\mathbb{Z}\zeta+\frac{s}{32}\zeta},\ \mathrm{for}\ s\in \mathbb{Z}, 1\leq s \leq 15,\ \mathrm{and}\ s \neq 0 \ \mathrm{mod}\ 2),
    \]
\end{lemma}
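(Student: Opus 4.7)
The strategy is to apply Theorem \ref{MT2} to the orbifold pair $V = V_{\mathbb{Z}\beta}^{+}$ and $G = \langle \rho \rangle$, whose fixed subalgebra is $V_{\mathbb{Z}\zeta}^{+}$. Since $|G| = 2$, the only possibilities for $g \in G$ are $g = 1$ (giving type-one modules, i.e., $V_{\mathbb{Z}\zeta}^{+}$-submodules of untwisted $V_{\mathbb{Z}\beta}^{+}$-modules) and $g = \rho$ (giving type-two modules, occurring in $\rho$-twisted $V_{\mathbb{Z}\beta}^{+}$-modules). The plan is to show the eight listed spaces are irreducible $V_{\mathbb{Z}\zeta}^{+}$-modules of type two, which by Theorem \ref{MT2} forces them to occur in irreducible $\rho$-twisted $V_{\mathbb{Z}\beta}^{+}$-modules.

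First I would verify irreducibility as $V_{\mathbb{Z}\zeta}^{+}$-modules by applying Theorem \ref{theoremV_L^+} with $L = \mathbb{Z}\zeta$. Since $(\zeta,\zeta) = 32$ one has $k = 16$, and the cosets $\frac{s}{32}\zeta + \mathbb{Z}\zeta$ for $s = 1,\ldots,15$ are all non-$\theta$-fixed (since $16 \nmid s$), so $V_{\mathbb{Z}\zeta+\frac{s}{32}\zeta}$ is an irreducible $V_{\mathbb{Z}\zeta}^{+}$-module for each such $s$, including the eight odd values.

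The key step is a denominator argument on $L(0)$-eigenvalues. For $V_{\mathbb{Z}\zeta+\frac{s}{32}\zeta}$, every $L(0)$-eigenvalue has the form $\frac{(s+32n)^{2}}{64} + m$ for some $n \in \mathbb{Z}$ and $m \in \mathbb{Z}_{\geq 0}$, which is congruent to $\frac{s^{2}}{64}$ modulo $\mathbb{Z}$. For odd $s$, $s^{2}$ is odd and $\gcd(s^{2},64)=1$, so this fractional part has denominator exactly $64$ in lowest terms. On the other hand, the irreducible untwisted $V_{\mathbb{Z}\beta}^{+}$-modules enumerated by Theorem \ref{theoremV_L^+} applied to $L = \mathbb{Z}\beta$ (with $k = 4$) all have $L(0)$-spectra modulo $\mathbb{Z}$ inside $\frac{1}{16}\mathbb{Z}$: the lattice-type modules $V_{\mathbb{Z}\beta+\frac{j}{8}\beta}$ contribute fractional parts in $\{0,\tfrac{1}{16},\tfrac{1}{4},\tfrac{9}{16}\}$ via $j^{2}/16 \bmod 1$, and the $\theta$-twisted modules $V_{\mathbb{Z}\beta}^{T_{s},\pm}$ contribute fractional parts in $\{\tfrac{1}{16},\tfrac{9}{16}\}$. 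Since no rational with denominator $64$ in lowest terms lies in $\frac{1}{16}\mathbb{Z}$, and since any $V_{\mathbb{Z}\zeta}^{+}$-submodule inherits its $L(0)$-spectrum from the ambient module, the $V_{\mathbb{Z}\zeta+\frac{s}{32}\zeta}$ for odd $s$ cannot occur in any untwisted $V_{\mathbb{Z}\beta}^{+}$-module.

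Combining these steps, Theorem \ref{MT2} forces each of the eight modules to occur in an irreducible $\rho$-twisted $V_{\mathbb{Z}\beta}^{+}$-module, which is the claim. The main obstacle is the $L(0)$-spectrum computation for the $\theta$-twisted modules $V_{\mathbb{Z}\beta}^{T_{s},\pm}$: one must carefully check that the half-integer oscillator modes in the $\theta$-twisted sector, together with the $\pm$-eigenspace decomposition, never push the denominator beyond $16$, so that the denominator-$64$ spectra of the target modules are genuinely excluded.
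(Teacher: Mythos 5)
Your proposal is correct and follows essentially the same route as the paper: irreducibility of the eight spaces comes from the classification of irreducible $V_{L}^{+}$-modules (Theorem \ref{theoremV_L^+}), the untwisted origin is excluded by comparing $L(0)$-spectra (the paper checks the $\omega$-action on lowest levels; you make the same check explicit via fractional parts, $\frac{s^2}{64}$ with denominator $64$ versus spectra in $\frac{1}{16}\mathbb{Z}$), and Theorem \ref{MT2} with $G=\langle\rho\rangle$ then forces the modules to occur in $\rho$-twisted $V_{\mathbb{Z}\beta}^{+}$-modules. Your explicit denominator computation, including the remark that the half-integer grading of the $\theta$-twisted sector keeps denominators at $16$, is a slightly more careful rendering of the paper's "check the action of $\omega$" step.
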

\begin{proof}
    Reference \cite{dong1999representations} shows that these eight spaces are irreducible \(V_{\mathbb{Z}\zeta}^{+}=(V_{\mathbb{Z}\beta}^{+})^{\langle \rho \rangle}\) modules.
    Let \(M\) be an irreducible \(V_{\mathbb{Z}\zeta}^{+}=(V_{\mathbb{Z}\beta}^{+})^{\langle \rho \rangle}\) module from an irreducible \(V_{\mathbb{Z}\beta}^{+}\) module.
    Then, the action of \(\omega\) on the first level of \(M\) is \(\lambda+n\), where \(\lambda\) is the action of \(\omega\) on the first level of the irreducible \(V_{\mathbb{Z}\beta}^{+}\) module, and \(n\) is a nonnegative integer.
    Check the action of \(\omega\) on the first level of each module listed in the lemma. None of them satisfies this condition.
    Theorem \ref{MT2} shows that an irreducible \(V_{\mathbb{Z}\zeta}^{+}=(V_{\mathbb{Z}\beta}^{+})^{\langle \rho \rangle}\) modules is from irreducible \(\rho\) twisted \(V_{\mathbb{Z}\beta}^{+}\) modules, or from irreducible \(V_{\mathbb{Z}\beta}^{+}\) modules.
    Since these eight modules are not from irreducible \(V_{\mathbb{Z}\beta}^{+}\) modules, they are from irreducible \(\rho\) twisted \(V_{\mathbb{Z}\beta}^{+}\) modules.
\end{proof}
\begin{remark}
    \label{remark2TwistedModules}
    Use Theorem \ref{Qd1} to check the quantum dimensions of irreducible \(V_{\mathbb{Z}\beta}^{+}\) modules, and of \(V_{\mathbb{Z}\zeta}^{+}\).
    The four irreducible \(V_{\mathbb{Z}\zeta}^{+}\) modules, \(\{V_{\mathbb{Z}\zeta}^{T_{1},\pm}, V_{\mathbb{Z}\zeta}^{T_{2},\pm} \}\), are either from \(\{V_{\mathbb{Z}\beta}^{T_{1},\pm}, V_{\mathbb{Z}\beta}^{T_{2},\pm}\}\), or from \(\rho\) irreducible twisted \(V_{\mathbb{Z}\beta}^{+}\) modules.
    Lemma \ref{lemma2TwistedModules} shows that \(V_{\mathbb{Z}\zeta}^{T_{1},\pm}\) are from \(\{V_{\mathbb{Z}\beta}^{T_{1},\pm}, V_{\mathbb{Z}\beta}^{T_{2},\pm}\}\). Hence, \(V_{\mathbb{Z}\zeta}^{T_{2},\pm}\) are from irreducible \(\rho\) twisted \(V_{\mathbb{Z}\beta}^{+}\) modules.
\end{remark}
\begin{remark}
    \label{remark10TwistedModules}
    Lemma \ref{lemma8Twistedmodules} and Remark \ref{remark2TwistedModules} provide ten irreducible \(V_{\mathbb{Z}\zeta}^{+}\) occurring in irreducible \(\rho\) twisted \(V_{\mathbb{Z}\beta}^{+}\) modules.
    Let \(M\) be an irreducible \(\rho\) twisted \(V_{\mathbb{Z}\beta}^{+}\) module.
    Notice that \(C_{D_3}(\rho)=\langle \rho \rangle \), and \( \langle \rho \rangle \subset G_M\). The fact \(G_M\subset C_{D_3}(\rho)\) shows that \( \langle \rho \rangle = G_M \).
    Theorem \ref{MT1} indicates that the ten irreducible \(V_{\mathbb{Z}\zeta}^{+}\) modules are ten irreducible \(V_{L_2}^{S_4}\) modules, probably isomorphic under \(V_{L_2}^{S_4}\), occurring in irreducible \(\rho\) twisted \(V_{\mathbb{Z}\beta}^{+}\) modules.
\end{remark}
\begin{remark}
    \label{remarkConclusion3}
    Actions of \(\omega\) on the first level of a module show that those ten irreducible \(V_{L_2}^{S_4}\) modules in Remark \ref{remark10TwistedModules} are not isomorphic.
    Theorem \ref{MT2} indicates that those ten irreducible \(V_{\mathbb{Z}\zeta}^{+}\) modules exhaust irreducible \(V_{L_2}^{S_4}\) modules occurring in irreducible \(\rho\) twisted \(V_{\mathbb{Z}\beta}^{+}\) modules.
    So, those ten irreducible \(V_{L_2}^{S_4}\) modules are the exactly ten irreducible \(V_{L_2}^{S_4}\) modules occurring in irreducible \(\rho\) twisted \(V_{\mathbb{Z}\beta}^{+}\) modules.
\end{remark}
\begin{theorem}
   Irreducible modules of \(V_{L_2}^{S_4}\) are
    \[
        ((V^+_{\mathbb{Z}\beta})^0)^+, \ ((V^+_{\mathbb{Z}\beta})^0)^-,
    \]
    \[
        (V^-_{\mathbb{Z}\beta})^+, \ (V^-_{\mathbb{Z}\beta})^-,
    \]
    \[
        (V_{\mathbb{Z}\beta+\frac{1}{8} \beta })^+, \ (V_{\mathbb{Z}\beta+\frac{1}{8} \beta })^-,
    \]
    \[
        (V_{\mathbb{Z}\beta+\frac{3}{8} \beta })^+, \ (V_{\mathbb{Z}\beta+\frac{3}{8} \beta })^-,
    \]
    \[
        (V_{\mathbb{Z}\beta+\frac{1}{4} \beta }^0)^+, \ (V_{\mathbb{Z}\beta+\frac{1}{4} \beta }^0)^-,
    \]
    \[
        (V^+_{\mathbb{Z}\beta})^1, V_{\mathbb{Z}\beta+\frac{1}{4} \beta }^1,
    \]
    \[
        V_{\mathbb{Z}\gamma+\frac{r}{18}\gamma},\ \mathrm{for} \ r\in \mathbb{Z}, 1\leq r \leq 8,\ \mathrm{and}\ r \neq 0 \ (\mathrm{mod}\ 3),
    \]
    \[
        V_{\mathbb{Z}\zeta+\frac{s}{32}\zeta},\ \mathrm{for}\ s\in \mathbb{Z}, 1\leq s \leq 15,\ \mathrm{and}\ s \neq 0\ (\mathrm{mod}\ 2),
    \]
    \[
        V_{\mathbb{Z}\zeta}^{T_2,+},\ V_{\mathbb{Z}\zeta}^{T_2,-}.
    \]
\end{theorem}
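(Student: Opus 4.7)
The plan is to assemble the final list by combining the two sources of irreducible $V_{L_2}^{S_4}$ modules identified in Theorem \ref{MT2}: those arising from irreducible $V_{L_2}^{A_4}$ modules (type one relative to the extension $V_{L_2}^{S_4}\subset V_{L_2}^{A_4}$) and those arising from $\rho$-twisted $V_{L_2}^{A_4}$ modules (type two). I would organize the proof into two blocks corresponding to these sources and then invoke a counting argument to confirm exhaustiveness.

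For the first block, I would collect the eighteen modules already identified in Lemmas \ref{lemma10Modules} and \ref{lemma8modules} together with Remark \ref{remarkConclusion1}. Concretely, the five $\rho$-stable irreducible $V_{L_2}^{A_4}$ modules $(V_{\mathbb{Z}\beta}^+)^0$, $V_{\mathbb{Z}\beta}^-$, $V_{\mathbb{Z}\beta+\frac{1}{8}\beta}$, $V_{\mathbb{Z}\beta+\frac{3}{8}\beta}$, $(V_{\mathbb{Z}\beta+\frac{1}{4}\beta})^0$ each decompose under $\rho$ into two irreducible $V_{L_2}^{S_4}$ modules, yielding ten, by Theorem \ref{MT1}. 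The remaining non-$\rho$-stable irreducible $V_{L_2}^{A_4}$ modules restrict (by Lemma \ref{constructionLemma2}) to irreducible $V_{L_2}^{S_4}$ modules, but pair up under the $\rho$-action as recorded in Lemmas \ref{lemma6Pairs} and \ref{lemma2Pairs}, giving the eight additional modules $V_{\mathbb{Z}\gamma+\frac{r}{18}\gamma}$ for admissible $r$, together with $(V_{\mathbb{Z}\beta}^+)^1$ and $V_{\mathbb{Z}\beta+\frac{1}{4}\beta}^1$.

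For the second block, I would use the identification $V_{L_2}^{S_4}\cong(V_{\mathbb{Z}\beta}^+)^{D_3}$ and the fact that $(V_{\mathbb{Z}\beta}^+)^{\langle\rho\rangle}\cong V_{\mathbb{Z}\zeta}^+$ as the same vertex operator algebra (via the common generators $\omega$, $J$, $E_{4\alpha}$). Through this, the irreducible $V_{L_2}^{S_4}$ modules coming from $\rho$-twisted $V_{\mathbb{Z}\beta}^+$ modules can be read off using the classification of irreducible $V_{\mathbb{Z}\zeta}^+$ modules in Theorem \ref{theoremV_L^+}. Lemma \ref{lemma8Twistedmodules} supplies the eight modules $V_{\mathbb{Z}\zeta+\frac{s}{32}\zeta}$ for odd $s$ with $1\le s\le 15$, and Remark \ref{remark2TwistedModules} supplies the two twisted-sector modules $V_{\mathbb{Z}\zeta}^{T_2,\pm}$; by Remark \ref{remark10TwistedModules}, these ten $V_{\mathbb{Z}\zeta}^+$ modules pass to ten distinct irreducible $V_{L_2}^{S_4}$ modules since $C_{D_3}(\rho)=\langle\rho\rangle$ forces $G_M=\langle\rho\rangle$ for any such module.

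Finally, I would check exhaustiveness and nonredundancy. Distinctness across the two blocks and within each block follows from inspecting the $L(0)$-weights listed in the quantum-dimension tables and the fact that the modules in the second block cannot arise from irreducible $V_{\mathbb{Z}\beta}^+$ modules (their conformal weights do not match any $V_{\mathbb{Z}\beta}^+$ module weight up to an integer shift). Completeness then follows from Theorem \ref{MT2} applied to the tower $V_{L_2}^{S_4}\subset V_{L_2}^{A_4}$ for the first block and to the tower $V_{L_2}^{S_4}\cong(V_{\mathbb{Z}\beta}^+)^{D_3}\subset V_{\mathbb{Z}\beta}^+$ for the second block, combined with Remark \ref{remarkConclusion2} which ensures that $\sigma$- and $\sigma^2$-twisted $V_{\mathbb{Z}\beta}^+$ modules contribute nothing new beyond the first block. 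The main obstacle I anticipate is the bookkeeping needed to confirm that the two blocks do not overlap; this is handled by comparing conformal weights from the quantum-dimension tables with those of the $V_{\mathbb{Z}\zeta}^+$ modules listed in Theorem \ref{theoremV_L^+}.
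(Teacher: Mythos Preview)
Your proposal is correct and follows essentially the same route as the paper: the paper's proof simply cites Remarks \ref{remarkConclusion1}, \ref{remarkConclusion2}, and \ref{remarkConclusion3}, which together package exactly the two-block argument you describe (the eighteen modules from the untwisted $V_{L_2}^{A_4}$ sector, and the ten from the $\rho$-twisted $V_{\mathbb{Z}\beta}^+$ sector identified via $V_{\mathbb{Z}\zeta}^+$). Your additional remarks on nonoverlap via conformal weights are a reasonable expansion of what Remark \ref{remarkConclusion3} already records.
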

\begin{proof}
    The desired result follows from Remark \ref{remarkConclusion1}, Remark \ref{remarkConclusion2}, and Remark \ref{remarkConclusion3}.
\end{proof}

\begin{theorem} The quantum dimensions for all irreducible $V_{L}^{S_4}$-modules over $V_{L}^{S_4}$
are given by the following tables.
\begin{center}
\begin{tabular}{|c|c|c|c|c|c|}
\hline
 & \(((V^+_{\mathbb{Z}\beta})^0)^+\) & \(((V^+_{\mathbb{Z}\beta})^0)^{-}\) & \((V^+_{\mathbb{Z}\beta})^1\)& \((V^-_{\mathbb{Z}\beta})^{+}\)& \((V^-_{\mathbb{Z}\beta})^{-}\)\tabularnewline
\hline
$\mathrm{qdim}$ & 1 & 1 & 2 & 3 & 3\tabularnewline
\hline
$$ & \(M^{0}\) & \(M^{1}\) & \(M^{2}\) & \(M^{3}\)& \(M^{4}\)\tabularnewline
\hline
\end{tabular}
\par\end{center}

\begin{center}
\begin{tabular}{|c|c|c|c|}
\hline
 & \((V_{\mathbb{Z}\beta+\frac{1}{4} \beta }^0)^+\) & \((V_{\mathbb{Z}\beta+\frac{1}{4} \beta }^0)^{-}\) & \(V_{\mathbb{Z}\beta+\frac{1}{4} \beta }^1\)\tabularnewline
\hline
$\mathrm{qdim}$ & 2 & 2 & 4\tabularnewline
\hline
$$ & \(M^{6}\) & \(M^{7}\)& \(M^{8}\)\tabularnewline
\hline
\end{tabular}
\par\end{center}

\begin{center}
\begin{tabular}{|c|c|c|c|c|}
\hline
 & \((V_{\mathbb{Z}\beta+\frac{1}{8} \beta })^+\) & \((V_{\mathbb{Z}\beta+\frac{1}{8} \beta })^-\) & \((V_{\mathbb{Z}\beta+\frac{3}{8} \beta })^+\)& \((V_{\mathbb{Z}\beta+\frac{3}{8} \beta })^{-}\)\tabularnewline
\hline
$\mathrm{qdim}$ & 6 & 6 & 6 & 6\tabularnewline
\hline
$$ & \(M^{9}\) & \(M^{10}\)& \(M^{11}\) & \(M^{12}\) \tabularnewline
\hline
\end{tabular}
\par\end{center}

\begin{center}
\begin{tabular}{|c|c|c|c|c|}
\hline
 & \(V_{\mathbb{Z}\gamma+\frac{r}{18}\gamma}\) & \(V_{\mathbb{Z}\zeta+\frac{s}{32}\zeta}\) & \(V_{\mathbb{Z}\zeta}^{T_2,+}\)& \(V_{\mathbb{Z}\zeta}^{T_2,-}\)\tabularnewline
\hline
$\mathrm{qdim}$ & 8 & 6 & 12 & 12\tabularnewline
\hline
$$ & \(M^{13},\ldots, M^{20}\) & \(M^{21}, \ldots, M^{26}\)& \(M^{27}\) & \(M^{28}\) \tabularnewline
\hline
\end{tabular}
\par\end{center}
In the last table, \(r,s\in \mathbb{Z}\), \(1\leq r\leq 8\), \(1\leq s\leq 15\), \(r\neq 0\) (mod 3), and \(s\neq 0\) (mod 2).
\end{theorem}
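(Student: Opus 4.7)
The plan is to compute each quantum dimension by applying Theorem \ref{Qd1} in two complementary forms. Taking $V=V_{L_2}^{A_4}$ with the order-two automorphism $\rho$, so that $V_{L_2}^{S_4}=(V_{L_2}^{A_4})^{\langle\rho\rangle}$, yields for any irreducible $V_{L_2}^{A_4}$-module $N$ the identity
\[
    \sum_i \mathrm{qdim}_{V_{L_2}^{S_4}}(N_i) = 2\,\mathrm{qdim}_{V_{L_2}^{A_4}}(N),
\]
where $N=\bigoplus_i N_i$ is the decomposition over $V_{L_2}^{S_4}$. Separately, multiplicativity of the quantum dimension under the inclusion $V_{L_2}^{S_4}\subset V_{L_2}^{D_4}\cong V^+_{\mathbb{Z}\zeta}$ converts $V^+_{\mathbb{Z}\zeta}$-quantum dimensions to $V_{L_2}^{S_4}$-quantum dimensions via a factor $\mathrm{qdim}_{V_{L_2}^{S_4}}(V_{L_2}^{D_4})=3$, obtained as the ratio $24/8$ of the two applications of Theorem \ref{Qd1} to $V_{L_2}$ with $G=S_4$ and $G=D_4$ respectively.

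For the modules inherited from irreducible $V_{L_2}^{A_4}$-modules, I split into the $\rho$-stable and non-$\rho$-stable cases. If $N$ is not $\rho$-stable (namely $(V^+_{\mathbb{Z}\beta})^1$, $V^1_{\mathbb{Z}\beta+\frac{1}{4}\beta}$, and the six $V_{\mathbb{Z}\gamma+\frac{r}{18}\gamma}$), then by Theorem \ref{MT1} $N$ is already irreducible as a $V_{L_2}^{S_4}$-module, so its quantum dimension simply doubles, producing the entries $2$, $4$, and $8$ respectively. If $N$ is $\rho$-stable (the five remaining modules with distinguished $\omega$-action), then $N=N^+\oplus N^-$ with $\mathrm{qdim}(N^+)+\mathrm{qdim}(N^-)=2\,\mathrm{qdim}_{V_{L_2}^{A_4}}(N)$. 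To establish $\mathrm{qdim}(N^+)=\mathrm{qdim}(N^-)$, I will use that $((V^+_{\mathbb{Z}\beta})^0)^-$ is a simple current of quantum dimension $1$ in the fusion ring of $V_{L_2}^{S_4}$ (since $((V^+_{\mathbb{Z}\beta})^0)^+=V_{L_2}^{S_4}$ is the identity and the pair sums to $2$), and that fusion with this simple current interchanges $N^+\leftrightarrow N^-$ by tracking the $\rho$-eigenvalue.

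For the type-two modules coming from $\rho$-twisted $V^+_{\mathbb{Z}\beta}$-modules, namely $V_{\mathbb{Z}\zeta+\frac{s}{32}\zeta}$ and $V^{T_2,\pm}_{\mathbb{Z}\zeta}$, Lemma \ref{lemma8Twistedmodules} and Remark \ref{remark2TwistedModules} realize them as irreducible $V^+_{\mathbb{Z}\zeta}\cong V_{L_2}^{D_4}$-modules with $V^+_{\mathbb{Z}\zeta}$-quantum dimensions $2$ and $4$ respectively, read from the $V^+_L$-module theory with $L=\mathbb{Z}\zeta$ and $(\zeta,\zeta)=32$, so $k=16$. Multiplying each by the factor $3$ yields the tabulated values $6$ and $12$.

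The main obstacle will be the rigorous justification of the equal-split argument for $\rho$-stable modules, specifically the fusion identification $((V^+_{\mathbb{Z}\beta})^0)^-\boxtimes N^+\cong N^-$, which may require appealing to the explicit orbifold-extension structure from Reference \cite{2015arXiv150703306DOrbifold}. As an independent consistency check, I will verify Theorem \ref{Qd2}: $\mathrm{glob}(V_{L_2}^{S_4})=|S_4|^2\,\mathrm{glob}(V_{L_2})=576\cdot 2=1152$, and the sum of squares of the tabulated quantum dimensions indeed gives $2+4+18+8+16+144+384+288+288=1152$, confirming consistency.
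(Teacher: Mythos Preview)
Your approach is correct and considerably more explicit than the paper's own proof, which is essentially a bare citation of Theorems~\ref{Qd1} and~\ref{Qd2}. The routes differ in one meaningful place: for the eight modules appearing in the decomposition of $V_{\mathbb{Z}\alpha/2}$, the paper already has Equations~(\ref{equation3}) and~(\ref{equation4}) in hand, so the general orbifold formula $\mathrm{qdim}_{V^G}(M_\lambda)=[G:G_M]\dim(W_\lambda)\,\mathrm{qdim}_V(M)$ from Reference~\cite{2015arXiv150703306DOrbifold} reads off their quantum dimensions directly as the $GL(2,3)$-representation dimensions $1,1,2,3,3$ and $2,2,4$. This bypasses your equal-split argument for three of the five $\rho$-stable modules (namely $(V_{\mathbb{Z}\beta}^+)^0$, $V_{\mathbb{Z}\beta}^-$, $V_{\mathbb{Z}\beta+\frac14\beta}^0$) without any fusion computation. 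For the remaining two $\rho$-stable modules $V_{\mathbb{Z}\beta+\frac18\beta}$ and $V_{\mathbb{Z}\beta+\frac38\beta}$, which sit in the $\tau$-twisted sector and are not covered by those equations, your simple-current argument (or the analogous orbifold formula applied to the $\tau_i$-twisted $V_{L_2}$-modules with stabilizer $D_4$) is still needed; the paper does not spell this out.

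Your tower argument through $V_{\mathbb{Z}\zeta}^+\cong V_{L_2}^{D_4}$ for the ten $\rho$-twisted-sector modules is a clean and correct way to handle that block, and indeed the factor $3$ follows from the character-ratio definition of quantum dimension without needing $D_4$ to be normal in $S_4$. The flagged obstacle, the fusion swap $((V_{\mathbb{Z}\beta}^+)^0)^-\boxtimes N^+\cong N^-$, is a standard consequence of the $\mathbb{Z}_2$-orbifold structure (the simple current $V^-$ acts on the category by tensoring with the sign character of $\langle\rho\rangle$), so it is justifiable from the same reference; but as noted, most instances of it can simply be sidestepped via Equations~(\ref{equation3})--(\ref{equation4}). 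Your global-dimension check $\sum(\mathrm{qdim})^2=1152=|S_4|^2\cdot\mathrm{glob}(V_{L_2})$ is a useful sanity confirmation that the paper does not include.
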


\begin{proof}
    The definition of \(V_L^{T_i,\pm}\), for \(i=1,2\), indicates
    \[
        \mathrm{qdim}_{V_{L}^{+}}V_L^{T_i,\pm}=\mathrm{qdim}_{V_{L}^{+}}V_{\mathbb{Z}\alpha}.
    \]
    Hence, the quantum dimensions listed are obtained by Theorem \ref{Qd1} and Theorem \ref{Qd2}.
\end{proof}

\section{Irreducible modules of \(V_{L_2}^{A_5}\)}
\noindent Let \(\beta=2\alpha\), \(\gamma=3\alpha\), and \(\mu=5\alpha\).
\begin{remark}
    \label{remarkA5Group}
    \noindent A collection of properties of the alternating group \(A_5\) are given. \
    \begin{itemize}
        \item (a) \(A_5\) is simple.
        \item (b)Subgroups of \(A_5\) of a fixed order has a unique conjugacy class.
        \item (c)The maximal proper subgroups of \(A_5\) are isomorphic to \(S_3\), \(D_5\), and \(A_4\).
        \item (d)The Sylow 2 group is isomorphic to the Klein four group \(K\). The Sylow 3 group is isomorphic to the cyclic group \(\mathbb{Z}_3\). The Sylow 5 group is isomorphic to the cyclic group \(\mathbb{Z}_5\).
        \item (e)The projective special linear group of degree two for \(A\) is \(PSL(2,5)\). The corresponding Schur cover is \(SL(2,5)\).
    \end{itemize}
\end{remark}
\begin{remark}
    Denote the corresponding groups in \(A\) by \(V_4\), \(\mathbb{Z}/3\mathbb{Z}\), \(\mathbb{Z}/5\mathbb{Z}\), twisted \(S_3\), \(D_5\), and \(A_4\). Use \(C_G(H)\) and \(N_G(H)\) to denote centralizer and normalizer respectively. Then,
    \[
         C_{A_5}(V_4)=V_4,\ C_{A_5}(\mathbb{Z}/3\mathbb{Z})=\mathbb{Z}/3\mathbb{Z},\ C_{A_5} (\mathbb{Z}/5\mathbb{Z})=\mathbb{Z}/5\mathbb{Z},
    \]
and
    \[
        N_{A_5}(V_4)=A_4, \ N_{A_5}(\mathbb{Z}/3\mathbb{Z})=S_3,\ N_{A_5}(\mathbb{Z}/5\mathbb{Z})=D_5,
    \]
\end{remark}

\begin{remark}
    There are two irreducible \(V_{L_2}\) modules, \(V_{L_2}\) and \(V_{L_2+\frac{\alpha}{2}}\).
    The action of \(\omega\) on the first level of \(V_{L_2}\) is 0, and on \(V_{L_2+\frac{\alpha}{2}}\) is \(\frac{1}{4}\).
    Thus, Definition \ref{defStableModule} shows that both \(V_{L_2}\) and \(V_{L_2+\frac{\alpha}{2}}\) are \(g\) stable, for each \(g\in A_5\).
    Let \(J\) be a subgroup of \(A_5\). Remark \ref{remarkType} shows that there are two types of irreducible \(V_{L_2}^{J}\) modules modules.
    \begin{itemize}
        \item An irreducible \(V_{L_2}^{J}\) module \(M\) is of \emph{type one} if \(M\) occurs in the decomposition of \(V_{\frac{\mathbb{Z}\alpha}{2}}\), as \(V_{L_2}^{J}\) modules.
        \item An irreducible \(V_{L_2}^{J}\) module \(M\) is of \emph{type two} if \(M\) does not occur in the decomposition of \(V_{\frac{\mathbb{Z}\alpha}{2}}\), as \(V_{L_2}^{J}\) modules. That is, \(M\) occurs in a \(h\) twisted \(V_{L_2}\) module for some \(h\in J\) and \(h\neq 1\).
    \end{itemize}
\end{remark}

\begin{remark}
    \label{remark3Groups}
    Use Theorem \ref{theoremV_L^+}, Lemma \ref{lemmaK}, Lemma \ref{adtype}, and Remark \ref{Mod0}.
    \begin{itemize}
        \item Irreducible \(V_{L_2}^{V_4}\) modules of type two are \(V_{\mathbb{Z}\beta+\frac{1}{8}\beta}\), \(V_{\mathbb{Z}\beta+\frac{3}{8}\beta}\), \(V_{\mathbb{Z}\beta}^{T_1, \pm}\), and \(V_{\mathbb{Z}\beta}^{T_2, \pm}\).
        \item Irreducible \(V_{L_2}^{\mathbb{Z}/3\mathbb{Z}}\) modules of type two are \(V_{\mathbb{Z}\gamma \pm \frac{r}{18}\gamma}\), for \( r\in \mathbb{Z}, \ 1\leq r \leq 17\), and \(r \neq 0 \ (\mathrm{mod}\ 3)\).
        \item Irreducible \(V_{L_2}^{\mathbb{Z}/5\mathbb{Z}}\) modules of type two are \(V_{\mathbb{Z}\mu \pm \frac{r}{18}\gamma}\), for \(t\in \mathbb{Z}, \ 1\leq t \leq 49\), and \(t \neq 0 \ (\mathrm{mod}\ 5)\).
    \end{itemize}
\end{remark}
\begin{lemma}
    \label{lemmaA5Key1}
    Let \(H\in \{V_4, \mathbb{Z}/3\mathbb{Z}, \mathbb{Z}/5\mathbb{Z}\} \). Then, an irreducible \(V_{L_2}^{H}\) module of type two is an irreducible \(V_{L_2}^{A_5}\) module of type two.
\end{lemma}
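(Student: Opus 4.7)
The plan is to deduce this lemma as an immediate consequence of Theorem \ref{MT3} applied with $G = A_5$ and $H$ equal to each of the three subgroups in turn. Theorem \ref{MT3} has two hypotheses: the centralizer condition $C_G(h) = H$ for each non-identity $h \in H$, and rationality of $V_{L_2}^H$. So the task reduces to verifying these two hypotheses for each of $H \in \{V_4, \mathbb{Z}/3\mathbb{Z}, \mathbb{Z}/5\mathbb{Z}\}$, after which the conclusion about type-two irreducibles follows at once.

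First I would verify the centralizer condition using the conjugacy-class structure of $A_5$ already collected in Remark \ref{remarkA5Group}. A non-identity element of $V_4$ is an involution of $A_5$, whose conjugacy class has 15 elements, so its centralizer has order $60/15 = 4$; since $V_4$ is abelian and contains the element, $V_4 \subseteq C_{A_5}(h)$, forcing equality. The same orbit-count argument handles the remaining two cases: a 3-cycle has conjugacy class of size 20 and hence centralizer of order 3, and a 5-cycle has centralizer of order 5. In each case the centralizer must then coincide with $H$, since $H$ is already contained in $C_{A_5}(h)$.

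Next I would verify rationality of $V_{L_2}^H$ by identifying the fixed-point algebra in each case with a known rational vertex operator algebra. By Lemma \ref{lemmaK}, $V_{L_2}^{V_4} \cong V_{\mathbb{Z}\beta}^{+}$, which is rational by the representation theory of $V_L^{+}$ recalled in Theorem \ref{theoremV_L^+}. By Lemma \ref{adtype}, $V_{L_2}^{\mathbb{Z}/3\mathbb{Z}} \cong V_{\mathbb{Z}\gamma}$ and $V_{L_2}^{\mathbb{Z}/5\mathbb{Z}} \cong V_{\mathbb{Z}\mu}$; these lattice vertex operator algebras are rational. With both hypotheses in hand, Theorem \ref{MT3} immediately asserts that each irreducible $V_{L_2}^H$ module of type two remains an irreducible $V_{L_2}^{A_5}$ module of type two.

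The main, and rather mild, subtlety is checking that the centralizer hypothesis is really applicable: the statement of Theorem \ref{MT3} literally reads ``$C_G(h)=H$ for each $h \in H$'', which fails at $h=1$, but inspection of its proof shows the hypothesis is only invoked for the specific element $g \in H$ indexing the $g$-twisted module $N$ from which the type-two module comes; since $M$ is of type two, $g \neq 1$, and the verification above for all non-identity elements is exactly what is needed. Beyond this book-keeping, no genuine calculation is required; the entire argument is a structural application of the orbifold machinery to the group-theoretic data of $A_5$ and the identifications of $V_{L_2}^H$ already recorded.
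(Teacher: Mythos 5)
Your proposal is correct and follows essentially the same route as the paper, which proves the lemma simply by citing Theorem \ref{MT3} after asserting that each $H$ satisfies its hypotheses; your verification of the centralizer condition $C_{A_5}(h)=H$ for $h\neq 1$ via conjugacy classes and of rationality via $V_{L_2}^{V_4}\cong V_{\mathbb{Z}\beta}^{+}$, $V_{L_2}^{\mathbb{Z}/3\mathbb{Z}}\cong V_{\mathbb{Z}\gamma}$, $V_{L_2}^{\mathbb{Z}/5\mathbb{Z}}\cong V_{\mathbb{Z}\mu}$ supplies exactly the details the paper leaves implicit (and is consistent with the centralizer data recorded in the paper's remarks). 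Your observation that the hypothesis of Theorem \ref{MT3} need only hold for the nonidentity twisting element, since type-two modules come from $g$-twisted modules with $g\neq 1$, matches how that hypothesis is actually used in the paper's proof of Theorem \ref{MT3}.
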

\begin{proof}
    Notice that \(H\subseteq A_5\) satisfies the assumption in Theorem \ref{MT3}. The desired result follows from Theorem \ref{MT3}.
\end{proof}

\begin{lemma}
    \label{lemmaA5Key2}
    Let \(H\in \{V_4, \mathbb{Z}/3\mathbb{Z}, \mathbb{Z}/5\mathbb{Z}\} \). Then, nonisomorphic irreducible \(V_{L_2}^{N_{A_5}(H)}\) modules of type two are nonisomorphic irreducible \(V_{L_2}^{A_5}\) modules of type two.
\end{lemma}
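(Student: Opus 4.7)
The plan is to realize each irreducible $V_{L_2}^{N_{A_5}(H)}$-module of type two as an isotypic summand of a twisted $V_{L_2}$-module, verify that the stabilizer subgroups computed inside $A_5$ and inside $N_{A_5}(H)$ coincide, and then invoke Theorem~\ref{MT1} together with Lemma~\ref{lemmaA5Key1} to transport irreducibility and distinctness from $V^{N_{A_5}(H)}$ to $V^{A_5}$. First I would let $M_1,M_2$ be nonisomorphic irreducible $V_{L_2}^{N_{A_5}(H)}$-modules of type two and apply Theorem~\ref{MT2} with $G=N_{A_5}(H)$ to realize each $M_i$ as a summand $N^{(i)}_{\lambda_i}$ of an irreducible $h_i$-twisted $V_{L_2}$-module $N^{(i)}$ with $h_i\in N_{A_5}(H)\setminus\{1\}$.

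Next I would exploit the structural fact from Remark~\ref{remarkA5Group} that $C_{A_5}(h)=H$ for $h\in H\setminus\{1\}$, so the full $A_5$-stabilizer $G_{N^{(i)}}^{A_5}$ sits inside $H\subseteq N_{A_5}(H)$. Consequently $G_{N^{(i)}}^{A_5}=G_{N^{(i)}}^{N_{A_5}(H)}=G_{N^{(i)}}^{H}$, and Theorem~\ref{MT1} produces the same index set $\Lambda_{G_{N^{(i)}},\alpha_{N^{(i)}}}$ and the same vector-space decomposition whether applied with $V^{A_5}$, $V^{N_{A_5}(H)}$, or $V^{H}$. The vector space underlying $M_i$ is thus simultaneously an irreducible $V^{H}$-module, and Lemma~\ref{lemmaA5Key1} then declares this common space to be an irreducible $V_{L_2}^{A_5}$-module of type two.

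For the nonisomorphism claim I would argue by cases. If $N^{(1)}=N^{(2)}$ then $\lambda_1\ne\lambda_2$ and Theorem~\ref{MT1}(1) applied with $G=A_5$ directly forbids $N^{(1)}_{\lambda_1}\cong N^{(1)}_{\lambda_2}$. Otherwise $N^{(1)}$ and $N^{(2)}$ lie in distinct $N_{A_5}(H)$-orbits, and Theorem~\ref{MT1}(2) with $G=A_5$ reduces the problem to showing that they also lie in distinct $A_5$-orbits, i.e.\ writing $N^{(2)}\cong g\circ N^{(1)}$ one must force $g\in N_{A_5}(H)$. For prime-order $H\in\{\mathbb{Z}/3\mathbb{Z},\mathbb{Z}/5\mathbb{Z}\}$ this is immediate from $\langle h_1\rangle=H$: the condition $gh_1g^{-1}=h_2\in H$ gives $gHg^{-1}=\langle gh_1g^{-1}\rangle=\langle h_2\rangle=H$, hence $g\in N_{A_5}(H)$.

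The hard part will be the remaining case $H=V_4$, where $H$ is not cyclic and a single conjugation relation $gh_1g^{-1}\in V_4$ does not automatically yield $gV_4g^{-1}=V_4$. Here I would do a direct transporter calculation showing $\{g\in A_5:gh_1g^{-1}\in V_4\}=A_4=N_{A_5}(V_4)$: the $A_5$-conjugacy class of a double transposition has $15$ elements with centralizer of order $4$, so only the three nontrivial elements of $V_4$ itself are $A_5$-conjugates of $h_1$ lying in $V_4$, giving a transporter of size $3\cdot 4=12=|A_4|$; and the transporter contains $A_4$ because $V_4$ is normal in $A_4$. This completes the chain and the lemma follows uniformly in the three cases.
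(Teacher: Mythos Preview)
There is a genuine gap in your argument. After invoking Theorem~\ref{MT2} with $G=N_{A_5}(H)$ you only know $h_i\in N_{A_5}(H)\setminus\{1\}$, yet in the next paragraph you use the centralizer identity $C_{A_5}(h)=H$, which holds only for $h\in H\setminus\{1\}$. For $H=V_4$ one has $N_{A_5}(V_4)=A_4$, which contains the eight $3$-cycles; if $h_i$ is one of these your stabilizer computation $G_{N^{(i)}}^{A_5}\subseteq H$ fails, and so does the transporter calculation you set up later (it presupposed $h_1\in V_4$).

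This gap cannot be repaired, because the lemma as literally stated is false for $H=V_4$. Among the type two $V_{L_2}^{A_4}$-modules are the twelve modules $V_{\mathbb{Z}\gamma\pm\frac{r}{18}\gamma}$ arising from $\sigma$- and $\sigma^{2}$-twisted $V_{L_2}$-modules. They are pairwise nonisomorphic over $V_{L_2}^{A_4}$ because $\sigma$ and $\sigma^{-1}$ are not $A_4$-conjugate; but in $A_5$ one has $N_{A_5}(\langle\sigma\rangle)\cong S_3$, so some element of $A_5$ inverts $\sigma$, the $\sigma$- and $\sigma^{-1}$-twisted $V_{L_2}$-modules fall into the same $A_5$-orbit, and by Theorem~\ref{MT1} the twelve modules collapse to only six isomorphism classes over $V_{L_2}^{A_5}$.

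The paper avoids this by never working at the generality of ``all type two $V_{L_2}^{N_{A_5}(H)}$-modules.'' In each of the three cases it begins from the explicit list in Remark~\ref{remark3Groups} of type two $V_{L_2}^{H}$-modules (so $h_i\in H$ by construction), identifies the resulting $V_{L_2}^{N_{A_5}(H)}$-modules concretely, invokes Lemma~\ref{lemmaA5Key1} for irreducibility over $V_{L_2}^{A_5}$, and then distinguishes them simply by observing that the conformal weights (the $L(0)$-eigenvalues on the bottom graded pieces) are all different. Your orbit/transporter strategy would also go through if you imposed $h_i\in H$ from the outset, and that restriction is exactly what the paper's explicit enumeration enforces; but the paper's actual mechanism for nonisomorphism is the elementary conformal-weight comparison, not an orbit argument.
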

\begin{proof}
    Consider \(H=V_4\). Then, \(N_{A_5}(H)=A_4\). Reference \cite{Dong201376} shows that those six irreducible \(V_{L_2}^{V_4}\) modules of type two in Remark \ref{remarkA5Group} become two nonisomorphic irreducible \(V_{L_2}^{A_4}\) modules, \(V_{\mathbb{Z}\beta+\frac{1}{8}\beta}\), \(V_{\mathbb{Z}\beta+\frac{3}{8}\beta}\). Lemma \ref{lemmaA5Key1} shows that \(V_{\mathbb{Z}\beta+\frac{1}{8}\beta}\) and  \(V_{\mathbb{Z}\beta+\frac{3}{8}\beta}\) are irreducible \(V_{L_2}^{A_5}\) modules. The action of \(\omega\) on the first levels of these two modules are distinct. Thus, they are nonisomorphic irreducible \(V_{L_2}^{A_5}\) modules. \\
    \\
    Consider \(H=\mathbb{Z}/3\mathbb{Z}\). Then, \(N_{A_5}(H)=S_3\). Theorem \ref{theoremV_L^+} shows that those twelve irreducible \(V_{L_2}^{\mathbb{Z}/3\mathbb{Z}}\) modules of type two in Remark \ref{remarkA5Group} become six nonisomorphic irreducible \(V_{L_2}^{S_3}\) modules, \(V_{\mathbb{Z}\gamma \pm \frac{r}{18}\gamma}\), for \( r\in \mathbb{Z}, \ 1\leq r \leq 8\), and \(r \neq 0 (\mathrm{mod} 3)\). Lemma \ref{lemmaA5Key1} shows that those six modules are irreducible \(V_{L_2}^{A_5}\) modules. The action of \(\omega\) on the first levels of these six modules are distinct. Thus, they are nonisomorphic irreducible \(V_{L_2}^{A_5}\) modules. \\
    \\
    Consider \(H=\mathbb{Z}/5\mathbb{Z}\). Then, \(N_{A_5}(H)=D_5\). Theorem \ref{theoremV_L^+} shows that those forty irreducible \(V_{L_2}^{\mathbb{Z}/5\mathbb{Z}}\) modules of type two in Remark \ref{remarkA5Group} become twenty nonisomorphic irreducible \(V_{L_2}^{D_5}\) modules, \(V_{\mathbb{Z}\mu \pm \frac{t}{50}\mu}\), for \( t\in \mathbb{Z}, \ 1\leq t \leq 24\), and \(r \neq 0 \ (\mathrm{mod}\ 5)\). Lemma \ref{lemmaA5Key1} shows that those twenty modules are irreducible \(V_{L_2}^{A_5}\) modules. The action of \(\omega\) on the first levels of these twenty modules are distinct. Thus, they are nonisomorphic irreducible \(V_{L_2}^{A_5}\) modules.
\end{proof}

\begin{lemma}
    \label{lemmaA5Key3}
    Irreducible \(V_{L_2}^{A_5}\) modules of type two are those twenty eight modules constructed in the proof of Lemma \ref{lemmaA5Key2}.
\end{lemma}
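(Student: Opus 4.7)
The plan is to prove the ``only if'' direction: every irreducible $V_{L_2}^{A_5}$-module of type two is isomorphic to one of the twenty eight modules produced in the proof of Lemma \ref{lemmaA5Key2}. The converse is already settled there, since those twenty eight modules are shown to be pairwise nonisomorphic irreducible $V_{L_2}^{A_5}$-modules of type two. So the task is exhaustion.

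First, I will take an arbitrary irreducible $V_{L_2}^{A_5}$-module $M$ of type two and apply Theorem \ref{MT2} to realize $M$ as a $V_{L_2}^{A_5}$-submodule $N_\lambda$ of some irreducible $g$-twisted $V_{L_2}$-module $N$, where $g \in A_5$; type two forces $g \neq 1$. By Lagrange's theorem and Remark \ref{remarkA5Group}(d), the order of $g$ is $2$, $3$, or $5$, so $\langle g \rangle$ is contained in some Sylow subgroup of $A_5$. Since all Sylow $p$-subgroups are conjugate, I will apply Theorem \ref{theoremConjugate} together with Remark \ref{remarkConjugate} to conjugate $g$ into the fixed Sylow subgroup $H \in \{V_4,\, \mathbb{Z}/3\mathbb{Z},\, \mathbb{Z}/5\mathbb{Z}\}$ without changing the $V_{L_2}^{A_5}$-module structure of $N$, and hence of $M$.

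Next, since $g \in H$, the $g$-twisted $V_{L_2}$-module $N$ is a $V_{L_2}^H$-module, and Theorem \ref{MT1} gives an irreducible decomposition $N = \bigoplus_\mu W_\mu \otimes N_\mu$ in which each $N_\mu$ is an irreducible $V_{L_2}^H$-module of type two (since $g \neq 1$). Thus $M$ is an irreducible $V_{L_2}^{A_5}$-constituent of one of the type two $V_{L_2}^H$-modules catalogued in Remark \ref{remark3Groups}. Lemma \ref{lemmaA5Key1} promotes each such $V_{L_2}^H$-module to an irreducible $V_{L_2}^{A_5}$-module, and the identification of $V_{L_2}^H$-modules into $V_{L_2}^{N_{A_5}(H)}$-modules carried out in the proof of Lemma \ref{lemmaA5Key2} collapses them into the $2 + 6 + 20 = 28$ pairwise nonisomorphic $V_{L_2}^{A_5}$-modules. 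Cross-case distinctness follows at once from the fact that the three families live on the disjoint lattice supports $\mathbb{Z}\beta$, $\mathbb{Z}\gamma$, and $\mathbb{Z}\mu$, so $M$ must occur in the list.

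The main obstacle is the conjugation reduction in the first step: I must verify that replacing the $g$-twisted module $N$ by the $(\sigma g \sigma^{-1})$-twisted module $\sigma \circ N$ genuinely preserves the $V_{L_2}^{A_5}$-module structure, so that the $V_{L_2}^{A_5}$-constituents of $N$ correspond bijectively to those of $\sigma \circ N$. This is ultimately because $\sigma \in A_5$ acts trivially on $V_{L_2}^{A_5}$, so the restricted action on $N$ is unchanged; nevertheless this is the step most in need of careful justification. Once it is secure, the remaining steps are essentially bookkeeping with Theorems \ref{MT1}, \ref{MT2} and with the counts from Lemma \ref{lemmaA5Key2}.
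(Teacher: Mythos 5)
Your proposal is correct and follows essentially the same route as the paper, whose proof is just a citation of Lemma \ref{lemmaA5Key1}, Lemma \ref{lemmaA5Key2}, Theorem \ref{theoremConjugate}, and Remark \ref{remarkA5Group}; you have simply filled in the details (realizing the module via Theorem \ref{MT2} in a $g$-twisted module, conjugating $\langle g\rangle$ into one of the fixed subgroups $V_4$, $\mathbb{Z}/3\mathbb{Z}$, $\mathbb{Z}/5\mathbb{Z}$, and then invoking the counts from Lemma \ref{lemmaA5Key2}). The conjugation step you flag as delicate is exactly what Theorem \ref{theoremConjugate} and Remark \ref{remarkConjugate} are recorded for, so no new argument is needed there.
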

\begin{proof}
    The desired result follows from Lemma \ref{lemmaA5Key1}, Lemma \ref{lemmaA5Key2}, Theorem \ref{theoremConjugate}, and Remark \ref{remarkA5Group}.
\end{proof}

\noindent Proposition \ref{propNinaG} shows that \(A_5\) can be considered as a subgroup of \(\{e^{2\pi i h(0)}|h\in (V_{L_2})_{1}\}\).
Thus, \(A_5\) acts on \(V_{\mathbb{Z}\frac{\alpha}{2}}=M(1)\otimes \mathbb{C}[\frac{1}{2}\mathbb{Z}\alpha]\).
Remark \ref{remarkSchurA5} shows that the action of the group \(A_5\) on \(V_{\mathbb{Z}\frac{\alpha}{2}}\) is a Schur cover of \(A_4\), which is isomorphic to \(SL(2,5)\), the general linear group of degree 2 over a field of three elements. Thus, by the quantum Galois theory \cite{dong1997}
\begin{equation}
    \label{qdEquation1}
    V_{\mathbb{Z}\frac{\alpha}{2}}\cong \bigoplus_{\chi}V_{\chi}\otimes W_{\chi}
\end{equation}
\noindent , where \(\chi\) runs over all irreducible characters of \(SL(2,5)\).
The irreducible representations of the group \(SL(2,5)\) are well known, one 1-dimensional, two 2-dimensional, two 3-dimensional, two 4-dimensional, one 5-dimensional, and one 6-dimensional irreducible representations.
Denote them by \(X_1,X_2^{i},X_3^{j}, X_{k}, X_5, X_6\), where \(i=0, 1,\ j=0, 1, \ k=0, 1\).
The subindex is the dimension of the module and the upper indices distinguish the irreducible modules of the same dimension.
\begin{remark}
    \label{remarkKeyA5}
    Irreducible \(V_{L_2}^{A_5}\) modules of type one occurs and exhausts the modules in the complete decomposition in Equation \ref{qdEquation1}.
    By quantum Galois Theory, express these modules as \(T_1,T_2^{i},T_3^{j}, T_4^{k}, T_5, T_6\), where \(i=0, 1,\ j=0, 1, \ k=0, 1\).
    The subindex is the quantum dimension of the module and the upper indices distinguish the irreducible modules of the same dimension.
\end{remark}

\begin{theorem}
    Assume that \(V_{L_2}^{A_5}\) is rational and $C_2$ cofiniteness. There are thirty seven irreducible \(V_{L_2}^{A_5}\) modules.
    \begin{itemize}
        \item \(T_1,T_2^{i},T_3^{j}, T_4^{k}, T_5, T_6\), where \(i=0, 1,\ j=0, 1, \ k=0, 1\),
        \item \(V_{\mathbb{Z}\beta+\frac{1}{8}\beta}\), \(V_{\mathbb{Z}\beta+\frac{3}{8}\beta}\),
        \item \(V_{L_2}^{S_3}\) modules, \(V_{\mathbb{Z}\gamma \pm \frac{r}{18}}\gamma\), for \( r\in \mathbb{Z}, \ 1\leq r \leq 8\), and \(r \neq 0 \ (\mathrm{mod}\ 3)\),
        \item \(V_{L_2}^{D_5}\) modules, \(V_{\mathbb{Z}\mu \pm \frac{t}{50}\mu}\), for \( t\in \mathbb{Z}, \ 1\leq t \leq 24\), and \(r \neq 0 \ (\mathrm{mod}\ 5)\),
    \end{itemize}
\end{theorem}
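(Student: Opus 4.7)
The plan is to partition the irreducible $V_{L_2}^{A_5}$-modules into the two disjoint families of Remark \ref{remarkType} (type one and type two) and count each family separately. First I would dispose of type one modules via the quantum Galois decomposition, then invoke Lemma \ref{lemmaA5Key3} for type two, and finally assemble the total by checking that the two families are disjoint.

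For type one, I would appeal to the decomposition in Equation (\ref{qdEquation1}) together with Remark \ref{remarkKeyA5}. Since $A_5$ acts on $V_{\mathbb{Z}\alpha/2}$ through its Schur cover $SL(2,5)$, quantum Galois theory provides a multiplicity-free decomposition of $V_{\mathbb{Z}\alpha/2}$ indexed by the irreducible characters of $SL(2,5)$. The character table of $SL(2,5)$ consists of one $1$-dimensional, two $2$-dimensional, two $3$-dimensional, two $4$-dimensional, one $5$-dimensional, and one $6$-dimensional representation, yielding $1+2+2+2+1+1=9$ pairwise nonisomorphic type one modules $T_1, T_2^0, T_2^1, T_3^0, T_3^1, T_4^0, T_4^1, T_5, T_6$. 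This list is exhaustive for type one because every type one module occurs inside $V_{L_2}$ or $V_{L_2+\alpha/2}$, both of which are summands of $V_{\mathbb{Z}\alpha/2}$.

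For type two, I would cite Lemma \ref{lemmaA5Key3}, whose proof already enumerates the required modules. Because each subgroup of a fixed order in $A_5$ forms a single conjugacy class by Remark \ref{remarkA5Group}(b), Theorem \ref{theoremConjugate} reduces the analysis to the three Sylow-type representatives $H\in\{V_4,\mathbb{Z}/3\mathbb{Z},\mathbb{Z}/5\mathbb{Z}\}$. Passing from $V_{L_2}^{H}$-modules to $V_{L_2}^{N_{A_5}(H)}$-modules collapses the $\pm$-pairs and yields $2$ modules ($V_{\mathbb{Z}\beta+\frac{1}{8}\beta}$ and $V_{\mathbb{Z}\beta+\frac{3}{8}\beta}$) from the $V_4\subset A_4$ case, $6$ modules ($V_{\mathbb{Z}\gamma\pm\frac{r}{18}\gamma}$ with $r\in\{1,2,4,5,7,8\}$) from the $\mathbb{Z}/3\mathbb{Z}\subset S_3$ case, and $20$ modules ($V_{\mathbb{Z}\mu\pm\frac{t}{50}\mu}$ with $1\le t\le 24$ and $5\nmid t$) from the $\mathbb{Z}/5\mathbb{Z}\subset D_5$ case, giving $2+6+20=28$ type two modules in total. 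Pairwise nonisomorphism in each family follows from the distinct values of $\omega$ on the lowest-weight space.

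Combining the two disjoint counts produces $9+28=37$ irreducible $V_{L_2}^{A_5}$-modules. Disjointness of the two families is immediate from the definition in Remark \ref{remarkType}: type one modules appear as summands of the untwisted module $V_{\mathbb{Z}\alpha/2}$, whereas type two modules arise only in nontrivially twisted $V_{L_2}$-modules. The main obstacle is the bookkeeping at the interface between the quantum Galois decomposition, which depends on the Schur cover identification from Theorem \ref{theoremSchur} and Remark \ref{remarkSchurA5}, and the Sylow analysis for type two modules; once the conjugacy uniqueness of Sylow subgroups in $A_5$ and the conjugation principle of Theorem \ref{theoremConjugate} are fixed in place, no delicate calculation remains and the individual counts assemble directly.
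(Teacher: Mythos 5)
Your proposal is correct and follows essentially the same route as the paper: the paper's proof is exactly the combination of Lemma \ref{lemmaA5Key3} (the $28$ type two modules from the $V_4$, $\mathbb{Z}/3\mathbb{Z}$, $\mathbb{Z}/5\mathbb{Z}$ analysis) with Remark \ref{remarkKeyA5} (the $9$ type one modules from the quantum Galois decomposition of $V_{\mathbb{Z}\frac{\alpha}{2}}$ under the Schur cover $SL(2,5)$), giving $9+28=37$. Your write-up merely spells out the counting details that the paper leaves implicit in those two references.
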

\begin{proof}
    The desired results follows from Lemma \ref{lemmaA5Key3} and remark \ref{remarkKeyA5}.
\end{proof}

\begin{theorem} The quantum dimensions for all irreducible $V_{L_2}^{A_5}$-modules over $V_{L_2}^{A_5}$
are given by the following tables.

\begin{center}
\begin{tabular}{|c|c|c|c|c|c|c|}
\hline
 & \(T_1\) & \(T_2^i\) & \(T_3^i\)& \(T_4^k\)&\(T_5\)&\(T_6\)\tabularnewline
\hline
$\mathrm{qdim}$ & 1 & 2 & 3 & 4 & 5 & 6 \tabularnewline
\hline
\end{tabular}
\par\end{center}

\begin{center}
\begin{tabular}{|c|c|c|c|c|}
\hline
 & \(V_{\mathbb{Z}\beta+\frac{1}{8}\beta}\) & \(V_{\mathbb{Z}\beta+\frac{3}{8}\beta}\) & \(V_{\mathbb{Z}\gamma \pm \frac{r}{18}\gamma}\)& \(V_{\mathbb{Z}\mu \pm \frac{t}{50}\mu}\)\tabularnewline
\hline
$\mathrm{qdim}$ & 30 & 30 & 20 & 12\tabularnewline
\hline
\end{tabular}
\par\end{center}

\noindent In the first table, \(i=0, 1,\ j=0, 1, \ k=0, 1\). In the second table, \(r,s\in \mathbb{Z}\), \(1\leq r\leq 8\), \(1\leq s\leq 15\), \(r\neq 0\) (mod 3), and \(s\neq 0\) (mod 2).
\end{theorem}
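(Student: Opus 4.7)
The plan is to split the thirty-seven irreducibles into the nine type one modules \(T_1,T_2^i,T_3^j,T_4^k,T_5,T_6\) (appearing in the quantum Galois decomposition of Equation \ref{qdEquation1}) and the twenty-eight type two modules, then handle each group by iterating Theorem \ref{Qd1}. The common device is multiplicativity of quantum dimension along the chain \(V_{L_2}^{A_5}\subset V_{L_2}^{H}\subset V_{L_2}\): applying Theorem \ref{Qd1} twice to the untwisted module \(V_{L_2}\) gives
\[
\mathrm{qdim}_{V_{L_2}^{A_5}}(V_{L_2})=|A_5|=60,\qquad \mathrm{qdim}_{V_{L_2}^{H}}(V_{L_2})=|H|,
\]
so by transitivity \(\mathrm{qdim}_{V_{L_2}^{A_5}}(V_{L_2}^{H})=60/|H|\), evaluating to \(15,20,12\) for \(H\in\{V_4,\mathbb{Z}/3\mathbb{Z},\mathbb{Z}/5\mathbb{Z}\}\) respectively.

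For each of the twenty-eight type two modules, Lemma \ref{lemmaA5Key3} identifies it with an irreducible \(V_{L_2}^{H}\)-module whose quantum dimension over \(V_{L_2}^{H}\) is read off from the earlier \(V_{L}^{+}\) and lattice VOA qdim tables. Since \(V_{L_2}^{V_4}\cong V_{\mathbb{Z}\beta}^{+}\) gives both \(V_{\mathbb{Z}\beta+\frac{1}{8}\beta}\) and \(V_{\mathbb{Z}\beta+\frac{3}{8}\beta}\) the value \(2\), each lifts to \(2\cdot 15=30\); since \(V_{L_2}^{\mathbb{Z}/3\mathbb{Z}}\cong V_{\mathbb{Z}\gamma}\) gives each \(V_{\mathbb{Z}\gamma\pm\frac{r}{18}\gamma}\) the value \(1\), each lifts to \(1\cdot 20=20\); and since \(V_{L_2}^{\mathbb{Z}/5\mathbb{Z}}\cong V_{\mathbb{Z}\mu}\) gives each \(V_{\mathbb{Z}\mu\pm\frac{t}{50}\mu}\) the value \(1\), each lifts to \(1\cdot 12=12\). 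For the nine type one modules, the identification \(\mathrm{qdim}_{V_{L_2}^{A_5}}(V_\chi)=\dim W_\chi\) is built into the labeling of Remark \ref{remarkKeyA5}, where the subscript of each \(T_n^\bullet\) was introduced precisely as this common integer; since the irreducible \(SL(2,5)\)-representations have dimensions \(1,2,2,3,3,4,4,5,6\), the entries of the first table read off directly.

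As a cross-check I verify Theorem \ref{Qd2}: one has \(\mathrm{glob}(V_{L_2}^{A_5})=|A_5|^{2}\mathrm{glob}(V_{L_2})=3600\cdot 2=7200\), while summing the squares of the tabulated quantum dimensions gives \(120+2\cdot 900+6\cdot 400+20\cdot 144=120+1800+2400+2880=7200\), confirming the classification count and the qdim values simultaneously. The main obstacle sits on the type one side: Theorem \ref{Qd1} applied to \(V_{L_2}\) and \(V_{L_2+\frac{\alpha}{2}}\) alone only supplies the single aggregate relation \(\sum_{\chi}\dim(W_\chi)\mathrm{qdim}(V_\chi)=120\), which does not by itself force termwise equality with \(\dim W_\chi\). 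The pointwise identity \(\mathrm{qdim}(V_\chi)=\dim W_\chi\) is the one nonroutine ingredient beyond Theorems \ref{Qd1} and \ref{Qd2}, and it rests on the full quantum Galois framework of References \cite{dong1997} and \cite{2015arXiv150703306DOrbifold}.
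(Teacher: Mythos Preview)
Your proposal is correct and is a faithful expansion of the paper's one-line proof, which simply cites Remark~\ref{remarkKeyA5}, Theorem~\ref{Qd1}, and Theorem~\ref{Qd2}. Your transitivity argument through the tower $V_{L_2}^{A_5}\subset V_{L_2}^{H}\subset V_{L_2}$ (using the definition of $\qdim$ as a limit of character ratios) is a clean way to implement Theorem~\ref{Qd1} for the type two modules, and your global-dimension check via Theorem~\ref{Qd2} is a welcome consistency verification that the paper does not spell out; the pointwise identity $\qdim V_\chi=\dim W_\chi$ for the type one modules is exactly what Remark~\ref{remarkKeyA5} encodes, so your caveat there is already covered by the paper's reference to the quantum Galois framework.
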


\begin{proof}
    The quantum dimensions listed are obtained by Remark \ref{remarkKeyA5}, Theorem \ref{Qd1}, and Theorem \ref{Qd2}.
\end{proof}

\begin{remark}
    These classification of \(V_{L_2}^{A_5}\) is based on the \(C_2\) cofiniteness and on the rationality of \(V_{L_2}^{A_5}\).
    However, Theorem \ref{MT3} requires the rationality of \(V^{H}\) rather than the rationality of \(V^{G}\).
    For any proper subset \(H\) of \(A_5\), notice that \(V^{H}\) is rational.
    Therefore, even if \(V_{L_2}^{A_5}\) was not rational, the irreducible modules listed in this paper would still be correct irreducible modules, but might not exhaust all of them.
\end{remark}
\bibliographystyle{plain}
\bibliography{LiuyiZhangRef}

\begin{thebibliography}{10}

\bibitem{abe2000fusion}
Toshiyuki Abe.
\newblock Fusion rules for the free bosonic orbifold vertex operator algebra.
\newblock {\em Journal of Algebra}, 229(1):333--374, 2000.

\bibitem{abe2005rationality}
Toshiyuki Abe.
\newblock Rationality of the vertex operator algebra {$V_L^+$} for a positive
  definite even lattice {$L$}.
\newblock {\em Mathematische Zeitschrift}, 249(2):455--484, 2005.

\bibitem{abe2004rationality}
Toshiyuki Abe, Geoffrey Buhl, and Chongying Dong.
\newblock Rationality, regularity, and {$C_2$}-cofiniteness.
\newblock {\em Transactions of the American Mathematical Society},
  356(8):3391--3402, 2004.

\bibitem{borcherds1986vertex}
Richard~E Borcherds.
\newblock Vertex algebras, kac-moody algebras, and the monster.
\newblock {\em Proceedings of the National Academy of Sciences},
  83(10):3068--3071, 1986.

\bibitem{dong2011characterization2011arXiv1110.1882D}
C.~{Dong} and C.~{Jiang}.
\newblock {A characterization of vertex operator algebras
  {$V_{\mathbb{Z}\alpha}^{+}$}: I}.
\newblock {\em ArXiv e-prints}, October 2011.

\bibitem{2015arXiv150703306DOrbifold}
C.~{Dong}, L.~{Ren}, and F.~{Xu}.
\newblock {On Orbifold Theory}.
\newblock {\em ArXiv e-prints}, July 2015.

\bibitem{dong1998rank}
Chongying Dong and Robert~L Griess.
\newblock Rank one lattice type vertex operator algebras and their automorphism
  groups.
\newblock {\em Journal of Algebra}, 208(1):262--275, 1998.

\bibitem{dong2014rationality}
Chongying Dong and Jianzhi Han.
\newblock On rationality of vertex operator superalgebras.
\newblock {\em International Mathematics Research Notices},
  2014(16):4379--4399, 2014.

\bibitem{dong2010characterization}
Chongying Dong and Cuipo Jiang.
\newblock A characterization of vertex operator algebra
  {$L(\frac{1}{2},0)\otimes L(\frac{1}{2},0)$}.
\newblock {\em Communications in Mathematical Physics}, 296(1):69--88, 2010.

\bibitem{dong2013characterization}
Chongying Dong and Cuipo Jiang.
\newblock A characterization of the rational vertex operator algebra: {II}.
\newblock {\em Advances in Mathematics}, 247:41--70, 2013.

\bibitem{Dong201376}
Chongying Dong and Cuipo Jiang.
\newblock Representations of the vertex operator algebra {$V_{L_2}^{A_4}$}.
\newblock {\em Journal of Algebra}, 377:76 -- 96, 2013.

\bibitem{dong2014characterization}
Chongying Dong and Cuipo Jiang.
\newblock A characterization of the vertex operator algebra {$V_{L_2}^{A_4}$}.
\newblock In {\em Conformal Field Theory, Automorphic Forms and Related
  Topics}, pages 55--74. Springer, 2014.

\bibitem{Dong2015476}
Chongying Dong, Cuipo~(Cuibo) Jiang, Qifen Jiang, Xiangyu Jiao, and Nina Yu.
\newblock Fusion rules for the vertex operator algebra {$V_{L_2}^{A_4}$}.
\newblock {\em Journal of Algebra}, 423:476 -- 505, 2015.

\bibitem{dong2013quantum}
Chongying Dong, Xiangyu Jiao, and Feng Xu.
\newblock Quantum dimensions and quantum galois theory.
\newblock {\em Transactions of the American Mathematical Society},
  365(12):6441--6469, 2013.

\bibitem{dong1997regularity}
Chongying Dong, Haisheng Li, and Geoffrey Mason.
\newblock Regularity of rational vertex operator algebras.
\newblock {\em Advances in Mathematics}, 132(1):148--166, 1997.

\bibitem{dong1998twisted}
Chongying Dong, Haisheng Li, and Geoffrey Mason.
\newblock Twisted representations of vertex operator algebras.
\newblock {\em Mathematische Annalen}, 310(3):571--600, 1998.

\bibitem{dong2000modular}
Chongying Dong, Haisheng Li, and Geoffrey Mason.
\newblock Modular-invariance of trace functions in orbifold theory and
  generalized moonshine.
\newblock {\em Communications in Mathematical Physics}, 214(1):1--56, 2000.

\bibitem{dong2005elliptic}
Chongying Dong, Kefeng Liu, and Xiaonan Ma.
\newblock Elliptic genus and vertex operator algebras.
\newblock {\em Pure and Applied Mathematics Quarterly}, 1(4), 2005.

\bibitem{dong1997}
Chongying Dong and Geoffrey Mason.
\newblock On quantum galois theory.
\newblock {\em Duke Math. J.}, 86(2):305--321, 1997.

\bibitem{dong2004rational}
Chongying Dong and Geoffrey Mason.
\newblock Rational vertex operator algebras and the effective central charge.
\newblock {\em International Mathematics Research Notices},
  2004(56):2989--3008, 2004.

\bibitem{Dong1999384}
Chongying Dong and Kiyokazu Nagatomo.
\newblock {Classification of Irreducible Modules for the Vertex Operator
  Algebra} {$M(1)^+$}.
\newblock {\em Journal of Algebra}, 216(1):384 -- 404, 1999.

\bibitem{dong1999representations}
Chongying Dong and Kiyokazu Nagatomo.
\newblock Representations of vertex operator algebra {$V_{L}^{+}$} for rank one
  lattice {$L$}.
\newblock {\em Communications in mathematical physics}, 202(1):169--195, 1999.

\bibitem{dong2001classification}
Chongying Dong and Kiyokazu Nagatomo.
\newblock Classification of irreducible modules for the vertex operator algebra
  {$M(1)^+$}: {II}. higher rank.
\newblock {\em Journal of Algebra}, 240(1):289--325, 2001.

\bibitem{dong1994twisted}
CY~Dong.
\newblock Twisted modules for vertex algebras associated with even lattices.
\newblock {\em Journal of Algebra}, 165(1):91--112, 1994.

\bibitem{frenkel1993axiomatic}
Igor Frenkel, Yi-Zhi Huang, and James Lepowsky.
\newblock {\em On axiomatic approaches to vertex operator algebras and
  modules}, volume 494.
\newblock American Mathematical Soc., 1993.

\bibitem{frenkel1989vertex}
Igor Frenkel, James Lepowsky, and Arne Meurman.
\newblock {\em Vertex operator algebras and the Monster}, volume 134.
\newblock Academic press, 1989.

\bibitem{ginsparg1988curiosities}
P~Ginsparg.
\newblock Curiosities at {$c=1$}.
\newblock {\em Nuclear Physics B}, 295(2):153--170, 1988.

\bibitem{hanaki1999quantum}
Akihide Hanaki, Masahiko Miyamoto, Daisuke Tambara, et~al.
\newblock Quantum galois theory for finite groups.
\newblock {\em Duke mathematical journal}, 97(3):541--544, 1999.

\bibitem{MR1205350}
P.~N. Hoffman and J.~F. Humphreys.
\newblock {\em Projective representations of the symmetric groups}.
\newblock Oxford Mathematical Monographs. The Clarendon Press, Oxford
  University Press, New York, 1992.
\newblock $Q$-functions and shifted tableaux, Oxford Science Publications.

\bibitem{huang2008vertex}
Yi-Zhi Huang.
\newblock Vertex operator algebras and the verlinde conjecture.
\newblock {\em Communications in Contemporary Mathematics}, 10(01):103--154,
  2008.

\bibitem{kiritsis1989proof}
Elias~B Kiritsis.
\newblock Proof of the completeness of the classification of rational conformal
  theories with {$c=1$}.
\newblock {\em Physics Letters B}, 217(4):427--430, 1989.

\bibitem{miyamoto2011Flat&Semi-Rigidity}
M.~{Miyamoto}.
\newblock {Flatness and Semi-Rigidity of Vertex Operator Algebras}.
\newblock {\em ArXiv e-prints}, April 2011.

\bibitem{miyamoto1998representation}
Masahiko Miyamoto.
\newblock Representation theory of code vertex operator algebra.
\newblock {\em Journal of Algebra}, 201(1):115--150, 1998.

\bibitem{miyamoto2009flatness}
Masahiko Miyamoto.
\newblock Flatness of tensor products and semi-rigidity for {$C_2$}-cofinite
  vertex operator algebras i.
\newblock {\em arXiv preprint arXiv:0906.1407}, 2009.

\bibitem{miyamoto2015c_2OfCyclicCMP}
Masahiko Miyamoto.
\newblock {$C_2$}-cofiniteness of cyclic-orbifold models.
\newblock {\em Communications in Mathematical Physics}, 335(3):1279--1286,
  2015.

\bibitem{schur2001representation}
J~Schur.
\newblock On the representation of the symmetric and alternating groups by
  fractional linear substitutions.
\newblock {\em International Journal of Theoretical Physics}, 40(1):413--458,
  2001.

\bibitem{verlinde1988fusion}
Erik Verlinde.
\newblock Fusion rules and modular transformations in {2D} conformal field
  theory.
\newblock {\em Nuclear Physics B}, 300:360--376, 1988.

\bibitem{zbMATH05622792}
Robert~A. {Wilson}.
\newblock {\em {The finite simple groups.}}
\newblock London: Springer, 2009.

\bibitem{xu2000algebraic}
Feng Xu.
\newblock Algebraic orbifold conformal field theories.
\newblock {\em Proceedings of the National Academy of Sciences},
  97(26):14069--14073, 2000.

\bibitem{zhang2009w}
Wei Zhang and Chongying Dong.
\newblock {$W$}-algebra {$W(2, 2)$} and the vertex operator algebra
  {$L(\frac{1}{2},0)\otimes L(\frac{1}{2},0)$}.
\newblock {\em Communications in Mathematical Physics}, 285(3):991--1004, 2009.

\bibitem{zhu1996modular}
Yongchang Zhu.
\newblock Modular invariance of characters of vertex operator algebras.
\newblock {\em Journal of the American Mathematical Society}, 9(1):237--302,
  1996.

\end{thebibliography}
\end{document}